\newtheorem{prop}{Proposition}
\newtheorem{thm}{Theorem}
\newtheorem{lem}{Lemma}
\providecommand{\prt}[1]{\left( #1 \right)}
\providecommand{\e}{\phantom{aaa}}
\providecommand{\NDofthmprop}{\\}
\title{\LARGE \bf
Rigidity and persistence for ensuring shape maintenance of
multiagent meta formations (ext'd version)}
\author{Julien M. Hendrickx, Changbin Yu, Bar\i\c{s}
Fidan and Brian D.O. Anderson}
\thanks{J.\ M.\ Hendrickx is
with Department of Mathematical Engineering, Universit\'e
catholique de Louvain, Avenue Georges Lemaitre 4, B-1348
Louvain-la-Neuve, Belgium; {\tt\small
julien.hendrickx@uclouvain.be}. His work is supported by the
Belgian Programme on Interuniversity Attraction Poles initiated by
the Belgian Federal Science Policy Office, and the Concerted
Research Action (ARC) \quotes{Large Graphs and Networks} of the
French Community of Belgium. The scientific responsibility rests
with its authors.  Julien Hendrickx holds a FNRS fellowship
(Belgian Fund for Scientific Research)}\thanks{ C.\ Yu, B.\ Fidan
and B.\ Anderson are with Australian National University and
National ICT Australia, 216 Northbourne Ave, Canberra ACT 2601
Australia ; {\tt\small
brad.yu,baris.fidan,brian.anderson@anu.edu.au}. Their work is
supported by an Australian Research Council Discovery Project
Grant and by National ICT Australia, which is funded by the
Australian Government's Department of Communications, Information
Technology and the Arts and the Australian Research Council
through the Backing Australia's Ability Initiative.}
\begin{document}
\maketitle \thispagestyle{empty} \pagestyle{empty}%
\begin{abstract}
This paper treats the problem of the merging of formations, where
the underlying model of a formation is graphical. We first analyze
the rigidity and persistence of meta-formations, which are
formations obtained by connecting several rigid or persistent
formations. Persistence is a generalization to directed graphs of
the undirected notion of rigidity. In the context of moving
autonomous agent formations, persistence characterizes the
efficacy of a directed structure of unilateral distance
constraints seeking to preserve a formation shape. We derive then,
for agents evolving in a two- or three-dimensional space, the
conditions under which a set of persistent formations can be
merged into a persistent meta-formation, and give the minimal
number of interconnections needed for such a merging. We also give
conditions for a meta-formation obtained by merging several
persistent formations to be persistent.
\end{abstract}
\emph{Keywords: Formations, Meta-formations, Rigidity,
Persistence, Autonomous Agents}

\section{Introduction}\label{sec:intro}

Recently, significant interest has been shown on the behavior of
autonomous agent formations (groups of autonomous agents
interacting which each other)
\cite{BaillieulSuri:2003,DasFierroKumarOstrowski:2002,
ErenAndersonMorseWhitelyBelhumeur:2004_journal,
OlfatiMurray:2002,HendrickxAndersonDelvenneBlondel:2005}, and more
recently on {meta-formations, which is the name ascribed to an
interconnection of formations, generally with the individual
formations being separate}
\cite{AndersonYuFidanHendrickx:2006,WilliamsGlavaskiSamad:2004}.
By autonomous agent, we mean here any human-controlled or unmanned
vehicle {moving} by itself and {having} a local intelligence or
computing capacity, such as ground robots, air vehicles or
underwater vehicles. Many reasons such as obstacle avoidance  {and
dealing with a predator} can indeed lead a (meta-)formation to be
split into smaller formations which are later re-merged. Those
smaller formations need to be organized in such a way that they
can behave autonomously when the formation is split. Conversely,
some formations may need to be temporarily merged into a
meta-formation {to} accomplish a certain
task, this meta-formation being split afterwards.\\

{The particular property of formations and meta-formations which
we analyze here is \emph{persistence}}. This graph-theoretical
notion {which generalizes the notion of rigidity to directed
graphs} was introduced in
\cite{HendrickxAndersonDelvenneBlondel:2005} to analyze the
behavior of autonomous agent formations governed by unilateral
distance constraints: {Many applications require the shape of a
multi-agent formation to be preserved during a continuous move.}
For example, target localization by a group of unmanned airborne
vehicles (UAVs) using either angle of arrival data or time
difference of arrival information appears to be best achieved (in
the sense of minimizing localization error) when the UAVs are
located at the vertices of a regular polygon \cite{Dogancay:2007}.
Other examples of optimal placements for groups of moving sensors
can be found in \cite{MartinezBullo:2006}. This objective can be
achieved by explicitly keeping \emph{some} inter-agent distances
constant. In other words, some inter-agent distances are
explicitly maintained constant so that all the inter-agent
distances remain constant. The information structure arising from
such a system can be efficiently modelled by a graph, where agents
are abstracted by vertices and actively constrained inter-agent
distances by edges.\\

We assume here that those constraints are unilateral, i.e., that
the responsibility for maintaining a distance is not shared by the
two concerned agents but relies on only one of them. This
unilateral character can be a consequence of the technological
limitations of the autonomous agents. Some UAV's can for example
not efficiently sense objects that are behind them or have an
angular sensing range smaller than 360°
\cite{Everett:95,Borky:97,PachterHebert:2002}. Also, some of the
authors of this paper are working with agents in which optical
sensors have blind three dimensional cones. It can also be desired
to ease the trajectory control of the formation, as it allows
so-called leader-follower formations
\cite{BaillieulSuri:2003,TannerPapasKumar:2004,ErenAndersonMorseWhiteleyBelhumeur:2005}.
In such a formation, one agent (leader) is free of inter-agent
{distance} constraints and is only constrained by the desired
trajectory of the formation, and a second agent (first follower)
is responsible for only one distance constraint and can set the
relative orientation of the formation. The other agents have no
decision power and are forced by their distance constraints to
follow the two first agents.\\

This asymmetry is modelled {using} directed edges in the graph.
Intuitively, an information structure is persistent if, provided
that each agent is trying to satisfy all the distance constraints
for which it is responsible, it can do so, with all the
inter-agent distances then remaining constant, and as a result the
formation shape is preserved. A necessary but not sufficient
condition for persistence is \emph{rigidity}
\cite{HendrickxAndersonDelvenneBlondel:2005}, which intuitively
means that, provided that all the prescribed distance constraints
are satisfied during a continuous displacement, all the
inter-agent distances remain constant (These concepts of
persistence and rigidity are more formally reviewed in the next
section). {The above} notion of rigidity can also be applied to
{structural} frameworks where the vertices correspond to joints
and the edges to bars. The main difference between rigidity and
persistence is that rigidity assumes all the constraints to be
satisfied, as if they were enforced by an external {agency or
through} some mechanical properties, while persistence considers
each constraint to be the responsibility of a single agent. {As
explained in \cite{HendrickxAndersonDelvenneBlondel:2005},
persistence implies rigidity, but it also implies that the
responsibilities imposed on each agent are not inconsistent, for
there can indeed be situations where this is so, and they must be
avoided.} {Rigidity is thus} an undirected notion (not depending
on the edge directions), while persistence is a directed one.
{Both rigidity and persistence can be analyzed from a
graph-theoretical point of view, and it can be proved
\cite{TayWhiteley:85,HendrickxAndersonDelvenneBlondel:2005,
YuHendrickxFidanAndersonBlondel:2005} that if} a formation is
rigid (resp. persistent), then almost all formations represented
by the same graph are rigid
(resp. persistent).\\

{As stated in \cite{AndersonYuFidanHendrickx:2006}, the problem of
merging rigid formations into a rigid meta-formation has been
considered in a number of places. In
\cite{Tay:84,Moukarzel:96}, the rigidity of a multi-graph 
{{(a graph in which some vertices are abstractions of smaller
graphs)} is analyzed. {In two dimensions,} the vertices of a
multi-graph can be thought as two dimensional solid bodies at the
{boundary} of which some bars can be attached; two vertices are
then connected by an edge if the corresponding bodies are attached
to the same bar. {Of course, the idea extends obviously to three
dimensions.} Operational ways to merge two rigid formations into a
larger rigid formation can also be found in
\cite{ErenAndersonMorseWhitelyBelhumeur:2004_journal,
YuFidanAnderson:2006_ACC}}.\\

In this paper, we treat the problem of determining whether a given
meta-formation obtained by merging several persistent formations
is persistent. For this purpose, we first consider the above
mentioned problem of determining whether a meta-formation obtained
by merging rigid formations is rigid. We also analyze the
conditions under which a collection of persistent formations can
be merged into a persistent meta-formation. Conditions are then
given on the minimal number of additional links that are needed to
achieve such a merging. Note that throughout all the paper, we
always assume that the internal structure of the formations cannot
be modified. {Moreover, we use a convenient graph theoretical
formalism, abstracting agents by vertices and (unilateral)
distance constraints by (directed) edges.} \\

After reviewing some properties of rigidity and persistence of
graphs in Section \ref{sec:recall}, we examine in Section
\ref{sec:merge2D} the issues mentioned above for agents evolving
in a two-dimensional space. We show in Section \ref{sec:merge3D}
how our results can be generalized in a three-dimensional space,
and explain why this generalization can only partially be
achieved. Note that some proofs are omitted for three-dimensional
space when they are direct generalization of results on
two-dimensional space. The paper ends {with the
concluding remarks in Section \ref{sec:concl}.}\\

This paper is an extended version of
\cite{HendrickxYuFidanAnderson:2008_AJC} in which some proofs are
omitted for space reasons. Some preliminary results have also been
published in \cite{HendrickxYuFidanAnderson:2006_cdc} without
proofs, and are included here at a greater level of details.
Moreover, Propositions \ref{prop:3D_imposs_merge_second} and
\ref{prop:mergetwo3D} correct the unproven Proposition 5 in
\cite{HendrickxYuFidanAnderson:2006_cdc}, which did not take the
case described in Proposition \ref{prop:3D_imposs_merge_second}
into account.

\section{{Review of } Rigidity and Persistence}\label{sec:recall}

\subsection{Rigidity}

\begin{figure}
\centering
\begin{tabular}{ccc}
\includegraphics[scale = .4]{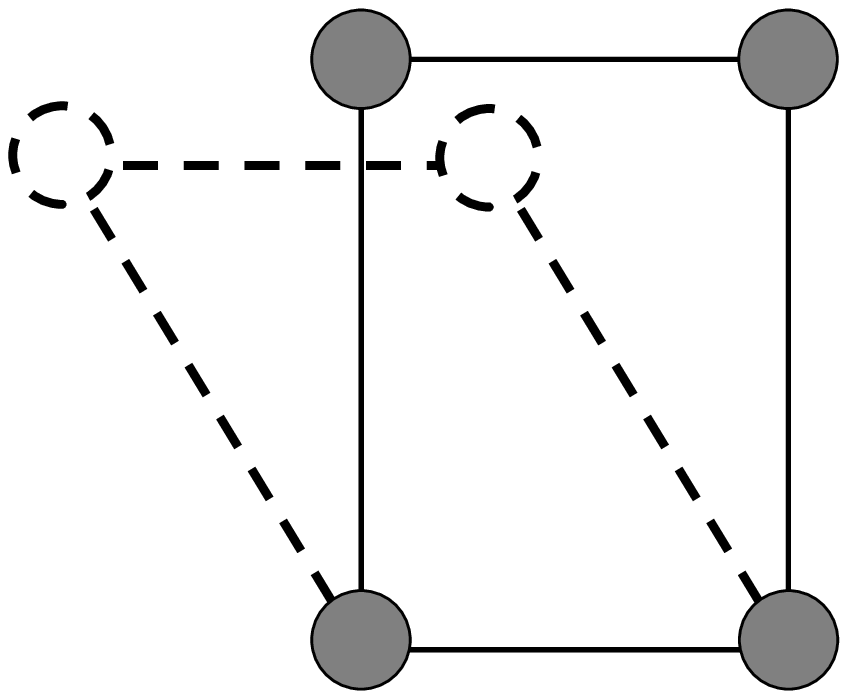}&
\includegraphics[scale = .4]{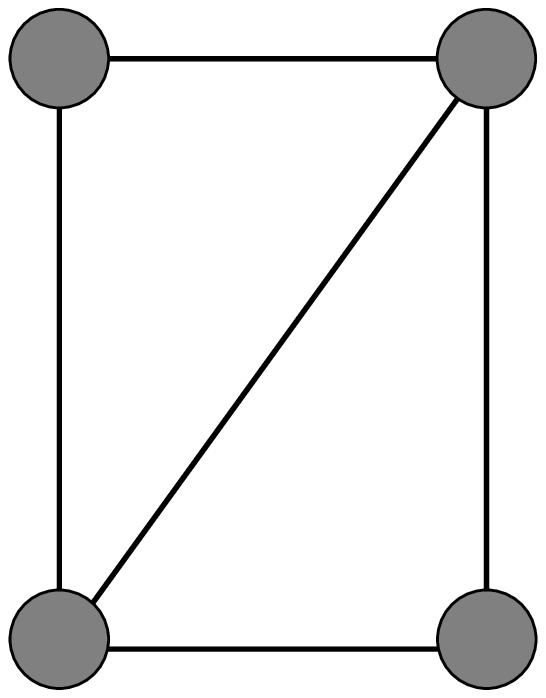}&
\includegraphics[scale = .25]{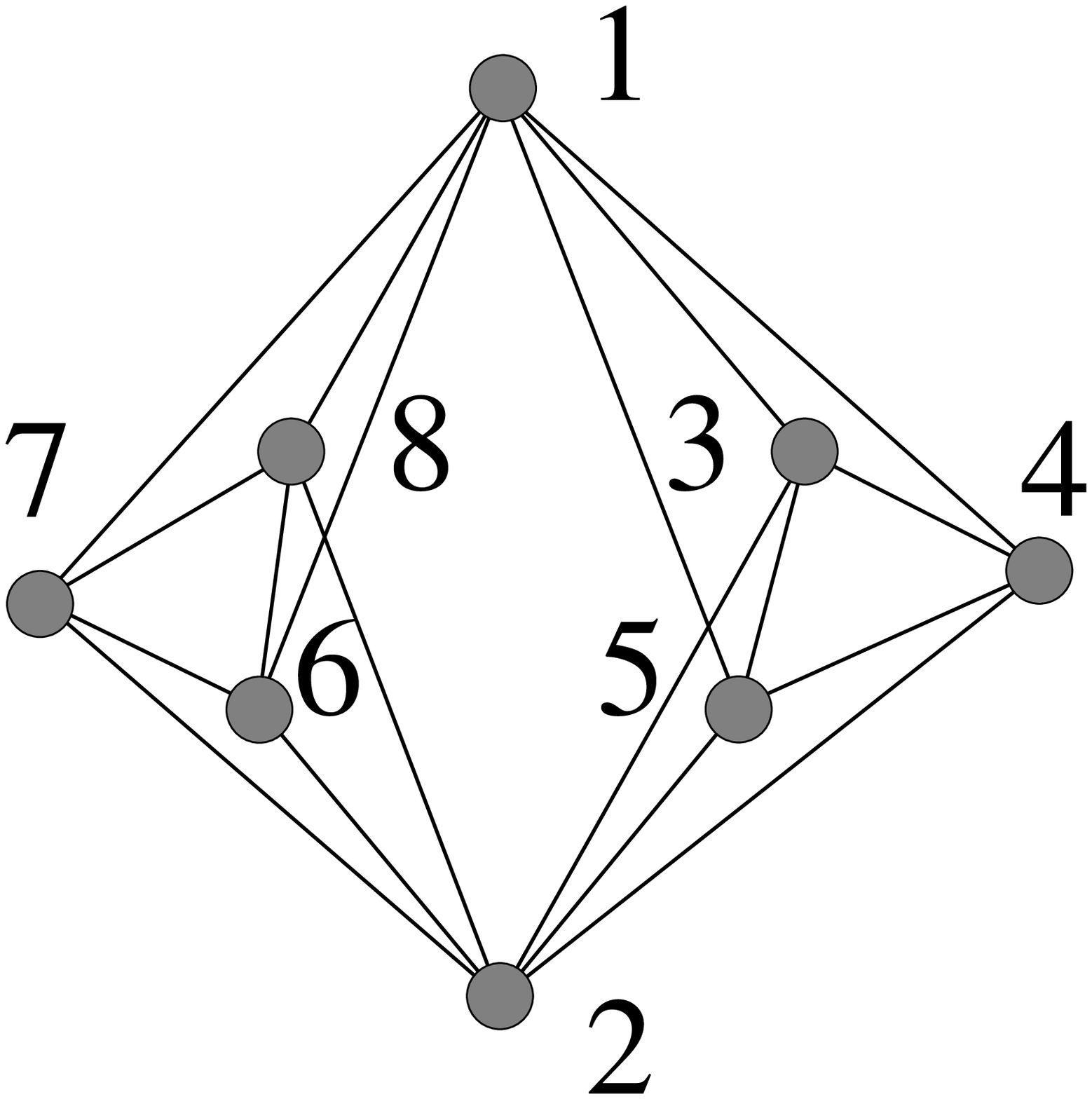}\\
(a)&(b)&(c)
\end{tabular}
\caption{In $\Re^2$, the graph represented in (a) is not rigid
because it can be deformed (dashed line), while the one in (b) is
rigid. The graph (c) satisfies the {first two conditions of
Theorem \ref{thm:Laman3D} but not the third one, and is therefore
not rigid in $\Re^3$:} the two parts of the graph can rotate
around the axis defined by 1 and 2.}\label{fig:rigidity}
\end{figure}

As explained in {Section \ref{sec:intro}}, the rigidity of a graph
has the following intuitive meaning: Suppose that each vertex
represents an agent in a formation, and each edge represents an
inter-agent distance constraint enforced by an external observer.
The graph is rigid if for almost every such structure, the only
possible continuous moves are those which preserve every
inter-agent distance, as shown in Fig. \ref{fig:rigidity}(a) and
(b). For a more formal definition, the reader is referred to
\cite{TayWhiteley:85,HendrickxAndersonDelvenneBlondel:2005}. In
$\Re^2$, that is, if the agents represented by the vertices of the
graph evolve {in two dimensions}, there exists a combinatorial
criterion to check if a given graph is rigid:
\begin{thm}[Laman \cite{Laman:70,Whiteley:96}]\label{thm:Laman}
A graph $G = (V,E)$, with $\abs{V}>1$, is rigid in $\Re^2$ if and
only if there is a sub-set $E' \subseteq E$ such that\\
(i) $\abs{E'}= 2\abs{V}-3$.\\
(ii) For all {non-empty} $E''\subseteq E'$ there holds\\
\e\e$\abs{E''}\leq 2 \abs{V(E'')}-3$,\\
where $V(E'')$ is the set of vertices incident to edges of
$E''$.\NDofthmprop
\end{thm}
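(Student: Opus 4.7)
The plan is to prove Laman's theorem in the classical way via the rigidity matrix. I would first introduce $R(G,p)$, the $\abs{E}\times 2\abs{V}$ matrix whose row indexed by edge $\{u,v\}$ contains $p(u)-p(v)$ in the two columns associated with $u$, its negation in the columns for $v$, and zeros elsewhere. Differentiating the length constraints along any smooth motion of the framework shows that $G$ is rigid at a generic realization $p$ if and only if $\mathrm{rank}\, R(G,p) = 2\abs{V}-3$, the deficit of three accounting for the three-dimensional space of infinitesimal rigid motions of $\Re^2$ (two translations and one rotation). A standard genericity argument shows that this rank depends only on $G$, reducing the theorem to a combinatorial characterization of graphs for which the generic rigidity matrix has maximum rank.

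For necessity, assume $G$ is rigid. Pick $2\abs{V}-3$ linearly independent rows of $R(G,p)$ for a generic $p$ and let $E'$ be the set of underlying edges. For any non-empty $E''\subseteq E'$, the corresponding rows remain linearly independent and only involve the $2\abs{V(E'')}$ columns indexed by $V(E'')$; but this submatrix still contains in its kernel the three-dimensional space of infinitesimal rigid motions of the induced sub-framework (automatic as soon as $\abs{V(E'')}\ge 2$, which holds since $E''$ is non-empty), forcing $\abs{E''}\le 2\abs{V(E'')}-3$.

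For sufficiency, I would use a Henneberg construction. Call a graph satisfying conditions (i) and (ii) a \emph{Laman graph}. A handshake count gives $\sum_v \deg(v) = 2\abs{E'} = 4\abs{V}-6$, and the sparsity bound rules out vertices of degree $0$ or $1$ (such a vertex would leave too many edges on the remaining $\abs{V}-1$ vertices, violating (ii)), so every Laman graph on more than two vertices contains a vertex $v$ of degree $2$ or $3$. If $\deg(v)=2$, delete it: the remainder is Laman, hence rigid by induction, and re-attaching $v$ with two generic distance constraints preserves generic rigidity (the Henneberg-I move). If the minimum degree is $3$ at $v$, Laman's combinatorial lemma supplies two neighbors $a,b$ of $v$ such that deleting $v$ and adding the edge $\{a,b\}$ again yields a Laman graph on one fewer vertex; by induction this is generically rigid, and a direct rank computation on $R(G,p)$ shows that the inverse operation preserves generic rigidity (the Henneberg-II move).

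The main obstacle is the combinatorial lemma underlying the Henneberg-II reduction: given a degree-$3$ vertex $v$ with neighbors $a,b,c$, one must show that at least one of the three candidate edges $\{a,b\}$, $\{a,c\}$, $\{b,c\}$ can be added to $G\setminus\{v\}$ without violating sparsity. The difficulty is that adding such an edge might create a tight subset of the new edge set, and one must argue by an exchange argument on the maximal sparse subgraphs containing each pair that the three potential obstructions cannot all occur simultaneously. Once this lemma is secured, the rigidity preservation of Henneberg-II is a routine generic-rank calculation, and Laman's theorem follows by induction on $\abs{V}$.
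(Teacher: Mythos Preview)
The paper does not prove this theorem. It is quoted as a classical result, attributed to Laman and Whiteley, and used as a black box throughout Section~\ref{sec:merge2D} (in particular in the proofs of Theorems~\ref{thm:metalaman2D} and \ref{thm:edge-optimal_rig2d}). So there is no in-paper proof to compare against.

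That said, your outline is the standard route to Laman's theorem and is structurally sound: linear-algebraic necessity via the rank of the rigidity matrix, and sufficiency by induction through Henneberg~I/II reductions. The handshake count forcing a vertex of degree at most~$3$, together with sparsity ruling out degrees~$0$ and~$1$, is correct. You have also correctly located the only genuinely non-routine step, the combinatorial lemma enabling the Henneberg~II reduction. The argument you allude to---if all three candidate edges $\{a,b\},\{a,c\},\{b,c\}$ were obstructed by tight subgraphs, a submodular-type count on unions and intersections of those tight sets would contradict the sparsity of $E'$---is indeed how this is handled, but it does require some care (in particular one must also handle the case where one of the candidate edges is already present in $G\setminus v$, since adding a parallel edge would immediately violate condition~(ii) on a two-vertex set). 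Once that lemma is in hand, the rank calculation showing that Henneberg~II preserves generic rigidity is, as you say, routine.
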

{Unfortunately, the analogous criterion in $\Re^3$ is only
necessary.
\begin{thm}\label{thm:Laman3D}
If a graph $G = (V,E)$, with $\abs{V}>2$, is rigid in $\Re^3$,
there exists $E' \subseteq E$ such that\\
(i) $\abs{E'}= 3\abs{V}-6$.\\
(ii) For all non-empty $E''\subseteq E'$, there holds\\
$\abs{E''}\leq 3 \abs{V(E'')}-6$, where $V(E'')$ is the set
of vertices incident to edges of $E''$.\\
(iii) The graph $G'(V,E')$ is 3-connected (i.e. remains connected
after removal of any pair of vertices).\\
\end{thm}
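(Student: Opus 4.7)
The plan is to combine the linear-algebraic rank formulation of infinitesimal rigidity with a connectivity argument. For a generic embedding of $G=(V,E)$ in $\Re^3$, rigidity is equivalent to the rigidity matrix $R(G)\in\Re^{\abs{E}\times 3\abs{V}}$ having rank $3\abs{V}-6$ -- the $3\abs{V}$ ambient coordinate degrees of freedom minus the $6$ dimensions of the Euclidean group $SE(3)$. Since $G$ is rigid, I would pick $E'\subseteq E$ to index a maximal linearly independent family of rows of $R(G)$; this gives $\abs{E'}=3\abs{V}-6$ directly, proving~(i), and ensures that $G'=(V,E')$ is itself minimally rigid in $\Re^3$.

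For~(ii), let $E''\subseteq E'$ be non-empty. Its associated rows in $R(G)$ are linearly independent (as a subset of an independent family) and have non-zero entries only in the columns corresponding to vertices of $V(E'')$. These rows can thus be viewed as rows of the rigidity matrix of the sub-framework on $V(E'')$, whose total rank is bounded by $3\abs{V(E'')}-6$ via the same $SE(3)$-dimension argument. Hence $\abs{E''}\leq 3\abs{V(E'')}-6$.

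For~(iii), I would proceed by contradiction. Suppose $G'$ admits a $2$-vertex cut $\{u,v\}$, so that $V\setminus\{u,v\}$ partitions into non-empty sets $A$ and $B$ with no edge of $E'$ crossing between them. Take a generic realization of $G'$ and construct a continuous deformation by keeping every vertex in $A\cup\{u,v\}$ fixed while rotating the vertices of $B$ about the axis through the positions $p_u$ and $p_v$. Rotations are isometries, so all edges internal to $B$ retain their lengths; the distances from rotated points of $B$ to $p_u$ and $p_v$ are preserved because $u$ and $v$ lie on the rotation axis; edges internal to $A\cup\{u,v\}$ are unchanged by construction; and there is no $A$--$B$ edge to check. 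Generically at least one vertex of $B$ lies off the $p_u p_v$-line, so this flex is non-trivial, contradicting rigidity of $G'$.

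The main obstacle is precisely this last step: one must verify that the rotation family yields an honest continuous motion of the framework distinct from any global rigid motion, which requires genericity to ensure $p_u\neq p_v$ and that $B$ contains a vertex off the $p_u p_v$-axis. This is exactly the phenomenon illustrated by the double-banana in Fig.~\ref{fig:rigidity}(c), which satisfies~(i) and~(ii) but not~(iii) and is not rigid in $\Re^3$ -- confirming both the necessity of the $3$-connectivity condition and the reason a pure counting criterion \`a la Laman cannot suffice in three dimensions.
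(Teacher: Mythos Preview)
The paper does not prove this theorem: it is stated as a background result in the review of rigidity, with (i) and (ii) taken from the standard literature and (iii) justified only by the informal remark that a $2$-cut would allow ``two or more parts of the graph [to] rotate around the axis defined by any pair of vertices whose removal would disconnect the graph.'' Your argument is correct and supplies the details the paper omits. Picking $E'$ as a row basis of the generic rigidity matrix is the standard route to a minimally rigid spanning subgraph satisfying (i); independence of the sub-rows, restricted to the columns of $V(E'')$, then gives (ii); and your explicit rotation of $B$ about the line through $p_u,p_v$ is precisely the flex the paper alludes to for (iii).

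One technicality worth flagging in step (ii): the $SE(3)$-kernel argument yields the bound $3\abs{V(E'')}-6$ only when $\abs{V(E'')}\geq 3$, since for two generic points the trivial infinitesimal motions span a $5$-dimensional rather than $6$-dimensional space. A single edge therefore violates the inequality as literally written ($1>3\cdot 2-6=0$). This is a wrinkle in the theorem's formulation rather than a flaw in your reasoning, and condition (ii) should be read for $\abs{V(E'')}\geq 3$.
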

Condition (iii), which also implies the 3-connectivity of $G$, is
not usually stated but is {independently} necessary even if the
two first conditions are satisfied. Fig. \ref{fig:rigidity}(c)
shows for example a non-rigid graph for which (i) and (ii) are
satisfied, but not (iii). Intuitively, the graph $G'$ in the
theorem needs to be sufficient to ensure \quotes{alone} the
rigidity of $G$. 3-connectivity is then needed as otherwise two or
more parts of the graph could rotate around the axis defined by
any pair of vertices whose removal would disconnect the graph.
Note that such connectivity condition is not necessary in
2-dimensional spaces, as the counting conditions (i) and (ii) of
Theorem \ref{thm:Laman} imply the 2-connectivity. For more
information on necessary conditions for rigidity in
three-dimensional spaces, we refer the
reader to \cite{MantlerSnoeyink:2004}}.\\

We say that a graph is \emph{minimally rigid} if it is rigid and
if no single edge can be removed without losing rigidity. {It
follows from the results above that} a graph is minimally rigid in
$\Re^2$ (resp. in $\Re^3$) if and only if it {is rigid and}
contains $2\abs{V}-3$ (resp. $3\abs{V}-6$) edges
\cite{TayWhiteley:85}. Therefore we have the following
characterization of minimal rigidity in $\Re^2$.
\begin{thm}[Laman \cite{Laman:70,Whiteley:96}]\label{thm:minLaman}
A graph $G = (V,E)$, with $\abs{V}>1$, is minimally rigid in
$\Re^2$ if and only if it is rigid and contains $2\abs{V}-3$
edges, or equivalently if and only if\\
(i) $\abs{E}= 2\abs{V}-3$.\\
(ii) For all {non-empty} $E''\subseteq E$ there holds\\
\e\e$\abs{E''}\leq 2 \abs{V(E'')}-3$, where $V(E'')$ is the set of
vertices incident to edges of $E''$.\NDofthmprop
\end{thm}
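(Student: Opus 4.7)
The plan is to derive both equivalences directly from Theorem \ref{thm:Laman} by a counting argument, exploiting the fact that any subset $E'$ supplied by Laman's theorem is itself a spanning edge set whose two Laman conditions are inherited verbatim when it is viewed as the full edge set of the subgraph $(V,E')$. In particular, $(V,E')$ is already rigid by a second application of Theorem \ref{thm:Laman}, and has exactly $2|V|-3$ edges.

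First I would prove the equivalence \emph{minimally rigid} $\Leftrightarrow$ \emph{rigid and $|E|=2|V|-3$}. If $G$ is minimally rigid, pick $E'\subseteq E$ given by Theorem \ref{thm:Laman}; the subgraph $(V,E')$ satisfies conditions (i) and (ii) with $E'$ playing the role of the full edge set, so it is rigid. Hence every edge in $E\setminus E'$ may be deleted without losing rigidity, and minimality forces $E=E'$, giving $|E|=2|V|-3$. Conversely, if $G$ is rigid and $|E|=2|V|-3$, then any edge removal leaves a graph with only $2|V|-4$ edges, which cannot contain a subset of size $2|V|-3$ and therefore cannot be rigid by Theorem \ref{thm:Laman}.

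Next I would establish the equivalence \emph{rigid and $|E|=2|V|-3$} $\Leftrightarrow$ \emph{(i) and (ii)}. Assume $G$ is rigid with $|E|=2|V|-3$; take $E'\subseteq E$ as in Theorem \ref{thm:Laman}. Since $|E'|=2|V|-3=|E|$, we must have $E'=E$, and then (ii) of Laman's theorem applied to $E'$ is exactly condition (ii) of the present statement, while (i) holds by hypothesis. Conversely, if (i) and (ii) hold, then choosing $E'=E$ in Theorem \ref{thm:Laman} shows at once that $G$ is rigid, and (i) provides the edge count.

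There is no serious obstacle, since the argument is essentially bookkeeping on top of Theorem \ref{thm:Laman}; the only point demanding a little care is the converse direction of the first equivalence, where one must observe that the hereditary bound (ii) \emph{globally} obstructs the existence of a rigid spanning subgraph on fewer than $2|V|-3$ edges, so that removing any edge of $G$ indeed kills rigidity. With that remark in place, both equivalences are immediate.
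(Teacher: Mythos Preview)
Your proposal is correct. The paper does not give an explicit proof of this theorem---it is cited as a classical result of Laman and Whiteley---but the surrounding text indicates precisely your route: the preceding paragraph notes that minimal rigidity is equivalent to rigidity together with the edge count $2|V|-3$ (citing \cite{TayWhiteley:85}), and then states that the full characterization ``therefore'' follows, i.e.\ by exactly the bookkeeping reduction to Theorem~\ref{thm:Laman} that you carry out.
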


The notion of rigidity can also be described from a linear
algebraic point of view, using the so-called rigidity matrix.
Suppose that a position $p_i \in \Re^d$ (with $d=2,3$) is given to
each vertex $i$ of a graph $G=(V,E)$, and let $p\in
\Re^{d\abs{V}}$ be the juxtapositions of all positions. For each
vertex, consider now an infinitesimal displacement $\delta p_i$,
and let $\delta p$ be a vector obtained by juxtaposing these
displacements. {Since with infinitesimal displacements one can
neglect higher order terms,} the distance between the positions of
two vertices $i$ and $j$ is preserved by the set of infinitesimal
displacements if
\begin{equation}\label{eq:linconstr}
\prt{p_i-p_j}^T\prt{\delta p_i-\delta p_j}=0.
\end{equation}
Hence, if each edge represents a distance constraint, a set of
infinitesimal displacements is allowed if and only if
(\ref{eq:linconstr}) is satisfied for any edge $(i,j)\in E$. This
set of linear constraints can be conveniently re-expressed in a
condensed form as $R_G \delta p = 0$ where
$R_G\in\Re^{\abs{E}\times d\abs{V}}$ is the rigidity matrix, which
contains one row for each edge and $d$ columns for each vertex. In
the row corresponding to the edge $(i,j)$, the $d(i-1)+1^{st}$ to
$di^{th}$ columns are $(p_i-p_j)^T$, the $d(j-1)+1^{st}$ to
$dj^{th}$ columns are $(p_j-p_i)^T$, and all other columns are 0.
A graph $G$ is rigid if for almost all position assignment its
rigidity matrix has a rank $d\abs{V} - f(d,\abs{V})$, {where
$f(d,\abs{V})$ is the number of degrees of freedom in a
$d-$dimensional space of a $\min(\abs{V}-1,d)$-dimensional rigid
body {(Observe that $\min(\abs{V}-1,d)$ is the largest possible
dimension of a graph on $\abs{V}$ vertices embedded in a
$d$-dimension space).} In a 2-dimensional space, a single point
has two DOFs $f(2,1)=2$, and any one or two-dimensional body has
three DOFs. In a three-dimensional space, a single point has three
DOFs, a one-dimensional object has five
DOFs, and any other object has six DOFs.}\\
A subgraph $G'(V',E')\subseteq G(V,E)$ is rigid if the restriction
$R_{G'}$ of $R_G$ to the rows and columns corresponding to $E'$
and $V'$ has a rank $d\abs{V'} - f(d,\abs{V'})$. Note that the
rank $d\abs{V'} - f(d,\abs{V'})$ is the maximal that can be
attained by a rigidity (sub-)matrix. In a minimally rigid
(sub-)graph, this rank is attained with a minimal number of edges
and all rows of the rigidity matrix are thus linearly independent.
For more information on the rigidity matrix, we refer the reader
to \cite{TayWhiteley:85}.\\

\subsection{Persistence}

Consider now that the constraints are not enforced by an external
{entity, but that each constraint is the responsibility of one
agent to enforce.} To each agent, one assigns a (possibly empty)
set of unilateral distance constraints represented by directed
edges: the notation $\overrightarrow{(i,j)}$ for a directed edge
connotes that the agent $i$ has to maintain its distance to $j$
constant during any continuous move. As explained in the
Introduction, the \emph{persistence} of the directed graph means
that provided that each agent is trying to satisfy its
constraints, the distance between any pair of connected or
non-connected agents is maintained constant during any continuous
move, and as a consequence the shape of the formation is
preserved. {Note though that the {assignments} given to an agent
may be impossible to fulfill, in which case persistence is not
achieved.} An example of a persistent and a non-persistent graph
having the same underlying undirected graph {is} shown in Fig.
\ref{fig:persistence}. For a more formal definition of
persistence, the reader is referred to
\cite{HendrickxAndersonDelvenneBlondel:2005,YuHendrickxFidanAndersonBlondel:2005},
where are also proved the rigidity of all persistent graphs and
the following criterion to check persistence:
\begin{thm}\label{thm:crit_persi}
A graph $G$ is persistent in $\Re^2$ (resp. $\Re^3$) if and only
if every subgraph obtained from $G$ by removing edges leaving
vertices {whose} out-degree is greater than 2 (resp. 3) until no
such
vertex is present anymore in the graph is rigid.\\
\end{thm}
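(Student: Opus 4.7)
The plan is to work in the language of the rigidity matrix $R_G$ at a generic placement $p \in \Re^{d|V|}$ (with $d=2,3$) and to interpret persistence via per-vertex constraint satisfaction. The core observation is that each agent $i$ has exactly $d$ local degrees of freedom to offer, while its outgoing edges impose $k_i = \mathrm{out\text{-}deg}(i)$ distance constraints: if $k_i \leq d$, all these constraints can be enforced locally; if $k_i > d$, only $d$ can be enforced actively and the remaining $k_i - d$ must be automatically satisfied through the motion of the rest of the formation. A reduction $G'$ as defined in the theorem corresponds precisely to selecting, at each over-determined vertex, which $d$ constraints are enforced actively.

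For the sufficiency direction, assume every reduction $G'$ of $G$ is rigid. Take any infinitesimal motion $\delta p$ in which each agent satisfies all its outgoing constraints, i.e.\ $R_G \delta p = 0$. Pick any reduction $G'$; since $G' \subseteq G$, we have $R_{G'} \delta p = 0$, and rigidity of $G'$ forces $\delta p$ to be a trivial (isometric) motion, so every pairwise distance is preserved. Furthermore, the assignments can always be fulfilled: given any choice of active constraints (any $G'$), rigidity of $G'$ implies the motion of the formation is determined up to triviality, and in particular the passive constraints are automatically satisfied, so each agent can indeed discharge its responsibilities. Hence $G$ is persistent.

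For the necessity direction, suppose for contradiction that some reduction $G'$ is not rigid. Then at generic $p$ there exists a non-trivial $\delta p \in \ker R_{G'}$ that changes some pairwise distance. At each over-determined vertex $i$, the $d$ active constraints of $G'$ are satisfied by $\delta p$; the $k_i - d$ passive constraints, however, correspond to rows of $R_G$ outside $R_{G'}$. Either those rows also vanish on $\delta p$ (so $R_G \delta p = 0$ with $\delta p$ non-trivial, violating the rigidity that persistence implies), or some passive constraint is violated (so under the neighbor motion dictated by $\delta p$, agent $i$ has used its $d$ DOFs on the active constraints but still fails to preserve at least one of its passive distances, i.e.\ the assignment is unfulfillable). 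Either branch contradicts persistence. The main technical obstacle lies in this case analysis: one must argue via a genericity argument that any $d$ outgoing edges at vertex $i$ give linearly independent vectors $(p_i - p_j)^T$, so the choice of which $d$ to retain does not matter at generic $p$, and any non-rigid reduction truly signals either non-rigidity of $G$ or inconsistency of the agents' responsibilities.
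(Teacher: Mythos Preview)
The paper does not prove this theorem; it is quoted from \cite{HendrickxAndersonDelvenneBlondel:2005,YuHendrickxFidanAndersonBlondel:2005} as an established characterization, so there is no in-paper argument to compare your sketch against. Your attempt must therefore stand on its own, and while the overall strategy (reductions $G'$ correspond to choices of active constraints at over-determined vertices) is the right one, the argument has real gaps because you never work from a precise definition of persistence.

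In the cited references, a representation $p$ is persistent when every \emph{fitting} representation $p'$ near $p$ is congruent to $p$, where ``fitting'' means each agent cannot strictly reduce its own constraint violations. Your sufficiency paragraph opens with ``take any $\delta p$ with $R_G\delta p=0$'', but that hypothesis is rigidity, not persistence, and the paragraph as written only shows $G$ is rigid. The correct start is: take a fitting $p'$; the key lemma (which you defer) says that at generic $p$ a fitting agent with out-degree $k_i>d$ satisfies exactly $d$ of its constraints, so $p'$ satisfies all constraints of \emph{some} reduction $G'$; rigidity of that $G'$ then forces $p'\cong p$. Your sentence about ``passive constraints are automatically satisfied'' is a consequence of this, not an independent step.

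In the necessity direction your Case~2 draws the wrong conclusion. A flex $\delta p$ of $G'$ that violates a passive edge does not make any assignment ``unfulfillable''; an agent with $k_i>d$ generically \emph{cannot} satisfy all $k_i$ constraints anyway, so that is not a contradiction. What the flex gives you is that $p+\varepsilon\delta p$ is a \emph{fitting} representation (each over-determined agent satisfies $d$ independent active constraints and hence cannot do better) which is not congruent to $p$: that is the contradiction with persistence. The genericity fact you flag at the end---that any $d$ of the vectors $(p_i-p_j)$ are independent, so satisfying $d$ active constraints exhausts the agent's freedom---is not a side obstacle but the hinge on which both directions turn; establishing it, together with the statement that fitting $\Leftrightarrow$ exactly $\min(d,k_i)$ constraints satisfied near a generic $p$, is precisely the technical content of the proofs in the cited papers.
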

A key result in the proof of Theorem \ref{thm:crit_persi}
\cite{HendrickxAndersonDelvenneBlondel:2005,YuHendrickxFidanAndersonBlondel:2005}
is the following:
\begin{prop}\label{prop:d+}
A persistent graph $\Re^2$ (resp. $\Re^3$) remains persistent
after removal of an edge leaving a vertex {whose} out-degree is
larger than 2 (resp. 3).\NDofthmprop
\end{prop}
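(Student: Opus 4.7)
The plan is to reduce to the criterion provided by Theorem \ref{thm:crit_persi}. Writing $d\in\{2,3\}$ for the dimension, let $G$ be persistent, let $v$ have out-degree strictly greater than $d$, let $e=\overrightarrow{(v,w)}$ be the edge to be removed, and write $G'=G\setminus\{e\}$. First I would recall that, by Theorem \ref{thm:crit_persi}, proving persistence of $G'$ amounts to showing rigidity of every graph $H'$ produced from $G'$ by the reduction process of iteratively deleting edges leaving vertices of out-degree greater than $d$ until none remain.

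The main idea will then be to realize every such $H'$ as a terminal subgraph of a \emph{legal} reduction sequence applied to $G$ itself, so that the persistence of $G$ forces the rigidity of $H'$ via the same theorem. Concretely, given any reduction sequence $\sigma$ on $G'$ that outputs $H'$, I would prepend the deletion of $e$, obtaining a candidate sequence on $G$. Three things must then be verified: that the initial deletion of $e$ is legal from $G$; that each subsequent step of $\sigma$ remains legal in the modified process; and that the whole sequence terminates at $H'$. The first holds because the out-degree of $v$ in $G$ exceeds $d$ by hypothesis. The second is immediate after noting that the working graph right after removing $e$ coincides with $G'$, so the two processes thereafter run in lockstep. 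The third follows because both processes terminate under the identical condition that no vertex of out-degree greater than $d$ remains.

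I do not anticipate any real obstacle. The only point meriting attention, and which I would state explicitly, is a monotonicity remark: prepending an extra deletion to a legal reduction sequence cannot spoil the legality of later steps, since an earlier edge removal can only decrease out-degrees and so leaves the vertex targeted at each subsequent step at least as over-constrained as it was in the original $\sigma$. The argument is uniform in dimension: the same proof yields the $\Re^3$ case simply by replacing the threshold $2$ by $3$ throughout.
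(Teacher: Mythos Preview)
The paper does not give its own proof of this proposition; it is imported from \cite{HendrickxAndersonDelvenneBlondel:2005,YuHendrickxFidanAndersonBlondel:2005} and, crucially, is introduced with the sentence ``A key result in the proof of Theorem~\ref{thm:crit_persi} \ldots\ is the following.'' In other words, in the logical development underlying this paper, Proposition~\ref{prop:d+} is established first (directly from the definition of persistence) and is then used to prove Theorem~\ref{thm:crit_persi}.

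Your argument runs the dependency the other way: you invoke Theorem~\ref{thm:crit_persi} to conclude that $G'$ is persistent. Each individual step you describe is correct --- in particular, the observation that any terminal subgraph $H'$ of the reduction process on $G'$ is also a terminal subgraph of the reduction process on $G$ (simply prepend the legal deletion of $e$) is exactly right, and the termination and legality checks are fine. But as a proof of Proposition~\ref{prop:d+} this is circular: Theorem~\ref{thm:crit_persi} is not available to you until Proposition~\ref{prop:d+} has been secured. What you have really shown is that Proposition~\ref{prop:d+} is a trivial corollary of Theorem~\ref{thm:crit_persi}, which is true but is the wrong direction for the foundations. An independent proof has to argue from the definition of persistence itself --- roughly, that an agent already responsible for more than $d$ distance constraints in $\Re^d$ is generically over-determined, so dropping one of its outgoing edges cannot create any new fitting positions nor destroy constraint consistence for the remaining agents.
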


\begin{figure}
\centering
\begin{tabular}{cc}
\includegraphics[scale = .4]{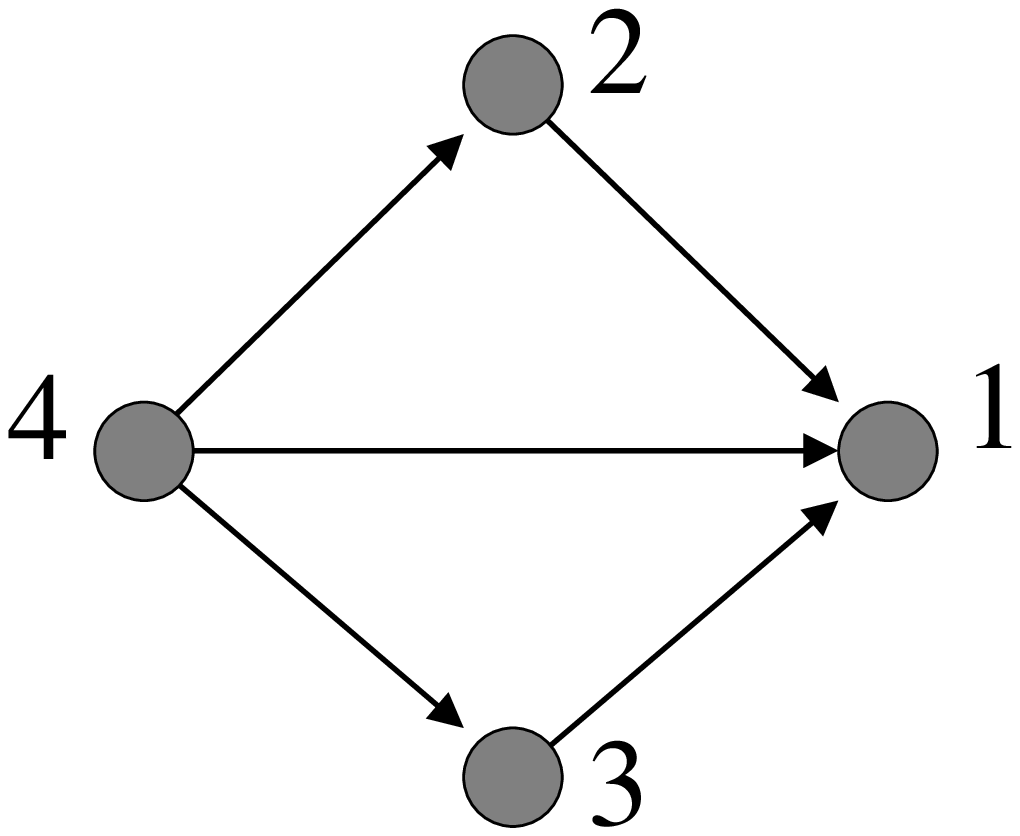}&
\includegraphics[scale = .4]{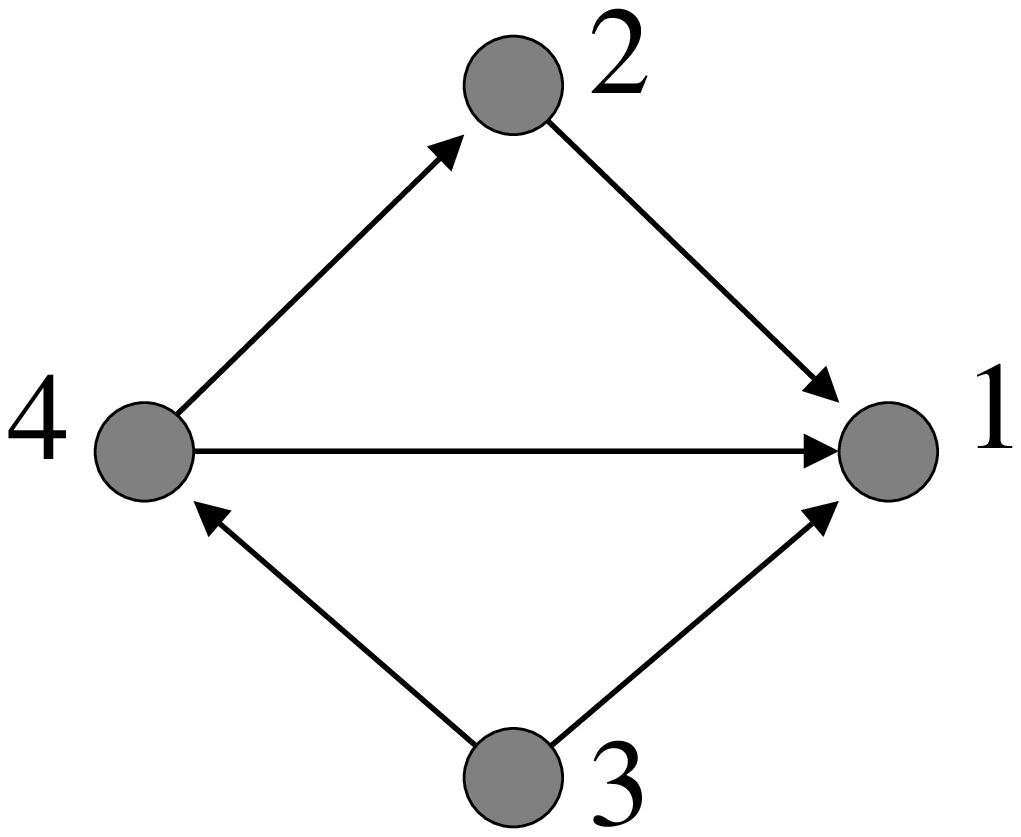}\\
(a)&(b)
\end{tabular}
\caption{In $\Re^2$, the graph represented in (a) is rigid but not
persistent. For almost all uncoordinated displacements of 2, 3 and
4 (even if they satisfy their constraints), 4 is indeed unable to
satisfy its three constraints. This problem cannot happen for the
graph represented in (b), which is persistent.
}\label{fig:persistence}
\end{figure}

We {use the term} \emph{number of degrees of freedom} of a vertex
$i$ {to denote} the (generic) dimension of the set in which the
corresponding agent can choose its position (all the other agents
being fixed). {Thus it} represents in some sense the decision
power of this agent. In a three-dimensional space, an agent being
responsible for one distance constraint can for example freely
move on the surface of a sphere centered on the agent from which
the distance needs to be maintained, and has thus two degrees of
freedom. The number of degrees of freedom of a vertex $i$ in
$\Re^2$ (resp. $\Re^3$) is given by $\max\prt{0,2-d^+(i)}$ (resp.
$\max\prt{0,3-d^+(i)}$), where $d^+(i)$ represent the out-degree
of the vertex $i$. A vertex having a maximal number of degrees of
freedom (i.e. an out-degree 0) is called a \emph{leader} since the
corresponding agent does not have any distance constraint to
satisfy. {We call \emph{the number of degrees of freedom of a
graph} the sum of the numbers of degrees of freedom over all its
vertices. It is proved in
\cite{HendrickxAndersonDelvenneBlondel:2005,YuHendrickxFidanAndersonBlondel:2005}
that this quantity cannot exceed 3 in $\Re^2$ and 6 in $\Re^3$.}
Note that those numbers correspond to the number of {independent}
translations and rotations in $\Re^2$ and $\Re^3$. {In the sequel
we abbreviate degree of freedom by
DOF}.\\

As explained in \cite{YuHendrickxFidanAndersonBlondel:2005},
although the {concept} of persistence {is applicable} in three and
{larger} dimensions, it is not sufficient to imply the desired
stability of the formation shape. {For the shape stability, t}he
graph corresponding to a three-dimensional formation needs {in
addition} to be \emph{structurally persistent}. In $\Re^3$, a
graph is structurally persistent if and only if it is persistent
and contains at most one leader{, i.e. at most one vertex with no
outgoing edge}. In $\Re^2$,
{persistence and structural persistence} are equivalent.\\

Similarly to minimal rigidity, we say that a graph is
\emph{minimally persistent} if it is persistent and if no single
edge can be removed without losing persistence. It is proved in
\cite{HendrickxAndersonDelvenneBlondel:2005,YuHendrickxFidanAndersonBlondel:2005}
that a graph is minimally persistent if and only if it is
persistent and minimally rigid. The number of edges of such a
graph is {thus uniquely determined by the number of its vertices}
{as it is the case for minimally rigid graphs}.

\section{Rigidity and Persistence of 2D {Meta-Formations}}\label{sec:merge2D}

\subsection{Rigidity}\label{sec:2DRig}

Consider a set $N$ of disjoint rigid (in $\Re^2$) graphs
$G_1,\dots,G_{\abs{N}}$ having at least two vertices each, and a
set $S$ of single-vertex graphs
$G_{\abs{N}+1},\dots,G_{\abs{N}+\abs{S}}$. In the sequel, those
graphs are called \emph{meta-vertices}, and it is assumed that no
modification can be made {on} their internal structure: no
internal edge or vertex can be added to or removed from a
meta-vertex. We define the merged graph $G$ by taking the union of
all the meta-vertices, and of some additional edges $E_M$ {each of
which has end-points belonging to different meta-vertices}.\\

The conditions under which the merging of two meta-vertices leads
to a rigid graph are detailed in \cite{YuFidanAnderson:2006_ACC}:
If both meta-vertices contain more than one vertex, the merged
graph is rigid if and only if $E_M$ contains at least three edges,
{the aggregate of} which are incident to at least two vertices of
each meta-vertex. This is actually a particular case of the
following result for an arbitrary number of graphs ({analogous to
a result in \cite{Moukarzel:96} which is obtained under the
assumption that no vertex of any meta-vertex is incident to more
than one edge of $E_M$)}:
\begin{thm}\label{thm:metalaman2D}
{{If it contains at least two vertices,} $G=\prt{\bigcup_{N,S}
G_i}\cup E_M$ (with $N$ and $S$ as defined at the beginning of
this section)} is rigid if and only if
there exists $E_M'\subseteq E_M$ such that\\
(i) $\abs{E_M'}= 3\abs{N}+2\abs{S}-3$.\\  %
(ii) For all {non-empty} $E_M''\subseteq E_M'$, there holds\\
\e\e$\abs{E_M''}\leq 3\abs{I(E_M'')}+2\abs{J(E_M'')}-3$,\\
{where $I(E_M'')$ is the set of meta-vertices such that there are
at least two vertices within the meta-vertex all incident to edges
of $E_M''$, and $J(E_M'')$ is the set of meta-vertices such that
there is precisely one vertex within the meta-vertex that is
incident to one or several edges of $E_M''$. Note that in each
case, there can be an arbitrary number of vertices in the
meta-vertex which are not incident on any edge of
$E_M''$.}\NDofthmprop
\end{thm}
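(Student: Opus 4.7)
The plan is to reduce the statement to the standard Laman criterion (Theorem \ref{thm:Laman}) applied to $G$ itself. Since replacing each $G_i$ ($i\in N$) by a minimally rigid spanning subgraph of it does not affect the rigidity of $G$ (both enforce the same rigid-body constraint on $V_i$), I would first assume without loss of generality that each such $G_i$ is itself minimally rigid, with edge set $E_i$ of cardinality $2|V_i|-3$ satisfying Laman's subset bound on $V_i$. The correspondence to be exhibited in both directions is then $E'\longleftrightarrow E_M':=E'\setminus\bigcup_{i\in N} E_i$ between Laman-certifying subsets $E'\subseteq E$ and candidate subsets $E_M'\subseteq E_M$ satisfying the meta-Laman conditions (i)--(ii).

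For the ``only if'' direction, if $G$ is rigid then Laman's theorem produces some Laman-certifying $E'\subseteq E$ of size $2|V|-3$; since $\bigcup_{i\in N} E_i$ is already Laman-independent (being the disjoint union of the minimally rigid cores), the augmentation property of the 2D generic rigidity matroid lets me choose $E'$ so as to contain $\bigcup_{i\in N} E_i$. A direct count gives $|E_M'|=3|N|+2|S|-3$, which is (i). To establish (ii), for any non-empty $E_M''\subseteq E_M'$ I would apply Laman's subset bound to $E'':=E_M''\cup\bigcup_{i\in I(E_M'')} E_i\subseteq E'$, whose vertex set has cardinality $\sum_{i\in I(E_M'')}|V_i|+|J(E_M'')|$; the substitution $|E_i|=2|V_i|-3$ cancels all body-sized terms and Laman's inequality collapses precisely onto $|E_M''|\leq 3|I(E_M'')|+2|J(E_M'')|-3$. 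Conversely, given $E_M'$ satisfying (i)--(ii), I set $E':=\bigcup_{i\in N} E_i\cup E_M'$; its cardinality is immediately $2|V|-3$ by (i), so only the Laman subset inequality on $E'$ remains to be verified.

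The main obstacle is precisely this last verification. Any non-empty $E''\subseteq E'$ decomposes uniquely as $E_M''\sqcup\bigsqcup_{i\in N} E_i''$, with $|E_i''|\leq 2|V(E_i'')|-3$ whenever $E_i''\neq\emptyset$ (by minimal rigidity of $G_i$) and $|E_M''|\leq 3|I(E_M'')|+2|J(E_M'')|-3$ whenever $E_M''\neq\emptyset$ (by hypothesis). Summing these bounds and comparing with $2|V(E'')|-3$ reduces, after introducing $a_i:=|V_i\cap V(E_M'')|$, $c_i:=|V(E_i'')|$ and the overlap $d_i:=|V(E_i'')\cap V(E_M'')|\leq\min(a_i,c_i)$, to a local inequality which must be checked in the six cases given by $c_i=0$ versus $c_i\geq 2$, combined with $a_i$ being $0$, $1$, or at least $2$. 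In each case the slack of three in Laman's formula for $G_i$ just absorbs the contribution of $3$, $2$, or $0$ to $3|I(E_M'')|+2|J(E_M'')|$ from the $i$-th meta-vertex; the tightest case is $c_i\geq 2$ with $a_i=1$, where $d_i\leq 1$ is exactly enough to close the inequality.
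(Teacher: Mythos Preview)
Your proof is correct and follows essentially the same route as the paper: pass to minimally rigid spanning subgraphs $G_i'$ of the meta-vertices, use independence in the 2D rigidity matroid to extend $\bigcup_i E_i'$ to a Laman basis $E'$ of $G$ (the paper packages this step as Lemma~\ref{lem:lem2metalaman} via the rigidity matrix), and then translate Laman's count on $E'$ into the meta-count on $E_M'$ through the identity $|E_i'|=2|V_i|-3$. The only cosmetic difference is that for the sufficiency direction the paper argues by contradiction---assuming some $E''\subseteq E'$ violates Laman and deducing that $E_M''=E''\cap E_M$ violates~(ii)---whereas you carry out the equivalent forward verification via your six-case local inequality; both arguments encode the same bookkeeping.
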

{To prove this theorem, we first need the following lemma, which
we shall prove the both for $\Re^2$ and $\Re^3$, intending to use
the $\Re^3$ result in the next section.}
\begin{lem}\label{lem:lem2metalaman}
Let $G(V,E)$ be a rigid graph (in $\Re^2$ or $\Re^3$), and
$G_1',\dots G_N'$ be minimally rigid subgraphs of $G$ having
distinct vertices. Then there exists a minimally rigid subgraph
$G'(V,E')$ of $G$ containing all vertices of $G$ and all subgraphs
$G_i$.
\end{lem}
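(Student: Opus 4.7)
The plan is to work entirely in terms of the rigidity matrix $R_G$ introduced in Section \ref{sec:recall}, showing that the edges of the subgraphs $G_1',\dots,G_N'$ can be completed to a set $E'$ whose rows form a basis of the row space of $R_G$. The argument would rest on three observations: (a) since each $G_i'$ is minimally rigid, the rows of $R_G$ indexed by $E(G_i')$ are linearly independent; (b) since the $G_i'$ are vertex-disjoint, rows associated with distinct $G_i'$ have nonzero entries in disjoint column blocks; and (c) since $G$ is rigid, $\mathrm{rank}(R_G)$ attains its maximum possible value $d|V| - f(d,|V|)$.

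The first step would be to combine (a) and (b) to show that the rows of $R_G$ indexed by $E_0 := \bigcup_{i} E(G_i')$ are linearly independent: any vanishing linear combination of these rows, when restricted to the $d|V(G_k')|$ columns associated with a single $V(G_k')$, reduces by the disjointness of column supports to a linear relation among the rows of $E(G_k')$ alone, which must be trivial by minimal rigidity of $G_k'$. The second step would be to extend $E_0$ by a standard matroid-style greedy procedure: enumerate the edges of $E \setminus E_0$, and add each one whose row is not yet in the span of the previously chosen rows, producing a maximal linearly independent set $E' \supseteq E_0$. By maximality, the rows indexed by $E'$ span the entire row space of $R_G$, so $|E'| = \mathrm{rank}(R_G) = d|V| - f(d,|V|)$.

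Finally, I would take $G' := (V, E')$. Its rigidity matrix is the submatrix of $R_G$ consisting of the rows indexed by $E'$, which by construction has rank equal to its number of rows, namely $d|V| - f(d,|V|)$. Hence $G'$ is rigid on all of $V$, and it achieves the minimum edge count required for rigidity on $|V|$ vertices ($2|V|-3$ in $\Re^2$, $3|V|-6$ in $\Re^3$), making it minimally rigid by the edge-count characterization recalled just before Theorem \ref{thm:minLaman}. By construction $G'$ contains every vertex of $V$ and every edge of each $G_i'$, as required. The main obstacle is the block-disjointness observation of step (b), which justifies the independence of the combined row set $E_0$; everything else is standard linear-algebraic bookkeeping.
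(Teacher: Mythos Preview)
Your proposal is correct and follows essentially the same approach as the paper's own proof: both arguments use the rigidity matrix $R_G$, establish linear independence of the rows indexed by $\bigcup_i E(G_i')$ via the disjoint-column-block observation stemming from vertex-disjointness, and then extend this independent set to a basis of the row space of $R_G$ using the fact that $\mathrm{rank}(R_G)=d|V|-f(d,|V|)$. Your write-up merely spells out the block-support argument and the greedy extension in slightly more detail than the paper does.
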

\begin{proof}
{For simplicity, let us first consider the 2-dimensional case.}
Consider the rigidity matrix $R_G$ of $G$. Since $G$ is rigid, it
has (for almost all positions) a rank $2\abs{V}-3$. Since each
$G_i'$ is minimally rigid, the restriction $R_{G_i'}$ of $R_G$ to
the rows and columns corresponding to the edges and vertices of
$G_i'$ has $2\abs{V_i}-3$ linearly independent rows (or is an
empty matrix if $\abs{V_i}=1$). Also, since the vertices of the
different $G_i'$ are distinct, there can be no dependence between
rows corresponding to edges of different subgraphs $G_i'$.
Therefore, all rows of $R_{\bigcup G_i'}$, corresponding to all
edges of $\bigcup G_i'$, are linearly independent. Since the rank
of $R_G$ is $2\abs{V}-3$, it is a standard result in linear
algebra that $R_{\bigcup G_i'}$ can be completed by the addition
of further rows of $R_G$ to obtain a subset of $2\abs{V}-3$
linearly independent rows of $R_G$. Letting $E'$ be the set of
edges corresponding to this set of rows, the graph $G'(V,E')$ is a
minimally rigid subgraph of $G$ containing all $G_i'$. This
completes the proof for the 2-dimensional case. {The proof for the
3-dimensional case is established following the same steps above,
but replacing $2\abs{V}-3$ by $3\abs{V}-6$ and adding a special
case for $\abs{V_i}=2$ in addition to the case where
$\abs{V_i}=1$.}\\
\end{proof}
We can now prove Theorem \ref{thm:metalaman2D}.
\begin{proof}
For every $G_i$, let $G_i'$ be a minimally rigid subgraph of $G_i$
on the same vertices (The existence of such subgraphs follows
directly from the definition of minimal rigidity, and they can be
obtained by successively removing edges from the initial graph).
Since they are minimally rigid, they contain $2\abs{V_i}-3$ edges
if $G_i\subseteq N$ and no edge if $G_i\subseteq S$.\\

We first suppose that there exists a set $E_M'$ as described in
the theorem and prove the rigidity of $G$, by proving the minimal
rigidity of one of its subgraph viz.,  $G' = (V,E') =
\prt{\bigcup_{N,S} G_i'}\cup E_M'$ which contains all its
vertices. The number of edges in $G'$ is
\begin{equation*}
\begin{array}{ll}
\abs{E'} &= \abs{E_M'} + \sum_{G_i\in N} \abs{E_i'}\\ &=
3\abs{N}+2\abs{S}-3 + \sum_{G_i\in N} (2\abs{V_i}-3)
\\&= 2\abs{V}-3,
\end{array}
\end{equation*}
since $\abs{V}=\abs{S } + \sum_{G_i \in N} \abs{V_i}$. To show
that $G'$ satisfies the second condition of Theorem
\ref{thm:Laman}, suppose that there exists a subset of edges
$E''\subset E'$ such that $\abs{E''}>2\abs{V(E'')}-3$, let $I$ be
the set of meta-vertices containing at least two vertices of
$V(E'')$ and $J$ the set of meta-vertices containing only one
vertex of $V(E'')$. Let now $E_M''=E_M\cap E''$ and for each $i$,
$V_i'' = V(E'') \cap V_i$ and $E_i'' = E'' \cap E_i''$. There
holds $V(E'') = \sum_{G_i\in I}\abs{V_i''} + \abs{J}$, and $E'' =
E_M'' + \sum_{G_i\in I}\abs{E_i''}$. Moreover, since each $G_i'$
is minimally rigid, it follows from Theorem \ref{thm:minLaman}
that $\abs{E_i''}\leq 2\abs{V_i''}-3$. We have then
\begin{equation*}
\begin{array}{ll}
\abs{E_M''} &= \abs{E''} - \sum_{G_i\in I}\abs{E_i''}\\&
>2\abs{V''}-3 - \sum_{G_i\in I}\prt{2\abs{V_i''}-3} \\&= 3\abs{I} +
2\abs{J}  -3,
\end{array}
\end{equation*}
so that this $E_M''\subseteq E_M'$ does not satisfy condition (ii)
in the theorem.\\

We now suppose that $G$ is rigid. It follows from Lemma
\ref{lem:lem2metalaman} that there is a minimally rigid subgraph
$G'(V,E')\subseteq G$ containing all $G_i'$. Let $E_M'=E'\cap
E_M$; we prove that $E_M'$ satisfies the condition of this
theorem. Since $G'$ is minimally rigid, there holds
$\abs{E'}=2\abs{V}-3$. Moreover, we have $\abs{E'}= \abs{E_M'} +
\sum_{i\in N} \abs{E_i'}$, and $\abs{V}  = \sum_{G_i\in N}
\abs{V_i'}+ \abs{S}$, so that
\begin{equation*}
\abs{E_M'} = 2\abs{V}-3 - \sum_{G_i\in N} (2\abs{V_i}-3) =
3\abs{N} + 2\abs{S}-3.
\end{equation*}
$E_M'$ contains thus the predicted number of edges. We suppose now
that there is a set $E_M''$ such that $\abs{E_M''}>
3\abs{I(E_M'')}+2\abs{J(E_M)}-3$ and show that this contradicts
the minimal rigidity of $G'$. Let us build $E''$ by taking the
union of $E_M''$ and all $E_i'$ for which $G_i\in I(E_M'')$. There
holds $\abs{V(E'')}= \abs{J(E_M'')} + \sum_{G_i\in
I(E_M'')}\abs{V_i}$. Therefore, we have
\begin{equation*}
\begin{array}{lll}
\abs{E''}& = & \abs{E_M''} + \sum_{i\in I(E_M'')}\abs{E_i'}\\& >&
3\abs{I(E_M'')}+2\abs{J(E_M'')}-3 \\&&+ \sum_{G_i\in
I(E_M'')}\prt{2\abs{V_i}-3}\\& =& 2V(E'') -3.
\end{array}
\end{equation*}
By Theorem \ref{thm:minLaman}, this contradicts the
minimal rigidity of $G'(V,E')$ as $E''\subseteq E'$.\\
\end{proof}

This criterion can be checked by a quadratic time algorithm (with
respect to the number of meta-vertices) {which would be a simple
adaptation of the pebble game algorithm that is used for rigid
graphs (see \cite{JacobsHendrickson:97})}, or even faster
\cite{Moukarzel:96}.\\

For a given collection of meta-vertices, we say that $G$ is an
\emph{{edge}-optimal rigid merging} if no single edge of $E_M$ can
be removed without losing rigidity. Notice that a single graph can
be an {edge}-optimal rigid merging with respect to a certain
collection of meta-vertices, and not with respect to another one,
as shown in Fig. \ref{fig:opt_and_notopt_rigmerging}. If all
meta-vertices are minimally rigid, then an {edge}-optimal rigid
merging is also a minimally rigid graph. From Theorem
\ref{thm:metalaman2D}, one can deduce the following
characterization of {edge}-optimal rigid merging.

\begin{thm}\label{thm:edge-optimal_rig2d}
{$G=\prt{\bigcup_{N,S} G_i}\cup E_M$ (with $N$ and $S$ as defined
at the beginning of this section)} {containing at least two
vertices} is an edge-optimal rigid merging if and only if it is
rigid and satisfies $\abs{E_M}= 3\abs{N}+2\abs{S}-3$. Moreover,
each rigid merging contains an edge-optimal rigid merging on the
same set of meta-vertices.
\end{thm}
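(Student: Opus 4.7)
The plan is to prove the theorem in three parts, all of which follow essentially by mechanically applying Theorem \ref{thm:metalaman2D} together with simple counting.

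First, I would establish the forward direction of the characterization: if $G$ is an edge-optimal rigid merging, then it is rigid (by definition) and $|E_M|=3|N|+2|S|-3$. Since $G$ is rigid, Theorem \ref{thm:metalaman2D} guarantees a subset $E_M'\subseteq E_M$ satisfying conditions (i) and (ii); in particular $|E_M'|=3|N|+2|S|-3$. Suppose for contradiction $E_M' \subsetneq E_M$ and pick any edge $e\in E_M\setminus E_M'$. Then $E_M'$ still lies inside $E_M\setminus\{e\}$ and still satisfies (i) and (ii) in the merged graph $G\setminus\{e\}$, so by the same theorem $G\setminus\{e\}$ is still rigid, contradicting edge-optimality. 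Hence $E_M=E_M'$.

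Next I would prove the reverse direction: if $G$ is rigid with $|E_M|=3|N|+2|S|-3$, then removing any $e\in E_M$ destroys rigidity. After removal the total number of inter-meta-vertex edges is $3|N|+2|S|-4$, which is strictly smaller than the cardinality required by condition (i) of Theorem \ref{thm:metalaman2D}; consequently no subset of the remaining inter-meta-vertex edges can satisfy condition (i), and so $G\setminus\{e\}$ is not rigid. This shows edge-optimality.

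Finally, for the \emph{moreover} statement, suppose $G=\left(\bigcup_{N,S} G_i\right)\cup E_M$ is rigid. By Theorem \ref{thm:metalaman2D} there exists $E_M'\subseteq E_M$ with $|E_M'|=3|N|+2|S|-3$ satisfying (ii). The graph $G^*=\left(\bigcup_{N,S} G_i\right)\cup E_M'$ then itself satisfies the hypotheses of Theorem \ref{thm:metalaman2D} with the subset $E_M'$ of its inter-meta-vertex edges witnessing (i) and (ii), so $G^*$ is rigid; combined with $|E_M'|=3|N|+2|S|-3$, the reverse direction already proved shows that $G^*$ is an edge-optimal rigid merging on the same set of meta-vertices and contained in $G$.

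I do not expect any real obstacle here, since every step is a direct appeal to Theorem \ref{thm:metalaman2D} together with the inequality $|E_M|-1<3|N|+2|S|-3$. The only mild care required is to keep straight that edge-optimality is defined with respect to removals from $E_M$ only (not from the internal edges of the meta-vertices), which is exactly what the counting in Theorem \ref{thm:metalaman2D} controls.
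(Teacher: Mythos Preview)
Your proposal is correct and follows essentially the same route as the paper: both arguments amount to repeated invocations of Theorem~\ref{thm:metalaman2D}, using the necessity direction to force $|E_M|\ge 3|N|+2|S|-3$ for any rigid merging and the sufficiency direction to show that the subset $E_M'$ from that theorem already yields a rigid (hence edge-optimal) sub-merging. The only difference is cosmetic ordering---you prove the two implications of the characterization first and then the ``moreover'' clause, whereas the paper effectively establishes the ``moreover'' clause first and reads off the characterization from it.
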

\begin{proof}
{Observe first that Theorem \ref{thm:metalaman2D} requires a rigid
merged graph $G$ to satisfy $E_M\geq 3\abs{N}+2\abs{S}-3$.
Therefore a rigid merged graph for which $E_M=
3\abs{N}+2\abs{S}-3$ is an edge-optimal merging. Let now $G$ be a
rigid merged graph. By Theorem \ref{thm:metalaman2D} there exists
$E_M'\subset E_M$ with $E_M'= 3\abs{N}+2\abs{S}-3$ satisfying
condition (ii) of this same theorem. One can see, again using
Theorem \ref{thm:metalaman2D}, that $G'=\prt{\bigcup_{N,S}
G_i}\cup E_M'$ is rigid, as the set $E_M'$ trivially contains
itself and satisfies both conditions (i) and (ii). It follows then
from the size of $E_M'$ and from the discussion above that $G'$ is
an edge-optimal rigid merging. We have thus proved that any rigid
merged graph $G$ contains an edge-optimal rigid merged graph $G'$
on the same meta-vertices satisfying $E_M'= 3\abs{N}+2\abs{S}-3$.
Therefore it cannot contain less than $3\abs{N}+2\abs{S}-3$ edges,
and if it contains more of them, it is not edge-optimal. It is
thus edge-optimal if and only if $E_M= 3\abs{N}+2\abs{S}-3$. }
\end{proof}

\begin{figure}
\centering
\begin{tabular}{cc}
\includegraphics[scale = .3]{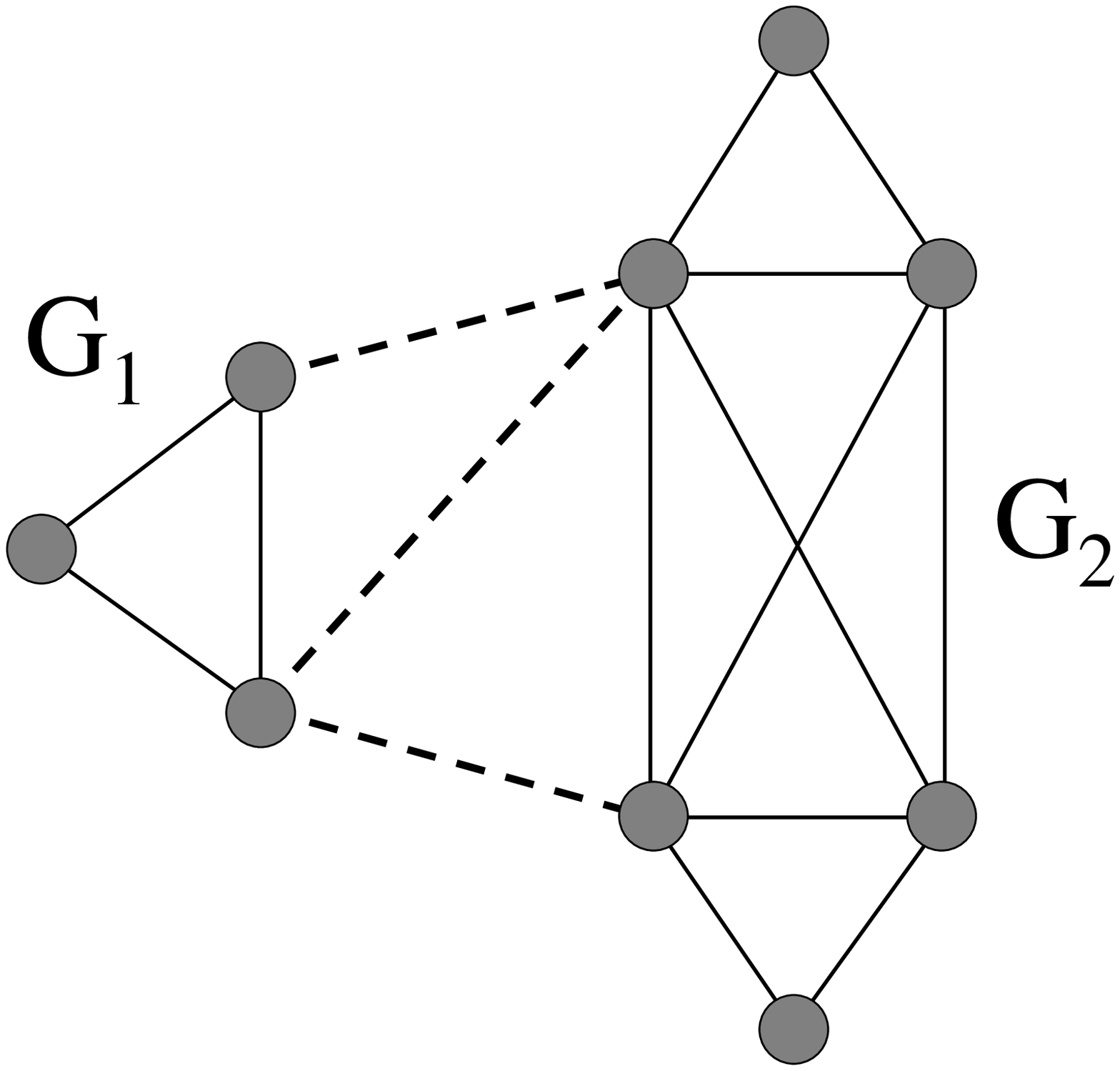}&
\includegraphics[scale = .3]{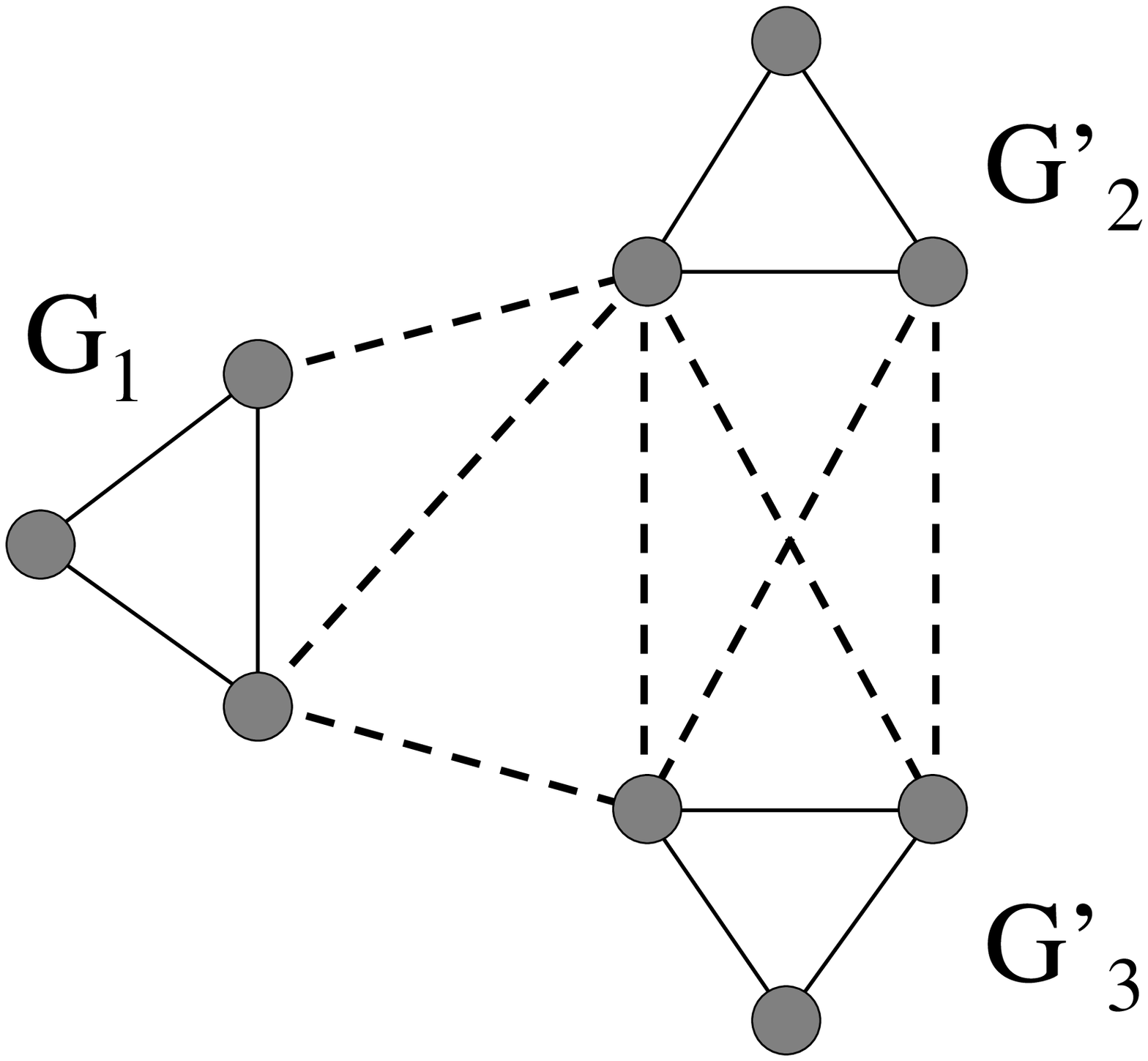}\\
(a) &(b)
\end{tabular}
\caption{The graph represented in (a) and (b) is an {edge}-optimal
rigid merge if it is obtained by merging $G_1$ and $G_2$ (a) but
not if it is obtained by merging $G_1$, $G_2'$ and $G_3'$ (b). The
dashed edges represent the edges of
$E_M$}\label{fig:opt_and_notopt_rigmerging}
\end{figure}

\subsection{Persistence}\label{sec:2DPer}

{{Next we analyze the case where} the meta-vertices $G_i$ are
directed persistent graphs, and adapt the definitions of $N$ and
$S$ in consequence.} If it is possible to merge them into a
persistent graph, then it is possible to do so in such a way that
all the edges of $E_M$ leave vertices which have an out-degree not
greater than 2 in $G$: a set of edges $E_M$ that would make $G$
persistent but that would not satisfy this property could indeed
be reduced by Proposition \ref{prop:d+} until it satisfies it.
{Moreover, we have the following proposition.}
\begin{prop}\label{prop:necsufper_when_em<2}
Let $G=\prt{\bigcup_{N,S} G_i}\cup E_M$ with $N$ and $S$ as
defined at the beginning of this section, and with all $G_i$
persistent. If no vertex left by an edge of $E_M$ has an
out-degree larger than 2, then $G$ is persistent if and only if it
is rigid.
\end{prop}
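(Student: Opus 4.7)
The plan is to prove the two implications separately. The forward implication (persistence implies rigidity) is a general property of persistent graphs already recorded in the paper before Theorem \ref{thm:crit_persi}, so it reduces to one sentence. The substantive content is the reverse direction, which I would attack via the reduction criterion of Theorem \ref{thm:crit_persi}.

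Assume $G$ is rigid. By Theorem \ref{thm:crit_persi}, persistence of $G$ will follow if I can show that every subgraph $\hat G$ obtained from $G$ by iteratively removing outgoing edges at vertices of out-degree strictly greater than $2$ (until none remain) is rigid. The first key observation is that no edge of $E_M$ is ever removed during such a process: by hypothesis every edge of $E_M$ leaves a vertex of out-degree at most $2$ in $G$, and subsequent removals can only decrease out-degrees, so this property is preserved throughout. Thus all removed edges lie inside some meta-vertex $G_i$, and $\hat G = \bigl(\bigcup_{N,S} \hat G_i\bigr) \cup E_M$ where $\hat G_i$ is the subgraph of $\hat G$ on the vertex set of $G_i$.

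The next step is to show that each $\hat G_i$ is rigid, by arguing that it is itself a valid final reduction of the stand-alone persistent graph $G_i$. For a vertex $v$ of $G_i$ not incident to any outgoing edge of $E_M$, one has $d^+_G(v) = d^+_{G_i}(v)$, so the reduction processes in $G$ and in $G_i$ treat $v$ identically. If on the other hand $v$ is incident to some outgoing $E_M$-edge, then by hypothesis $d^+_G(v)\leq 2$, hence $d^+_{G_i}(v)\leq 1$, and $v$ is never touched in either process. In particular, every vertex of $\hat G_i$ has $G_i$-out-degree at most $2$, so $\hat G_i$ is indeed a complete reduction of $G_i$, and Theorem \ref{thm:crit_persi} applied to the persistent graph $G_i$ yields the rigidity of $\hat G_i$.

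Finally, I would conclude by applying Theorem \ref{thm:metalaman2D} twice. Since $G$ is rigid, the theorem provides a set $E_M'\subseteq E_M$ satisfying conditions (i)-(ii). Those conditions depend only on $E_M$ and on the partition of vertices into meta-vertices, both of which are unchanged in passing from $G$ to $\hat G$; and since each $\hat G_i$ is rigid, Theorem \ref{thm:metalaman2D} is applicable to $\hat G$ with the same $E_M'$, certifying the rigidity of $\hat G$. The delicate point in the argument, and the reason the hypothesis on $E_M$ is essential, is exactly the decoupling of the global reduction into per-meta-vertex reductions in the previous paragraph: without that hypothesis, reduction could delete $E_M$-edges, and a vertex with many outgoing $E_M$-edges would no longer satisfy $d^+_G(v)=d^+_{G_i}(v)$ at the relevant vertices, breaking the correspondence between reductions of $G$ and of the individual $G_i$.
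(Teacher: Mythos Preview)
Your proof is correct and follows essentially the same approach as the paper: both invoke Theorem~\ref{thm:crit_persi}, observe that no $E_M$-edge is ever deleted in the reduction (since its tail has out-degree at most~$2$ throughout), argue that the restriction $\hat G_i$ is itself a valid complete reduction of the persistent graph $G_i$ and hence rigid, and then transfer rigidity from $G$ to $\hat G$ via Theorem~\ref{thm:metalaman2D}, whose hypotheses depend only on $E_M$ and the meta-vertex partition. Your case split on whether a vertex has an outgoing $E_M$-edge makes the correspondence between the global and local reductions slightly more explicit than the paper's phrasing, but the argument is the same.
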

\begin{proof}
Rigidity is a necessary condition for persistence, so we just have
to prove that it is here sufficient. Let $G'$ be a (directed)
graph obtained from $G$ by removing edges leaving vertices with
out-degree larger than 2 until no such vertex exists in the graph.
It follows from Theorem \ref{thm:crit_persi} that we just need to
prove the rigidity of any such $G'$. For every $i$, let $G_i'$ be
the restriction of $G'$ to the meta-vertex $G_i$. Since in $G$,
every edge of $E_M$ leaves a vertex with an out-degree at most 2,
there holds $G'=\prt{\bigcup_{N,S} G_i'}\cup E_M$ as no edge of
$E_M$ is removed when building $G'$. Moreover, for every $i$,
$G_i'$ can be obtained from $G_i$ by removing edges leaving
vertices with an out-degree larger than 2 until no such vertex
exists in the graph anymore. The only vertices that are not left
by exactly the same edges in $G$ as in $G_i$ are indeed those left
by edges of $E_M$, which by hypothesis have an out-degree at most
2 and are therefore unaffected by the edge-removal procedure. It
follows then from the persistence of all $G_i$ and from Theorem
\ref{thm:crit_persi} that all $G_i'$ are rigid. And since $G$ is
rigid, $E_M$ satisfies the necessary and sufficient conditions of
Theorem \ref{thm:metalaman2D}. Therefore, the graph
$G'=\prt{\bigcup_{N,S} G_i'}\cup E_M$ is also rigid, as the
conditions of Theorem \ref{thm:metalaman2D} do not depend on the
edges inside the different meta-vertices. As explained above, this
implies the persistence of $G'$.\\
\end{proof}

The condition on the out-degrees of the vertices {with an outgoing
edge of $E_M$ can be} conveniently re-expressed in terms of
degrees of freedom: To each DOF ({within a single} meta-vertex) of
any vertex {there} corresponds at most one outgoing edge of $E_M$.
By an abuse of language, we say that such edges leave {a vertex
with one or more} \emph{local} DOFs, {i.e. a vertex which inside
its meta-vertex has one or more DOFs and {\emph{which is then left
by no more edges of $E_M$ than the number of DOFs is has.}}} This
allows reformulating Proposition \ref{prop:necsufper_when_em<2},
{the proof of which can directly be extended to any dimension}, in
a dimension-free way:
\begin{thm}\label{thm:leavedof2}
A collection of persistent meta-vertices can be merged into a
persistent graph if and only if it can be merged into a persistent
graph by adding edges leaving {vertices with one or more} local
DOFs, {the number of added edges not exceeding the number of local
DOFs.} In that case, the merged graph is persistent if and only if
it is rigid.\NDofthmprop
\end{thm}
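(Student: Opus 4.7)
My plan is to prove the two parts of the theorem separately, using Proposition \ref{prop:d+} for the forward implication of the first part and the argument of Proposition \ref{prop:necsufper_when_em<2} (adapted to any dimension) for the equivalence claim in the second part. Throughout, let $d\in\{2,3\}$ denote the ambient dimension, and observe that the local DOF constraint in the theorem is equivalent to the condition that every vertex of the merged graph which is left by an edge of $E_M$ has out-degree at most $d$: writing $d^+_{\mathrm{loc}}(v)$ for the out-degree of $v$ internal to its meta-vertex and $d^+(v)$ for its total out-degree, the number of outgoing $E_M$-edges at $v$ equals $d^+(v) - d^+_{\mathrm{loc}}(v)$, and this is bounded by the local DOFs $d - d^+_{\mathrm{loc}}(v)$ precisely when $d^+(v)\leq d$.

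The ``if'' direction of the first part is immediate by construction. For the converse, suppose $G = \prt{\bigcup_{N,S} G_i} \cup E_M$ is persistent. I would iteratively trim $E_M$: while there exists an edge of $E_M$ leaving a vertex whose out-degree in the current graph exceeds $d$, remove that edge. Each such removal preserves persistence by Proposition \ref{prop:d+}, and no internal edge of any $G_i$ is ever removed, so the meta-vertex structure is preserved. The process terminates because $E_M$ is finite, and at termination every vertex left by an edge of the reduced $E_M$ has out-degree at most $d$ in the reduced merged graph; by the equivalence above, the local DOF constraint is then satisfied.

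For the second part, one direction is the standard fact that persistence implies rigidity. For the other direction, I would mimic the proof of Proposition \ref{prop:necsufper_when_em<2} in arbitrary dimension. Assume $G$ satisfies the local DOF constraint and is rigid, and let $G'$ be obtained by applying the persistence-checking procedure of Theorem \ref{thm:crit_persi}, i.e., by iteratively removing edges leaving vertices of out-degree greater than $d$ until no such vertex remains. By the local DOF constraint, no edge of $E_M$ ever qualifies for removal, so $G' = \prt{\bigcup_{N,S} G_i'} \cup E_M$ where each $G_i'$ is the outcome of the same procedure applied inside $G_i$. Since each $G_i$ is persistent, Theorem \ref{thm:crit_persi} applied internally gives that each $G_i'$ is rigid. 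It then suffices to show that rigidity of $\prt{\bigcup_i G_i} \cup E_M$ forces rigidity of $\prt{\bigcup_i G_i'} \cup E_M$, which follows from the dimension-appropriate meta-Laman theorem (Theorem \ref{thm:metalaman2D} in $\Re^2$, and its $\Re^3$ analogue) because the characterizing condition depends only on $E_M$ together with the rigidity of each meta-vertex, not on the internal edges themselves. A final application of Theorem \ref{thm:crit_persi} to $G$ then yields its persistence.

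The main subtle point is the second part in dimension $3$: although the argument is structurally identical to the $2$-dimensional case of Proposition \ref{prop:necsufper_when_em<2}, it requires a $3$-dimensional analogue of Theorem \ref{thm:metalaman2D} whose correctness must accommodate the $3$-connectivity condition appearing in Theorem \ref{thm:Laman3D}. This is what justifies developing the $3$-dimensional material in a later section while still stating the present result in a dimension-free way.
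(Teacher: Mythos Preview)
Your proposal is correct and follows essentially the same approach as the paper: the trimming argument via Proposition~\ref{prop:d+} for the first part is exactly what the paper sketches in the paragraph preceding Proposition~\ref{prop:necsufper_when_em<2}, and your second part reproduces the proof of Proposition~\ref{prop:necsufper_when_em<2} verbatim. One terminological remark: in $\Re^3$ the paper does not rely on a meta-Laman counting analogue (Theorem~\ref{thm:metalaman3D} is only necessary) but instead on Proposition~\ref{prop:dependonlyEM3D}, which directly establishes the invariance principle you correctly isolate---that rigidity of the merged graph depends only on $E_M$ and the rigidity class of each meta-vertex, not on the internal edges.
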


If {one or more edges} of $E_M$ do leave a vertex with an
out-degree larger than 2, no criterion {has been found yet} {to
determine whether the merged graph is persistent or not,} {while
also taking} {advantage of the fact that the graph is obtained by
merging several persistent meta-vertices.}
\\

{Tying Theorem \ref{thm:leavedof2} together with what is known and
reviewed above regarding the merging of two rigid meta-vertices,
we conclude:} two persistent meta-vertices $G_a$ and $G_b$ {each}
having two or more vertices can be merged into a persistent graph
if and only if three edges leaving {vertices with} local DOFs can
be added in such a way that they are incident to at least two
vertices in each meta-vertex. There must thus be at least three
{local} DOFs available among the {vertices in} $G_a$ and $G_b$.
Conversely, if {there are available three local DOFs among the
vertices of $G_a$ and $G_b$}, since no vertex can have more than
two DOFs, it is possible to add a total of at least three edges
leaving at least two vertices of $G_a \cup G_b$. The vertices to
which those edges arrive can then be chosen in such a way that at
least two vertices of both $G_a$ and $G_b$ are incident to edges
of $E_M$, as in the example shown in Fig. \ref{fig:merge_two_2D}.
It follows then from Theorem \ref{thm:metalaman2D} that this graph
is rigid, which by Theorem \ref{thm:leavedof2} implies that the
merged graph is persistent:
\begin{prop}\label{prop:mergetwo2D}
Two persistent meta-vertices each having two or more vertices can
be merged into a persistent graph if and only if the sum of their
DOF numbers is at least 3. At least three edges are needed to
perform this merging, {and merging} can always be done with
exactly three edges.\NDofthmprop
\end{prop}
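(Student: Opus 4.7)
The plan is to prove both directions by combining the characterization of persistent merging from Theorem~\ref{thm:leavedof2} with the rigidity count of Theorem~\ref{thm:metalaman2D}, applied with $\abs{N}=2$ and $\abs{S}=0$. For necessity, suppose $G_a$ and $G_b$ can be merged into a persistent graph. By Theorem~\ref{thm:leavedof2}, the merging can be achieved using only edges that leave vertices with local DOFs, with at most one edge per DOF, and the resulting graph is rigid. Theorem~\ref{thm:metalaman2D} then forces $E_M$ to contain a subset of size $3\abs{N}+2\abs{S}-3 = 3$, so at least three edges are required; since each of them leaves a local DOF, the sum of the DOFs of $G_a$ and $G_b$ must be at least $3$.

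For sufficiency, assume that the total number of local DOFs in $G_a \cup G_b$ is at least $3$. The strategy is to exhibit a set $E_M$ of exactly three edges, each leaving a local DOF, whose endpoint pattern satisfies the hypotheses of Theorems~\ref{thm:leavedof2} and~\ref{thm:metalaman2D}. Since no vertex carries more than two DOFs, three or more DOFs in $G_a \cup G_b$ are necessarily spread over at least two vertices. I would distribute the three outgoing edges between $G_a$ and $G_b$ proportionally to their DOF budgets (for instance, if $G_a$ has $a_1 \geq b_1$ DOFs, place $\min(a_1,3)$ of the outgoing edges in $G_a$ and the rest in $G_b$), and then choose the arriving endpoints freely in the opposite meta-vertex---possible because each meta-vertex has at least two vertices---so that both $G_a$ and $G_b$ have at least two vertices incident to $E_M$. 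The main obstacle is a short case check: when one meta-vertex concentrates both of its DOFs at a single vertex, the outgoing edges from that meta-vertex share their tail, and one must arrange the incoming edge(s) so that another vertex of that meta-vertex is also incident to $E_M$; a similar check is needed in the boundary case where only a single edge originates in one of the meta-vertices.

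With $\abs{E_M}=3$ and endpoints spread over at least two vertices of each meta-vertex, Theorem~\ref{thm:metalaman2D} is satisfied: the full set $E_M$ gives $\abs{I(E_M)}=2$ and $\abs{J(E_M)}=0$, matching $3\abs{N}+2\abs{S}-3=3$, and the inequality for proper non-empty subsets of $E_M$ is a direct verification. Rigidity of the merged graph follows, and since every edge of $E_M$ leaves a local DOF, Theorem~\ref{thm:leavedof2} upgrades this to persistence. The same construction simultaneously shows that three edges always suffice.
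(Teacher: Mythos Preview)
Your proposal is correct and follows essentially the same route as the paper: reduce via Theorem~\ref{thm:leavedof2} to the question of placing three edges leaving local DOFs so that the merged graph is rigid, invoke the count $3\abs{N}+2\abs{S}-3=3$ from Theorem~\ref{thm:metalaman2D} for necessity, and for sufficiency use that no vertex carries more than two DOFs to spread the tails over at least two vertices while choosing heads so that each meta-vertex has two incident vertices. The only cosmetic difference is that the paper appeals to the known two-body special case of Theorem~\ref{thm:metalaman2D} (three edges incident to at least two vertices of each meta-vertex suffice for rigidity), whereas you verify conditions (i)--(ii) of that theorem directly.
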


\begin{figure}
\centering
\begin{tabular}{ccc}
\includegraphics[scale = .35]{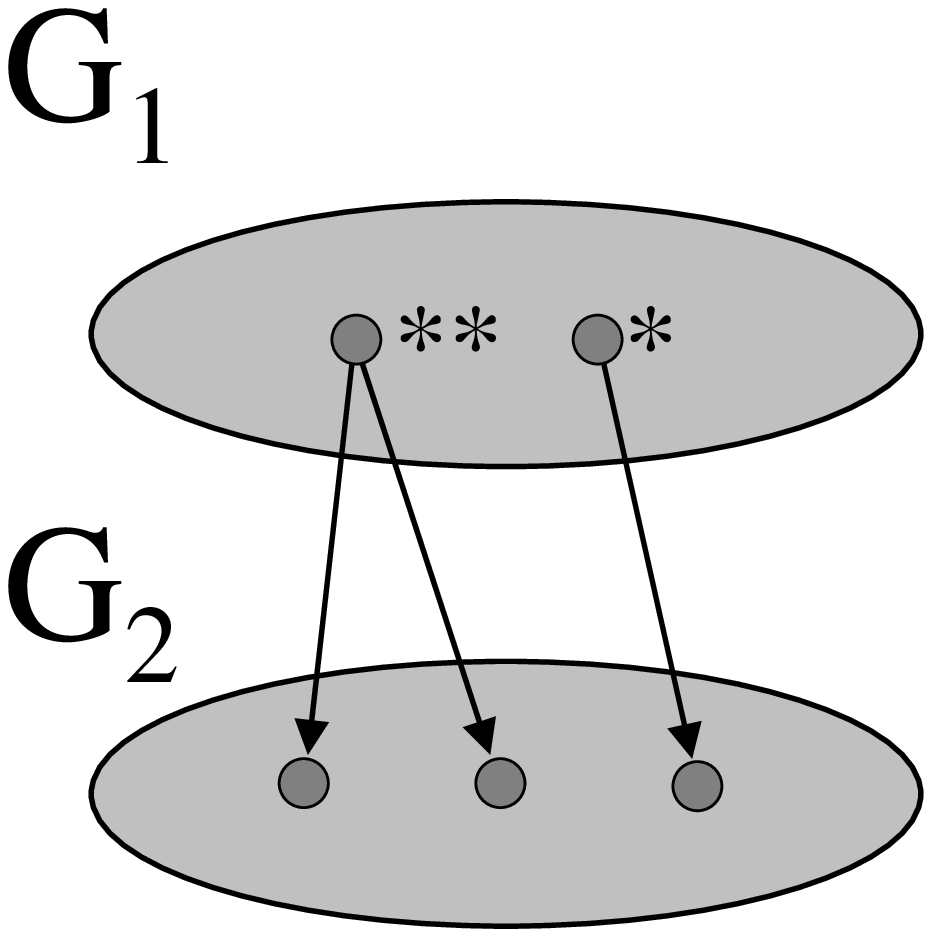}&&
\includegraphics[scale = .4]{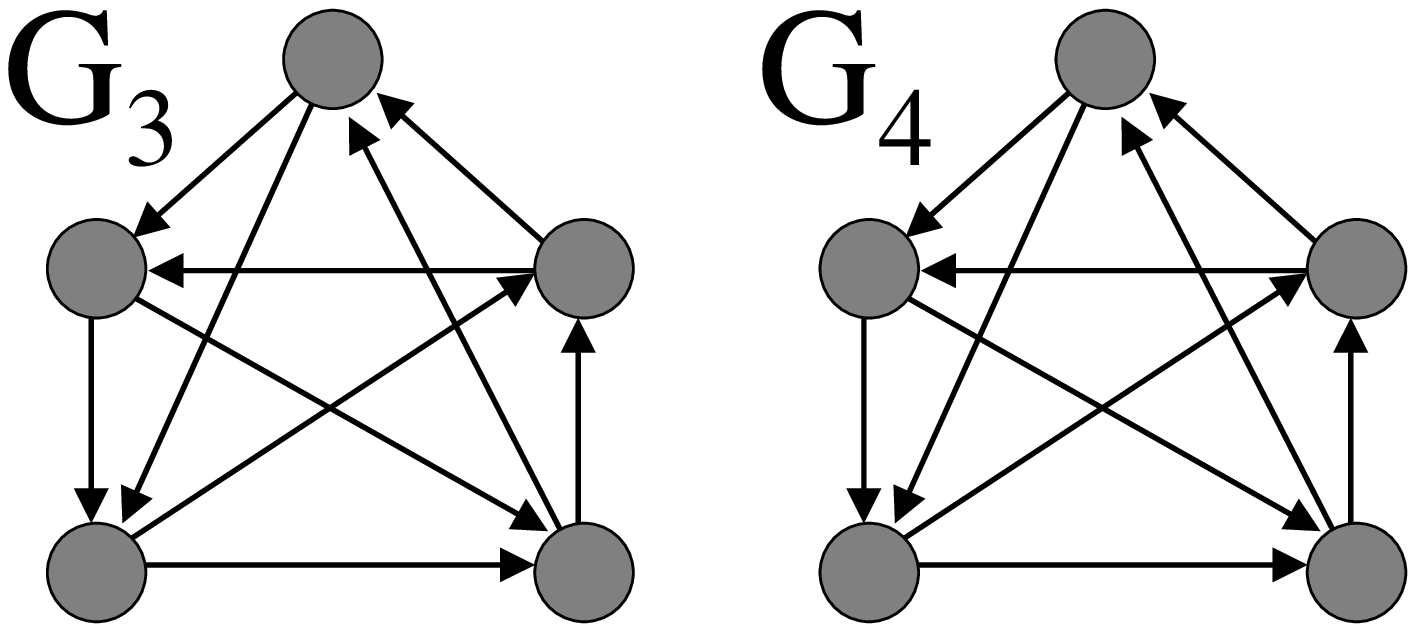}\\
(a)&&(b)
\end{tabular}

\caption{Merging of the persistent meta-vertices $G_1$ and $G_2$
into a persistent graph in $\Re^2$ (a). The symbol \quotes{*}
represents one DOF (with respect to the meta-vertex). (b)
represents two persistent meta-vertex that cannot be merged into a
persistent graph in $\Re^2$ {by addition of interconnecting edges}
because none of their vertices has a DOF.}\label{fig:merge_two_2D}
\end{figure}

If one or two of the meta-vertices are single vertex graphs, the
result still holds, but the minimal number of added edges (and
therefore the number of needed DOFs) are then respectively 2 and
1. We {define the} \emph{number of missing DOFs} ($m_{DOF}$) {to
be} {the maximal number of DOFs that any graph with the same
number of vertices can have, less the number of DOFs the graph
actually has}. In $\Re^2$, this maximal number is 2 for the single
vertex graphs, and 3 {for other {persistent} graphs}. {There is an
interesting consequence: when} the minimal number of edges is used
to merge two meta-vertices $G_a$ and $G_b$, the number of missing
DOFs is preserved through the process, i.e. $m_{DOF}\prt{G_a\cup
G_b \cup E_M} = m_{DOF}(G_a) +
m_{DOF}(G_b)$.\\

Consider now an arbitrary number of persistent meta-vertices,
possibly containing single-vertex graphs, but such that the total
number of vertices is at least 2. If the sum of their number of
missing DOFs is no greater than 3, it follows from Proposition
\ref{prop:mergetwo2D} that any two of them can be merged in such a
way that the obtained graph is persistent and that the total
number of missing DOFs remains unchanged. Any pair of those
meta-vertices would indeed contain at least the required number of
DOFs. Doing this recursively, it is possible to merge all these
meta-vertices into {a single} persistent graph. In case there are
more than 3 missing DOFs, the total DOF number is by definition
smaller than $3\abs{N}+ 2\abs{S}-3$, which is the minimal number
of edges required to make the merged graph rigid. It follows then
from Theorem \ref{thm:leavedof2} that such meta-vertices cannot be
merged in a persistent graph {by addition of interconnecting
edges.} We have thus proved the following result:
\begin{prop}\label{prop:nec_and_suf-tomerge2D}
A collection of persistent meta-vertices $N\cup S$ {(with $N$ and
$S$ as defined in the beginning of this section)} can be merged
into a persistent graph if and only if the total number of missing
DOFs is no greater than 3, or equivalently if the total number of
local DOF in $N\cup S$ is at least $3\abs{N}+ 2\abs{S}-3$. At
least $3\abs{N}+ 2\abs{S}-3$ edges are needed to perform this
merging, {and merging} can always be done with exactly this number
of edges.\NDofthmprop
\end{prop}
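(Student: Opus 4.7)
The plan is to prove the two directions separately, using the discussion preceding the statement to drive the sufficiency half inductively, and using Theorems \ref{thm:leavedof2} and \ref{thm:metalaman2D} to drive the necessity half directly. First I would record the equivalence of the two forms of the condition: the total number of local DOFs equals $3\abs{N}+2\abs{S}$ minus the total $m_{DOF}$, so $m_{DOF}\leq 3$ is exactly the same as the local DOF total being at least $3\abs{N}+2\abs{S}-3$.

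For necessity, I would argue as follows. If the meta-vertices can be merged into a persistent graph, then by Theorem \ref{thm:leavedof2} the merging can be realized with every edge of $E_M$ leaving a distinct local DOF, whence $\abs{E_M}$ is at most the total local DOF count. Since persistence implies rigidity, Theorem \ref{thm:metalaman2D} forces $\abs{E_M}\geq 3\abs{N}+2\abs{S}-3$. Combining these two inequalities yields both the DOF bound and the claim that no fewer than $3\abs{N}+2\abs{S}-3$ edges can ever suffice.

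For sufficiency, I would induct on $\abs{N}+\abs{S}$, with the running assumption that the total vertex count is at least two. In the inductive step, pick any two meta-vertices; since the global $m_{DOF}$ is at most $3$, so is the combined $m_{DOF}$ of the chosen pair, which by Proposition \ref{prop:mergetwo2D} (together with the single-vertex extensions flagged just above the proposition) permits merging them into a persistent graph using the minimal number of edges -- namely $3$, $2$, or $1$ according to whether neither, one, or both of the pair are single-vertex. As noted just above the proposition, a minimal merging of two meta-vertices preserves the total $m_{DOF}$, so replacing the pair by the merged graph produces a strictly smaller collection satisfying the induction hypothesis. The total edge count telescopes: each merging step contributes exactly the drop in $3\abs{N}+2\abs{S}-3$, giving a grand total of $3\abs{N}+2\abs{S}-3$. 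The one point to watch is that the greedy choice of pair is never blocked -- the global missing-DOF bound is inherited by every pair regardless of whether its members lie in $N$ or $S$, so any two meta-vertices are always eligible, and the recursion terminates at a single persistent graph.
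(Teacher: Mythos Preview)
Your proposal is correct and follows essentially the same approach as the paper: necessity via Theorem~\ref{thm:leavedof2} combined with the edge lower bound from Theorem~\ref{thm:metalaman2D}, and sufficiency via recursive pairwise merging using Proposition~\ref{prop:mergetwo2D} while preserving the total missing-DOF count. Your write-up is slightly more explicit than the paper's in spelling out the induction and the telescoping edge count, but the argument is the same.
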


As when merging rigid meta-vertices, we say that $G$ is an
\emph{{edge}-optimal persistent merging} if no single edge of
$E_M$ can be removed without losing persistence. Again, if all
meta-vertices are minimally persistent, then $G$ is an
{edge}-optimal persistent merging if and only if it is minimally
persistent.
\begin{thm}\label{thm:minper-minrig}
$G=\prt{\bigcup_{N,S} G_i}\cup E_M$ (with $N$ and $S$ as defined
at the beginning of this section and with all $G_i$ persistent) is
an {edge}-optimal persistent merging if and only if it is an
edge-optimal rigid merging and all edges of $E_M$ leave {vertices
with} local DOFs.
\end{thm}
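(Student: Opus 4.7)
My plan is to pivot on Proposition \ref{prop:necsufper_when_em<2}, which asserts that under the hypothesis that all edges of $E_M$ leave vertices with local DOFs (equivalently, no edge of $E_M$ leaves a vertex with out-degree greater than 2 in $G$), rigidity and persistence of the merged graph coincide. Combined with Proposition \ref{prop:d+}, which allows one to shed edges leaving vertices of out-degree greater than 2 without losing persistence, the theorem reduces to bookkeeping on the two notions of edge-optimality.

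For the forward direction, I would assume $G$ is an edge-optimal persistent merging and argue in three short steps. First, since persistence implies rigidity, $G$ is rigid. Second, I would establish the local DOF condition: if some $e \in E_M$ left a vertex of out-degree larger than 2 in $G$, then Proposition \ref{prop:d+} would yield that $G-e$ is still persistent, contradicting edge-optimality in the persistent sense. Third, to obtain edge-optimality in the rigid sense, I would suppose for contradiction that $G-e$ is rigid for some $e \in E_M$. Since out-degrees only decrease upon removing $e$, $G-e$ still satisfies the local DOF hypothesis, so by Proposition \ref{prop:necsufper_when_em<2} it is persistent, again contradicting the edge-optimal persistence of $G$.

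The backward direction is immediate from the same tools. If $G$ is an edge-optimal rigid merging and every edge of $E_M$ leaves a vertex with local DOFs, then $G$ is rigid and satisfies the hypothesis of Proposition \ref{prop:necsufper_when_em<2}, which gives persistence directly. For edge-optimality in the persistent sense, removing any edge of $E_M$ destroys rigidity by hypothesis, and therefore destroys persistence as well since persistence implies rigidity.

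I do not anticipate a serious obstacle, since the argument essentially threads together two previously proved propositions. The only point that needs a brief remark is that the local DOF condition is inherited by $G-e$ whenever $e \in E_M$ is removed, which is clear because out-degrees can only decrease; it is this mild monotonicity that lets Proposition \ref{prop:necsufper_when_em<2} apply both to $G$ and to each candidate reduction $G-e$.
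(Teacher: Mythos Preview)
Your proposal is correct and follows essentially the same approach as the paper: both establish the local DOF condition via Proposition~\ref{prop:d+}, then use Proposition~\ref{prop:necsufper_when_em<2} to identify rigidity with persistence under that condition, handling the two notions of edge-optimality exactly as you describe. Your remark that the local DOF hypothesis is inherited by $G-e$ (since out-degrees only decrease) is a point the paper leaves implicit, so your write-up is in fact slightly more careful on that step.
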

\begin{proof}
Let $G$ be a persistent merging. If there is an edge that lies in
$E_M$ leaving a vertex with no local DOF, then it follows from
Proposition \ref{prop:d+} that the graph obtained by removing this
edge would also be persistent, and thus that $G$ is
not an edge-optimal persistent merging.\\
Now if $G$ is a persistent merging for which all edges of $E_M$
leave local DOFs but which is not an edge-optimal rigid merging,
then by removing one edge of $E_M$ it is possible to obtain a
rigid graph which by Proposition \ref{prop:necsufper_when_em<2} is
also persistent, so that $G$ is not an edge-optimal persistent
merging.\\
There remains to prove that an edge-optimal rigid merging $G$
where all edges of $E_M$ leave local DOFs is an edge-optimal
persistent merging. Since such $G$ is rigid, it follows from
Proposition \ref{prop:necsufper_when_em<2} that it is also
persistent. Moreover, since it is an edge-optimal rigid merging,
removing any edge of $E_M$ destroys rigidity and therefore
persistence.\\
\end{proof}

Tying Theorem \ref{thm:minper-minrig} with Theorem
\ref{thm:edge-optimal_rig2d} leads to the following more explicit
characterization of edge-optimal persistent merging.
\begin{thm} {$G=\prt{\bigcup_{N,S} G_i}\cup
E_M$ (with $N$ and $S$ as defined at the beginning of this section
and with all $G_i$ persistent)} containing at least two vertices
is an {edge}-optimal persistent merging in $\Re^2$ if and only if
{the following conditions all hold:}\\
(i) $\abs{E_M}= 3\abs{N}+2\abs{S}-3$.\\
(ii) For all {non-empty} $E_M''\subseteq E_M'$, there holds \\
\e\e $\abs{E_M''}\leq 3\abs{I(E_M'')}+2\abs{J(E_M'')}-3$\\
{with $I(E_M'')$ and $J(E_M')$ as defined in Theorem
\ref{thm:metalaman2D}}\\
(iii) All edges of $E_M$ leave {vertices with} local
DOFs.\NDofthmprop
\end{thm}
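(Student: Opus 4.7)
The plan is to chain the three previously established results, namely Theorem \ref{thm:minper-minrig}, Theorem \ref{thm:edge-optimal_rig2d}, and Theorem \ref{thm:metalaman2D}, so that the characterization of edge-optimal persistent merging reduces in a direct way to the explicit combinatorial conditions (i)--(iii).

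First I would apply Theorem \ref{thm:minper-minrig} to translate the statement \emph{$G$ is an edge-optimal persistent merging} into the conjunction of two conditions: $G$ is an edge-optimal rigid merging, and every edge of $E_M$ leaves a vertex with one or more local DOFs. The second part of this conjunction is exactly condition (iii) of the theorem, so it remains only to show that ``$G$ is an edge-optimal rigid merging'' is equivalent to conditions (i) and (ii) taken together.

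Next I would invoke Theorem \ref{thm:edge-optimal_rig2d}, which states that $G$ is an edge-optimal rigid merging if and only if $G$ is rigid and $\abs{E_M}=3\abs{N}+2\abs{S}-3$; this directly produces condition (i). Once condition (i) is assumed, the rigidity of $G$ is governed by Theorem \ref{thm:metalaman2D}, which requires the existence of a subset $E_M'\subseteq E_M$ with $\abs{E_M'}=3\abs{N}+2\abs{S}-3$ satisfying the subset count inequality. The key observation is that under condition (i) this subset is forced to coincide with $E_M$ itself, so the subset count inequality applied to arbitrary $E_M''\subseteq E_M'=E_M$ becomes exactly condition (ii). Conversely, if (i) and (ii) hold, then taking $E_M'=E_M$ verifies the hypotheses of Theorem \ref{thm:metalaman2D} and yields rigidity, hence (together with (i)) edge-optimal rigidity.

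The argument is essentially a bookkeeping assembly of results already proved, and I do not expect any real obstacle; the only point that needs care is ensuring that when Theorem \ref{thm:metalaman2D} is used under condition (i), the forced identification $E_M'=E_M$ is made explicit, so that the quantifier ``for all $E_M''\subseteq E_M$'' in condition (ii) of the present theorem truly matches the quantifier ``for all $E_M''\subseteq E_M'$'' in Theorem \ref{thm:metalaman2D}. (Note also the apparent typo ``$E_M''\subseteq E_M'$'' in the statement: it should read $E_M''\subseteq E_M$, consistent with the identification above.) Putting the three implications together in both directions yields the desired equivalence.
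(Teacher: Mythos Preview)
Your proposal is correct and follows the same approach as the paper, which simply states that the result is obtained by ``tying Theorem \ref{thm:minper-minrig} with Theorem \ref{thm:edge-optimal_rig2d}'' without spelling out the details; you have made explicit the additional appeal to Theorem \ref{thm:metalaman2D} (implicit in the paper) and the forced identification $E_M'=E_M$ under condition (i), and you have also correctly flagged the typo $E_M''\subseteq E_M'$ in the statement.
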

Notice that an efficient way to {obtain such a merging} is
provided in the discussion {immediately preceding} Proposition
\ref{prop:nec_and_suf-tomerge2D}.

\section{Rigidity and Persistence of 3D {Meta-Formations}} \label{sec:merge3D}

\subsection{Rigidity}\label{sec:3DRig}

We now consider a set $N$ of disjoint rigid (in $\Re^3$) graphs
$G_1,\dots,G_{\abs{N}}$ having at least three vertices each, a set
$D$ of graphs containing two (connected) vertices
$G_{\abs{N}+1},\dots,G_{\abs{N}+\abs{D}}$, and a set $S$ of
single-vertex graphs
$G_{\abs{N}+\abs{D}+1},\dots,G_{\abs{N}+\abs{D}+\abs{S}}$. As in
Section \ref{sec:merge2D}, these graphs are called meta-vertices,
and we define the merged graph $G$ by taking the union of all the
meta-vertices, and of some additional edges $E_M$ {each of
which has end-points belonging to different meta-vertices.}\\

\begin{table}\centering
\begin{tabular}{c|rrrrrr}
$\abs{V_a}$ &
\phantom{$\geq$}1&\phantom{$\geq$}1&\phantom{$\geq$}1&
\phantom{$\geq$}2 &\phantom{$\geq$}2&$\geq$ 3
\\\hline
$\abs{V_b}$& 1&2&$\geq$ 3 &2&$\geq$ 3 &$ \geq$ 3
\\\hline $\min \abs{E_M}$& 1 &2 &3&4&5&6
\end{tabular}
\caption{{Minimal number of edges required to merge two rigid
graphs $G_a$ and $G_b$ into a single rigid graph in $\Re^3$.}}
\label{tab:needed_edges}
\end{table}

The merging of two rigid meta-vertices, {each containing more than
two vertices,} is treated in \cite{YuFidanAnderson:2006_ACC}: {At
least six edges are needed, and they must be incident to at least
three vertices of each meta-vertex (which is necessary for
3-connectivity). But these conditions are only necessary, as they
do not imply 3-connectivity. For example, the so-called
\quotes{double-banana} graph in Fig. \ref{fig:rigidity}(c) can be
obtained by merging two distinct rigid tetrahedral meta-vertices
{(1,3,4,5) and (2,5,7,8)} using a total of six edges incident to
four vertices of each meta-vertex. However, it is always possible
to achieve a rigid merging using exactly six edges incident to
exactly three vertices of each meta-vertex, with no single vertex
having more than three incident edges out of the six.} {With a
minor modification, the merging result above} holds in the cases
where at least one meta-vertex has less than 3 vertices: {The
required number of edges is different, as summarized in Table
\ref{tab:needed_edges}} {where $\min \abs{E_M}$ represents the
minimal number of edges required to merge the meta-vertices
$G_a(V_a,E_a)$ and $G_b(V_b,E_b)$ into a rigid graph}. {Also,} if
a meta-vertex has less than 3 vertices, all of them should be
incident to edges of $E_M$, otherwise at least 3 of them should
be. When merging several meta-vertices, there is no available
necessary and sufficient condition for the rigidity of $G$.
Determining whether a merged graph is rigid in $\Re^3$ is indeed a
more general problem than determining whether a given graph is
rigid (for which it suffices to take $N=D = \varnothing$) {and
there is no} {known} {set of combinatorial necessary and
sufficient conditions for this}. We can however prove that the
rigidity of the merged graph $G$ only depends on $E_M$, on the
vertices to which nodes of $E_M$ are incident and on the belonging
of the $G_i$ to $N$, $D$ or $S$.

{\begin{prop}\label{prop:dependonlyEM3D} Let
$G=\prt{\bigcup_{N,D,S} G_i}\cup E_M$ with $N,D,S$ as defined at
the beginning of this section. Suppose that a meta-vertex $G_i$ is
replaced by a meta vertex $G_i'$ with the same set of vertices
incident to $E_M$, with the same set membership, $N$, $S$ or $D$,
as $G_i$, but otherwise with different internal structure. Let G'
be the graph so obtained. Then $G'$ is rigid if and only if $G$ is
rigid.
\end{prop}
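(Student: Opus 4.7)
My plan is to work with the rigidity matrix $R_G$ and to argue that, after eliminating the columns corresponding to ``internal'' vertices of each meta-vertex (those not incident to $E_M$), the resulting reduced matrix depends only on $E_M$, on the external vertices $U_i\subseteq V_i$ incident to $E_M$, and on the set memberships of the $G_i$. The guiding principle is that a rigid body ``looks the same from outside'' regardless of how many internal points it carries or how they are interconnected.

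I would first fix generic positions of all vertices and decompose $R_G$ into row blocks: one per meta-vertex $G_i$ (rows indexed by $E_i$) and a final block for $E_M$. For $G_i\in N$, rigidity of $G_i$ means that the rows of its block have rank $3|V_i|-6$ and, by the argument underpinning Lemma~\ref{lem:lem2metalaman}, span the same subspace of $\Re^{3|V|}$ as the rows of $R_{K_{V_i}}$: both are contained in the latter and both have dimension $3|V_i|-6$. For $G_i\in D$ (resp. $S$) the block has rank $1$ (resp. $0$) and is determined uniquely by $V_i$ and the positions. In particular, in the special case $V_i=V_i'$ (so only internal edges differ), the row span of the $G_i$ block equals that of the $G_i'$ block, the ranks of $R_G$ and $R_{G'}$ coincide, and rigidity is preserved.

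For the general case I would apply a Schur complement step to eliminate the internal-vertex columns using the internal-edge rows. Since no edge of $E_M$ is incident to an internal vertex of any $G_i$, this elimination does not modify the $E_M$ block. The resulting reduced matrix acts only on external vertices; for each $G_i$ of type $N$, the rows contributed by $G_i$ after elimination encode exactly the constraint that all pairwise distances between vertices of $U_i$ are fixed (the same constraint arising from any rigid body with external vertex set $U_i$), and analogous intrinsic constraints arise for types $D$ and $S$. These reduced constraints depend only on $\{U_i\}$ and on the set memberships, so $R_{G'}$ reduces to a matrix with the same row space as $R_G$ and hence has the same rank, yielding the equivalence of rigidity.

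The main subtlety is handling the degenerate cases $|U_i|\leq 2$, where a body of type $N$ retains rotational freedom about its attachment points; this must be tracked carefully in the Schur complement, but does not affect the final conclusion, since the surviving rotational DOFs are intrinsic to the body type and the external vertex set rather than to the internal structure. Modulo this bookkeeping, the argument shows that the rank of $R_G$ is a function only of $E_M$, of the external vertex sets $\{U_i\}$, and of the membership of each $G_i$ in $N$, $D$, or $S$, whence $G'$ is rigid if and only if $G$ is.
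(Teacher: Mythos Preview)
Your approach is correct and is precisely the algebraic alternative the paper alludes to but declines to pursue (``This could be proved using algebraic arguments based on the rigidity matrix, but we prefer the following more intuitive argument''). The paper instead argues kinematically: assuming $G$ is non-rigid, it takes a non-trivial smooth motion $M$ of $G$, notes that the restriction of $M$ to the rigid meta-vertex $G_i$ is a Euclidean motion $T$, and then manufactures a non-trivial motion of $G'$ by applying $T$ to the vertices of $G_i'$ and the restriction $M^*$ of $M$ to everything else; the two agree on the interface $U_i=V_i(E_M)$, and every edge constraint of $G'$ is checked directly. Symmetry in the roles of $G$ and $G'$ finishes the proof.

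Your route makes the same idea linear-algebraic: pivoting on the internal-edge/internal-vertex block eliminates the $W_i$ columns and leaves residual constraints on $U_i$ that coincide with the row space of $R_{K_{U_i}}$, independently of the internal structure of $G_i$. Two refinements are worth noting. First, ``Schur complement'' is slightly loose since the pivot block is not square; what you actually use is that it has full column rank $3|W_i|$, which is exactly the statement that a rigid body with three generically placed pinned points admits no internal flex---this holds precisely when $|U_i|\geq 3$. Second, for the degenerate cases $|U_i|\leq 2$ the paper sidesteps your bookkeeping entirely: it invokes the necessary $3$-connectivity condition of Theorem~\ref{thm:Laman3D} to conclude that \emph{both} $G$ and $G'$ fail to be rigid, so the equivalence is trivial there. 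That is cleaner than tracking residual rotational freedoms through the elimination, and you could simply borrow it.
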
}

\begin{proof}
This could be proved using algebraic arguments based on the
rigidity matrix, but we prefer the following more intuitive
argument.\\

{The result is trivial for meta-vertices of $D$ and $S$ as they
are entirely determined by their belonging to these classes; we
assume therefore that $G_i\in N$. We also assume that the set
$V_i(E_M)$ of vertices of $G_i$ (and $G_i'$) that are incident on
edges of $E_M$ contains at least three vertices. In case this
assumption is not verified, both $G$ and $G'$ fail to be
3-connected and therefore rigid (by Theorem \ref{thm:Laman3D}), so
that the result is also trivial. We then prove that the
non-rigidity of $G$ implies the non-rigidity of $G'$. Since the
roles of $G$ and $G'$
can be exchanged, this is sufficient to prove the theorem.\\

Suppose that $G$ is not rigid, and give positions in $\Re^3$ to
its vertices. Then there is a smooth motion $M$ (satisfying the
distance constraints corresponding to edges in $G$) of the
vertices of $G$ apart from pure translation or rotation. Because
$G_i$ is rigid, the restriction of $M$ to the vertices of $G_i$ is
a rigid motion, that is a translation and/or rotation, which we
call $T$. Therefore, the restriction of $M$ to $(G\setminus
G_i)\cup V_i(E_M)$ is not a rigid motion. Otherwise all distances
would be preserved by $M$ apart from some distances between
vertices of $G_i\setminus V_i(E_M)$ and vertices of $G\setminus
G_i$. We would then have two vertices whose relative distance is
not preserved while their relative distance with respect to all
the three or more vertices of $V_i(E_M)$ are preserved, which is
impossible. We call $M^*$ this restriction to $(G\setminus
G_i)\cup V_i(E_M)$. Let now $M'$ be a smooth motion of the
vertices of $G'$, which for the vertices of $G_i'$ is the
translation and/or rotation $T$, and for the vertices of
$(G'\setminus G_i')\cup V_i(E_M)$ is the motion $M^*$ (observe
that that the two motions are identical on $V(E_M)$ which is the
intersection of the two sets on which $M'$ is defined). Since
$M^*$ is a non-rigid motion (not preserving all distances), so is
$M'$. Therefore, we just need to prove that $M'$ satisfies all
distance constraints on vertices connected by edges in $G'$ to
prove the non-rigidity of $G'$. Consider a pair of vertices. If
they both belong to $(G'\setminus G_i')\cup V_i(E_M)$, their
constraint in $G'$ is the same as in $G$, and their motion is
defined by $M^*$ which satisfies all distance constraints. If they
do not both belong to $(G'\setminus G_i')\cup V_i(E_M)$, then due
to the structure of the graph they necessarily both belong to
$G_i'$, and their motion is the rotation and/or translation which
by essence preserve all distances.}\\
\end{proof}

Moreover, we have the following necessary condition:
\begin{thm}\label{thm:metalaman3D} {Let $G_i$ for $i =
1,2,....,|N|+|D|+|S|$ be rigid meta-vertices, and suppose
$G=\prt{\bigcup_{N,D,S} G_i}\cup E_M$} {(with $N,D,S$ as defined
at the beginning of this section)} is rigid in
$\Re^3$ and contains at least three vertices. Then there exists $E_M'\subseteq E_M$ such that\\
(i) $\abs{E_M'}= 6\abs{N} + 5\abs{D}+ 3\abs{S} -6$\\
(ii) For all {non-empty} $E_M''\subseteq E_M'$, there holds\\
\e\e$\abs{E_M''}\leq 6\abs{I(E_M'')} + 5\abs{J(E_M'')}+
3\abs{K(E_M'')} -6$,\\
{where $I(E_M'')$ is the set of meta-vertices such that either
there are at least three vertices within the meta-vertex all
incident to edges of $E_M''$, or precisely two vertices within the
meta-vertex which are unconnected and both incident to edges of
$E_M''$. $J(E_M'')$ is the set of meta-vertices such that there
are precisely two vertices within the meta-vertex which are
connected and both incident to edges of $E_M''$; $K(E_M'')$ is the
set of meta vertices such that there is precisely one vertex
within the meta-vertex that is incident to one or several edges of
$E_M''$. Note that in each case, there can be an arbitrary number
of vertices in the meta-vertex which are not incident on any edge
of $E_M''$.}\\
Moreover, the graph $\prt{\bigcup_{N,D,S} G_i}\cup E_M'$ is rigid.
\NDofthmprop
\end{thm}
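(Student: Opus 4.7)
I will adapt the argument used for Theorem~\ref{thm:metalaman2D} to three dimensions. For each meta-vertex $G_i$, choose a minimally rigid spanning subgraph $G_i'(V_i,E_i')$ of $G_i$, which has $3|V_i|-6$, $1$, or $0$ edges according as $G_i\in N$, $D$, or $S$. The three-dimensional analog of Lemma~\ref{lem:lem2metalaman} produces a minimally rigid subgraph $G'(V,E')$ of $G$ containing every $G_i'$; set $E_M'=E'\cap E_M$. Since $G'\subseteq\bigl(\bigcup_{N,D,S}G_i\bigr)\cup E_M'$, this last graph is rigid, which proves the final assertion. The count in (i) then follows from $|E'|=3|V|-6$, the decomposition $|E'|=|E_M'|+\sum_i|E_i'|$, and the identity $|V|=\sum_N|V_i|+2|D|+|S|$.

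For (ii), suppose toward contradiction that some $E_M''\subseteq E_M'$ satisfies $|E_M''|>6|I|+5|J|+3|K|-6$, and for each $G_j\in J$ denote by $e_j\in E_j$ the (necessarily unique) internal edge joining the two vertices $u_j,v_j$ of $V_j$ that are incident on edges of $E_M''$. Consider the augmented edge set $E''=E_M''\cup\bigcup_{G_i\in I}E_i'\cup\{e_j:G_j\in J\}$, for which $|E''|=|E_M''|+\sum_I(3|V_i|-6)+|J|$ and $|V(E'')|=\sum_I|V_i|+2|J|+|K|$. A short rearrangement shows that the inequality $|E''|\leq 3|V(E'')|-6$ collapses precisely to $|E_M''|\leq 6|I|+5|J|+3|K|-6$, which would give the desired contradiction. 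This inequality follows from Theorem~\ref{thm:Laman3D}(ii) applied to any minimally rigid subgraph of $G$ containing $E''$; by the basis-exchange property of the rigidity matroid, such a subgraph exists as soon as the rows $\mathcal{T}=E_M''\cup\bigcup_{G_i\in I}E_i'\cup\{r_{e_j}:G_j\in J\}$ are linearly independent in $R_G$.

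The linear independence of $\mathcal{T}$ is the technical heart of the argument and is where the three-dimensional case diverges from the two-dimensional one. The subset $E_M''\cup\bigcup_I E_i'\subseteq E'$ is independent as a subset of the basis $E'$, and each $r_{e_j}$ lies in $\mathrm{span}(E_j')$ but is not necessarily a row of $E'$. The crucial observation is that $r_{e_j}$ has nonzero entries only in the $V_j$-columns, while among the remaining rows of $\mathcal{T}$ only the $E_M''$-rows incident on $u_j$ or $v_j$ contribute to these columns, since every other row is supported on a vertex set disjoint from $V_j$. A genericity argument on the position assignment, combined with the requirement that a linear combination expressing $r_{e_j}$ must also vanish on columns outside $V_j$, rules out any nontrivial such combination. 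This independence step is the principal obstacle---the corresponding difficulty is absent in 2D because no internal row needs to be introduced for the $J$ case there.
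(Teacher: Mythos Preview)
Your overall strategy---extract a minimally rigid spanning subgraph $G'(V,E')$ containing every $G_i'$, set $E_M'=E'\cap E_M$, count to get (i), and for (ii) build the auxiliary set $E''=E_M''\cup\bigcup_{I}E_i'\cup\{e_j:G_j\in J\}$ and contradict independence of $E'$---is exactly the paper's. Your argument for the final clause is in fact tidier than the paper's (which appeals to Proposition~\ref{prop:dependonlyEM3D}): since $G'$ spans $V$, is rigid, and sits inside $\bigl(\bigcup_{N,D,S}G_i\bigr)\cup E_M'$, the supergraph is rigid at once.

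Where you diverge is in flagging that for $G_j\in J\cap N$ the edge $e_j$ need not lie in $E_j'$, hence not in $E'$; the paper simply writes ``$E''\subseteq E'$'' and moves on. You are right to notice this, but your proposed resolution via a column-support and genericity argument is both vague and unnecessary. You already record the decisive fact: $r_{e_j}\in\mathrm{span}(E_j')$. Use it directly. If the rows of $\mathcal{T}$ admitted a nontrivial linear dependence, substitute each $r_{e_j}$ by its expression as a combination of rows of $E_j'$; this produces a dependence among rows of
\[
E_M''\;\cup\;\bigcup_{G_i\in I}E_i'\;\cup\;\bigcup_{G_j\in J}E_j'\;\subseteq\;E'.
\]
Since the blocks $E_j'$ for $G_j\in J$ are pairwise disjoint and disjoint from $E_M''\cup\bigcup_{I}E_i'$ (they live in distinct meta-vertices and contain no edges of $E_M$), no cancellation can occur, so the resulting dependence in $E'$ is still nontrivial---contradicting independence of the basis $E'$. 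Hence $\mathcal{T}$ is independent, and the bound $|E''|\le 3|V(E'')|-6$ follows immediately from the rank of the rigidity matrix on the $V(E'')$-columns; there is no need to extend to a minimally rigid subgraph and invoke Theorem~\ref{thm:Laman3D}(ii).
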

\begin{proof}
The proof is similar to the one of Theorem \ref{thm:metalaman2D}
(necessary part). For every $G_i$, let $G_i'$ be a minimally rigid
subgraph of $G_i$ on the same vertices, which therefore contains
$3\abs{V_i}-6$ edges if $G_i\in N$, one edge if $G_i \in D$ and no
edge if $G_i\subseteq S$. As mentioned in its proof, Lemma
\ref{lem:lem2metalaman} can also be applied in a three-dimensional
space. So if $G$ is rigid, there is a minimally rigid subgraph
$G'(V,E')\subseteq G$ containing all $G_i'$. Let $E_M'=E'\cap
E_M$; we shall prove that $E_M'$ satisfies the condition of this
theorem. Since $G'$ is minimally rigid, there holds
$\abs{E'}=3\abs{V}-6$. Moreover, we have $\abs{E'}= \abs{E_M'} +
\sum_{G_i\in N} \abs{E_i'}+\abs{D}$, and $\abs{V}  = \sum_{G_i\in
N} \abs{V_i'}+2\abs{D} + \abs{S}$, so that
\begin{equation*}
\begin{array}{ll}
\abs{E_M'} &= 3\abs{V}-6 - \sum_{G_i\in N} (3\abs{V_i}-6) -
\abs{D} \\& = 6\abs{N} + 5\abs{D} + 3\abs{S}-6.
\end{array}
\end{equation*}
$E_M'$ contains thus the predicted number of edges. We suppose now
that there is a set $E_M''$ such that $\abs{E_M''}>
6\abs{I(E_M'')}+5\abs{J(E_M'')}+3\abs{K(E_M'')}-6$ and {show that
this contradicts the minimal rigidity of $G'$}. Let us then build
$E''$ by taking the union of $E_M''$ and all $E_i'$ for which
$i\in I(E_M'')$, and the edge connecting the two vertices incident
to $E_M''$ in all meta-vertices in $J(E_M'')$. There holds
$V(E'')= \abs{K(E_M'')} + 2\abs{J(E_M'')}+ \sum_{G_i\in
I(E_M'')}\abs{V_i}$. Therefore, we have
\begin{equation*}
\begin{array}{lll}
\abs{E''} &=& \abs{E_M''} + \sum_{G_i\in I(E_M'')}\abs{E_i'}  +
\abs{J(E_M'')}\\& >&
6\abs{I(E_M'')}+5\abs{J(E_M'')}+3\abs{K(E_M'')}-6
\\&&+ \sum_{G_i\in I(E_M'')}\prt{3\abs{V_i}-6} + \abs{J(E_M'')}
\\&=& 3\abs{V(E'')} -6.
\end{array}
\end{equation*}
{This however contradicts the minimal rigidity of $G'$ as
$E''\subseteq E'$. Finally, since $G' = \prt{\bigcup_{N,D,S}
G'_i}\cup E_M'$ is rigid, it follows from several applications of
Proposition \ref{prop:dependonlyEM3D} that $\prt{\bigcup_{N,D,S}
G_i}\cup E_M'$ is also rigid.}\\
\end{proof}

Note that the rigidity of $\prt{\bigcup G_i}\cup E_M'$ is
explicitly mentioned here and not in Theorem
\ref{thm:metalaman2D}, because in a two-dimensional space it
follows directly from sufficiency of the counting conditions. But,
the counting conditions of Theorem \ref{thm:metalaman3D} are not
sufficient for rigidity, as the non-rigid graph of Fig.
\ref{fig:rigidity}(c) which can be obtained by merging two rigid
tetrahedral meta-vertices (1,3,4,5) and (2,6,7,8) would indeed
satisfy them. Nevertheless, one can deduce from Theorem
\ref{thm:metalaman3D} that $G$ is an edge-optimal rigid merging in
$\Re^3$ if and only if it is rigid and $\abs{E_M}=
6\abs{N}+5\abs{D}+3\abs{S}-6$, using $E_M'$ exactly in the same
way as in Theorem \ref{thm:edge-optimal_rig2d}.

\subsection{Persistence}\label{sec:3DPer}

{We consider now that all meta-vertices $G_i$ are persistent
graphs, and adapt the definitions of $N$, $D$ and $S$ in
consequence.} Theorem \ref{thm:leavedof2} can be generalized to
three dimensions, as {it follows from Proposition
\ref{prop:necsufper_when_em<2}, the proof of which can be
immediately extended to three dimensions.}

\begin{thm}\label{thm:leavedof3}
A collection of (structurally) persistent meta-vertices can be
merged into a (structurally) persistent graph if and only if it
can be merged into a (structurally) persistent graph by adding
edges leaving {vertices with one or more} local DOFs. In that
case, the merged graph is persistent if and only if it is
rigid.\NDofthmprop
\end{thm}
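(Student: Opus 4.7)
The plan is to mirror the two-dimensional argument given for Theorem \ref{thm:leavedof2}, with the three-dimensional thresholds substituted throughout, and with one structural subtlety handled at the end. The main stepping stone is the three-dimensional analogue of Proposition \ref{prop:necsufper_when_em<2}: if every edge of $E_M$ leaves a vertex of out-degree at most $3$ in $G$, then $G$ is persistent if and only if $G$ is rigid. Only sufficiency of rigidity requires work. Take any $G'$ obtained from $G$ by successively deleting edges leaving vertices of out-degree exceeding $3$, as in the criterion of Theorem \ref{thm:crit_persi}. The standing hypothesis ensures that no edge of $E_M$ is ever deleted, so $G' = \prt{\bigcup_{N,D,S} G_i'}\cup E_M$, where each $G_i'$ is exactly the analogous reduction of the persistent meta-vertex $G_i$ and is therefore rigid by Theorem \ref{thm:crit_persi}. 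Rigidity of $G'$ then follows from rigidity of $G$ via Proposition \ref{prop:dependonlyEM3D}, since $G'$ differs from $G$ only in the internal structure of each meta-vertex while preserving the vertices incident to $E_M$ and the set memberships ($N$, $D$, $S$).

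From this extended proposition I would derive the theorem as follows. Necessity of rigidity is immediate since persistence implies rigidity. For the biconditional on mergeability, assume $G$ is persistent. I would repeatedly apply Proposition \ref{prop:d+} (valid in $\Re^3$ for vertices of out-degree larger than $3$) to delete edges of $E_M$ leaving vertices of out-degree greater than $3$, preserving persistence at each step. The resulting merged graph satisfies the hypothesis of the extended Proposition \ref{prop:necsufper_when_em<2}, so it is persistent, and by construction all of its interconnecting edges leave vertices with local DOFs. The second statement of the theorem, that in this situation persistence is equivalent to rigidity, is precisely the extended proposition itself.

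For the structurally persistent version, I need only observe that structural persistence in $\Re^3$ means persistence plus at most one leader. The edge removals provided by Proposition \ref{prop:d+} act only at vertices of out-degree greater than $3$, which are not leaders, so no new leaders can appear during the reduction. Hence structural persistence is preserved throughout, and the reduction produces a structurally persistent merging with all interconnecting edges leaving local DOFs.

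The main obstacle is the final rigidity-transfer step inside the extended Proposition \ref{prop:necsufper_when_em<2}. In the two-dimensional proof, transferring rigidity from $G$ to $G'$ is free because the conditions of Theorem \ref{thm:metalaman2D} depend only on $E_M$ and on the partition of meta-vertices into $N$ and $S$. In three dimensions, the counting conditions of Theorem \ref{thm:metalaman3D} are merely necessary, so rigidity cannot be read off from them alone. The remedy is precisely Proposition \ref{prop:dependonlyEM3D}, which guarantees that rigidity depends only on $E_M$, on the incidence of $E_M$-edges with meta-vertex vertices, and on the $N/D/S$-membership; this plays the role in $\Re^3$ that the sufficiency of the Laman-type count plays in $\Re^2$, and it is what makes the entire argument go through in three dimensions.
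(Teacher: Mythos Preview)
Your proposal is correct and follows essentially the same route as the paper: reduce via Proposition~\ref{prop:d+} to the case where every edge of $E_M$ leaves a vertex of out-degree at most $3$, then reprove Proposition~\ref{prop:necsufper_when_em<2} in $\Re^3$ by invoking Proposition~\ref{prop:dependonlyEM3D} in place of Theorem~\ref{thm:metalaman2D} to transfer rigidity from $G$ to each reduced $G'$. Your handling of structural persistence (no new leaders can be created because only vertices of out-degree exceeding $3$ lose edges) and your explicit identification of Proposition~\ref{prop:dependonlyEM3D} as the three-dimensional substitute for the sufficiency of the Laman-type count are exactly the points the paper relies on.
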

\begin{proof}
Suppose first that a collection of persistent meta-vertices can be
merged into a persistent graph in such a way that some edges do
not leave local DOFs. Then, it follows from Proposition
\ref{prop:d+} that these edges can be removed without destroying
the persistence of the merged graph, so that the same collections
of meta-vertices can be merged without having connecting edges
that do not leave local DOFs. In case the meta-vertices are
structurally persistent and are merged into a structurally
persistent graph, the result still holds as removing edges that do
not leave local DOFs never destroys structural persistence. The
reverse implication is
trivial.\\
The proof of the rest of the result is done exactly as in Theorem
\ref{prop:necsufper_when_em<2}, using Proposition
\ref{prop:dependonlyEM3D} instead of Theorem
\ref{thm:metalaman2D}.\\
\end{proof}

Merging two meta-vertices into a persistent graph is {however} a
more complicated problem in $\Re^3$ than in $\Re^2$. Consider
indeed a meta-vertex $G_a$ without any DOF, and a meta-vertex
$G_b$ which is not structurally persistent, i.e. which is
persistent and contains two vertices (leaders) having three DOFs.
The number of available DOFs is equal to the minimal number of
edges that should be added to obtain a rigid merged graph.
However, the only way to add six edges leaving local DOFs is to
add three edges leaving each leader of $G_b$ and arriving in
$G_a$, as {represented by the example in Fig.}
\ref{fig:merge3D_non_struc_per}(a). Only two vertices of $G_b$
would thus be incident to the added edges, which {prevents the
merged graph from being} rigid and therefore persistent {as it is
thus not 3-connected}. We have thus proved the following
condition:
\begin{prop}\label{prop:3D_imposs_merge}
If two persistent meta-vertices are such that one is not
structurally persistent and the other does not have any DOF, they
cannot be merged into a persistent graph {by addition of
interconnecting edges.}\NDofthmprop
\end{prop}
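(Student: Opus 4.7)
The plan is to argue by contradiction. Assume that the two meta-vertices admit a persistent merging $G_a \cup G_b \cup E_M$, where $G_a$ has no DOFs and $G_b$ is persistent but not structurally persistent, and derive a violation of the $3$-connectivity condition for rigidity.

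First, by Theorem \ref{thm:leavedof3} I may reduce to the case where every edge of $E_M$ leaves a vertex carrying at least one local DOF. Since $G_a$ has no DOFs, no vertex of $G_a$ is eligible, so every edge of $E_M$ must originate in $G_b$ and terminate in $G_a$. Next I count DOFs in $G_b$. Since $G_b$ is persistent but not structurally persistent, it has at least two leaders in $\Re^3$, each contributing $3$ DOFs. The total DOF count of a persistent graph in $\Re^3$ cannot exceed $6$, so $G_b$ has exactly two leaders and every other vertex of $G_b$ has out-degree at least $3$ and zero local DOFs. Consequently every edge of $E_M$ must leave one of these two leaders, so at most two distinct vertices of $G_b$ are incident to edges of $E_M$.

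To conclude I exhibit a failure of $3$-connectivity. The hypotheses force $G_a \in N$, since single-vertex and two-vertex meta-vertices carry $3$ and $5$ DOFs respectively; and $G_b \in N$, since neither $S$ nor $D$ can accommodate two leaders. In particular $G_b$ contains at least one non-leader vertex and $G_a$ contains at least three vertices. Removing the two leaders of $G_b$ from the merged graph destroys every edge of $E_M$ and therefore separates the remaining nonempty set of non-leader vertices of $G_b$ from the vertices of $G_a$. By Theorem \ref{thm:Laman3D}(iii) the merged graph cannot be rigid, and hence cannot be persistent, which is the desired contradiction. The main subtlety is the DOF counting that pins all the DOFs of $G_b$ to its two leaders; once this is established, the $3$-connectivity argument is essentially forced.
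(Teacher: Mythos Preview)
Your proof is correct and follows essentially the same line as the paper's argument (given in the prose immediately preceding the proposition): reduce via Theorem~\ref{thm:leavedof3} to edges of $E_M$ leaving local DOFs, observe that the only such vertices are the two leaders of $G_b$, and conclude that the merged graph fails $3$-connectivity and hence rigidity. Your version is in fact more carefully detailed than the paper's, in that you explicitly justify why both meta-vertices must lie in $N$ (so that removing the two leaders genuinely leaves nonempty pieces on each side), a point the paper takes for granted.
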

\begin{figure}
\centering
\begin{tabular}{cccc}
\includegraphics[scale = .35]{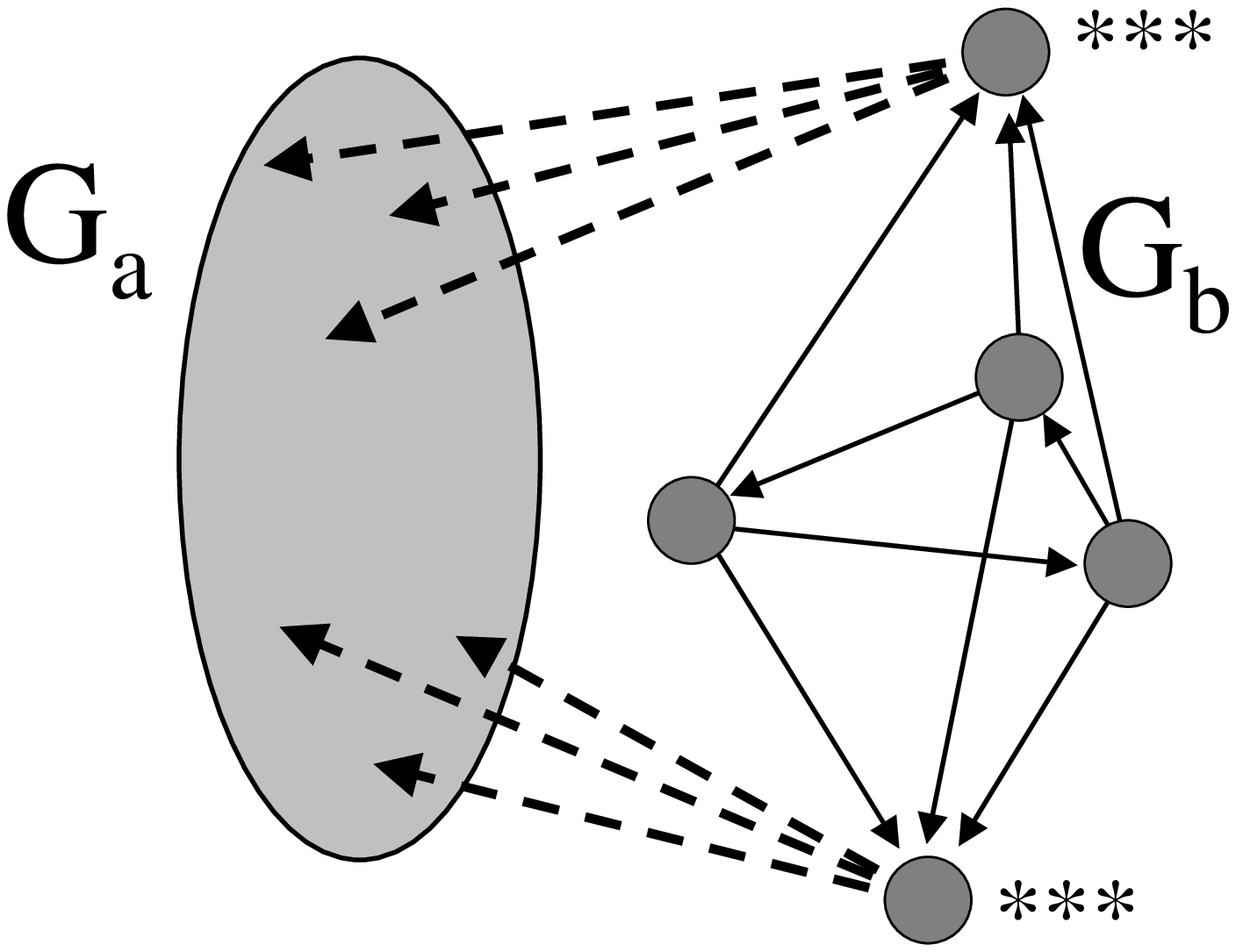}&&&
\includegraphics[scale = .35]{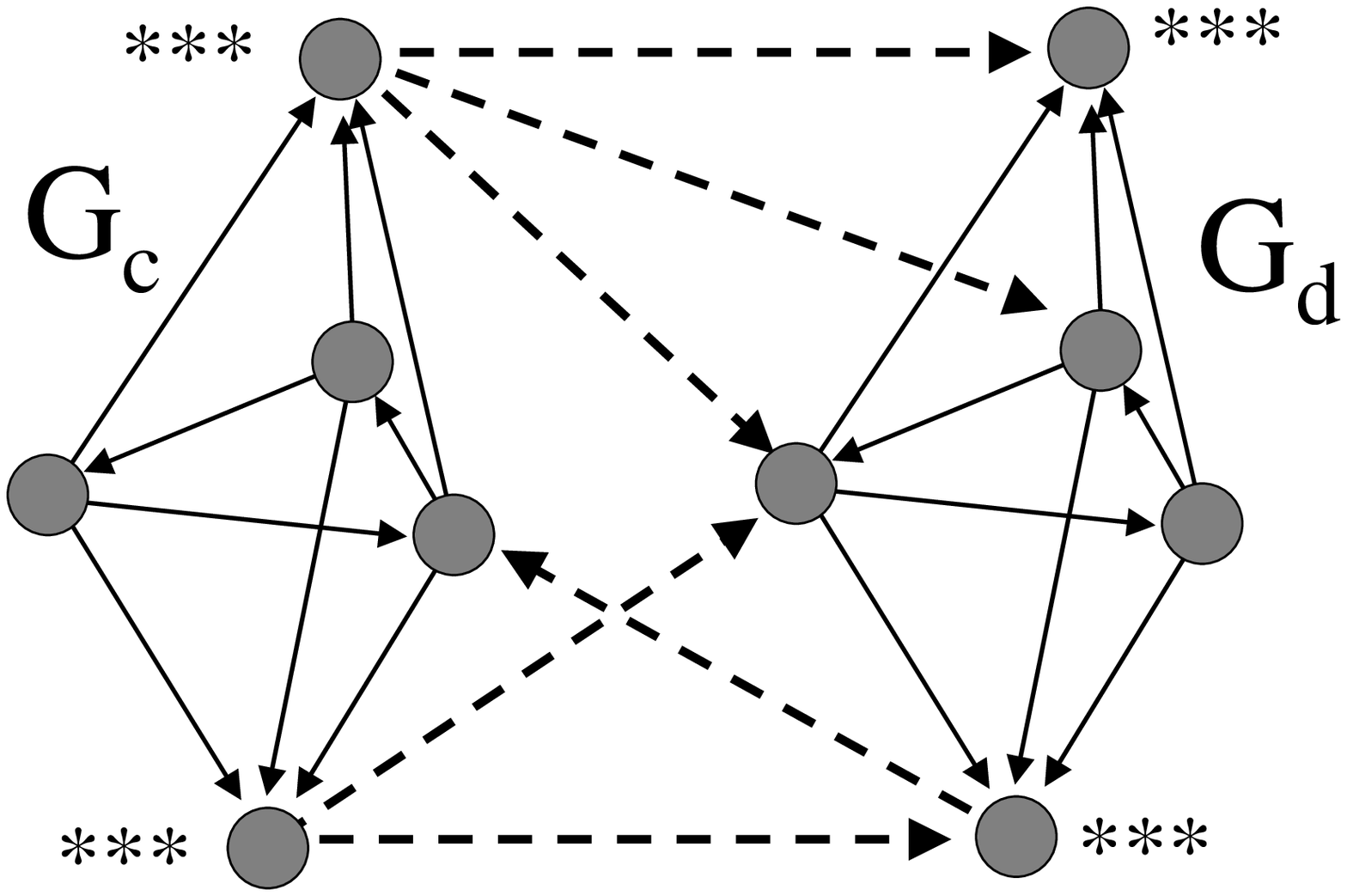}\\
(a) &&&(b)
\end{tabular}
\caption{Example of a {persistent but not} structurally persistent
meta-vertex $G_b$ which cannot be merged into a persistent or
rigid graph with the meta-vertex $G_a$, the latter {being
persistent but} having no DOF. (b) shows how two non-structurally
persistent meta-vertices can be merged into a structurally
persistent graph. The symbol \quotes{*} represents one DOF, and
the dashed edges are the edges of
$E_M$.}\label{fig:merge3D_non_struc_per}
\end{figure}

\begin{figure}
\centering
\includegraphics[scale = .35]{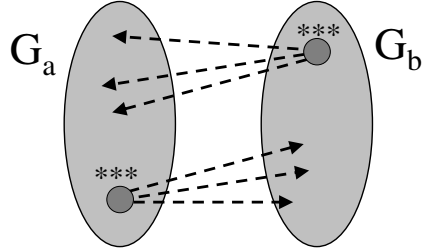}
\caption{$G_a$ and $G_b$ both have all their DOFs concentrated on
one leader. As a result they cannot be merged into a persistent
graph. The only way to add 6 edges leaving local DOFs is depicted
and does not lead to a rigid graph, because the overall graph is
not 3-connected. The symbol \quotes{*} represents one DOF, and the
dashed edges are the edges of
$E_M$.}\label{fig:merge3D_imposs2leader}
\end{figure}

Another problem appears when $G_a$ and $G_b$ each have one leader
(having three DOFs) and no other vertex has DOFs. Again, the
number of available DOFs is equal to the minimal number of edges
that should be added to obtain a rigid merged graph, but the only
way to add six edges leaving local DOFs does not lead to a rigid
graph. {One can indeed only add three edges leaving each leader as
shown in Fig. \ref{fig:merge3D_imposs2leader}. This results in a
graph that is not 3-connected and therefore not rigid by Theorem
\ref{thm:Laman3D}, as the removal of the two ex-leaders would
render the graph unconnected.} We have thus proved the following
condition:
\begin{prop}\label{prop:3D_imposs_merge_second}
If two persistent meta-vertices have each one leader (with 3 DOFs)
and no other DOF, they cannot be merged into a persistent graph
{by addition of interconnecting edges.}\NDofthmprop
\end{prop}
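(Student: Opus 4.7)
The plan is to combine three ingredients: Theorem \ref{thm:leavedof3}, the lower bound on the number of interconnection edges required for rigidity in $\Re^3$ (derived from Theorem \ref{thm:metalaman3D}), and the 3-connectivity requirement (iii) of Theorem \ref{thm:Laman3D}. Observe first that the hypothesis ``one leader with 3 DOFs and no other DOF'' forces every non-leader vertex inside its meta-vertex to have out-degree at least three; consequently each of $G_a$ and $G_b$ has at least four vertices and hence belongs to the class $N$ of Section \ref{sec:3DRig}.

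First I would invoke Theorem \ref{thm:leavedof3} to assume without loss of generality that every edge of $E_M$ leaves a vertex with a local DOF. Inside each meta-vertex the only such vertex is the leader, so every edge of $E_M$ has its tail at one of the two leaders $\ell_a$ or $\ell_b$. By Proposition \ref{prop:d+}, I may further assume no vertex has out-degree exceeding $3$ in the merged graph; since each leader starts with out-degree zero inside its meta-vertex, this means at most three edges of $E_M$ leave each leader, so $|E_M|\leq 6$. On the other hand, persistence implies rigidity, and rigidity of a merging with $|N|=2$, $|D|=|S|=0$ requires, by condition (i) of Theorem \ref{thm:metalaman3D}, at least $6|N|-6=6$ interconnection edges. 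Together these two bounds force $|E_M|=6$, with exactly three edges of $E_M$ leaving each leader.

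The final step is the 3-connectivity contradiction. Since every edge of $E_M$ is incident on $\{\ell_a,\ell_b\}$, removing this pair from the merged graph $G$ deletes all interconnection edges and leaves the two disjoint pieces $G_a\setminus\{\ell_a\}$ and $G_b\setminus\{\ell_b\}$. Hence $G$ fails the 3-connectivity condition (iii) of Theorem \ref{thm:Laman3D}, and therefore is not rigid, so not persistent. The main obstacle in writing the proof carefully is the counting step: one must simultaneously verify that the rigidity lower bound on $|E_M|$ and the DOF-budget upper bound both equal $6$, so that no flexibility remains in where to place the interconnecting edges; once this coincidence is pinned down, the 3-connectivity failure is transparent and corresponds exactly to the configuration in Fig.\ \ref{fig:merge3D_imposs2leader}.
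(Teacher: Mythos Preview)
Your proposal is correct and follows essentially the same route as the paper: reduce via Theorem~\ref{thm:leavedof3} to edges leaving local DOFs, pin $|E_M|=6$ by matching the rigidity lower bound against the DOF budget, and then observe that deleting the two leaders disconnects the graph, violating the 3-connectivity requirement of Theorem~\ref{thm:Laman3D}. Your write-up is in fact more explicit than the paper's (which argues this informally in the paragraph preceding the proposition), and your side observation that the hypothesis forces $|V_a|,|V_b|\geq 4$ is a useful sanity check ensuring both $G_a\setminus\{\ell_a\}$ and $G_b\setminus\{\ell_b\}$ are nonempty.
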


However, these are the only cases for which the argument {used in
establishing} Proposition \ref{prop:mergetwo2D} cannot be
generalized to establish an analogous property in $\Re^3$: {
\begin{prop}\label{prop:mergetwo3D}
Two persistent meta-vertices (each with three or more vertices)
can be merged into a persistent graph by addition of directed
connecting edges if and only if the sum of their DOFs is at least
6 and the DOFs are located on more than two vertices. At least six
edges are needed to perform this merging, {and merging can always
be done with exactly six edges and in such a way that the graph
obtained is structurally persistent and does not have all its DOFs
located on leaders.}
\end{prop}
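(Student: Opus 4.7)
The plan is to split the claim into necessity and sufficiency, the latter being handled by an explicit construction of $E_M$ via a case analysis on the distribution of degrees of freedom between $G_a$ and $G_b$. For necessity, I would first invoke Theorem \ref{thm:leavedof3} to restrict attention to $E_M$ whose edges all leave vertices carrying local DOFs. Applying Theorem \ref{thm:metalaman3D} with $|N|=2$ and $|D|=|S|=0$ shows that any rigid merging requires $|E_M|\geq 6$, so six edges are a lower bound; and since each of them consumes one local DOF, the total DOF count of $G_a\cup G_b$ must be at least six. If the DOFs were concentrated on at most two vertices, then, because in $\Re^3$ every vertex has at most three DOFs, the total would be at most six, with equality forcing both DOF-carrying vertices to be leaders. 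Depending on whether both leaders lie in one meta-vertex or one in each, Proposition \ref{prop:3D_imposs_merge} or Proposition \ref{prop:3D_imposs_merge_second} excludes a persistent merging, so the DOFs must be spread over more than two vertices.

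For sufficiency, I would construct explicitly a set $E_M$ of exactly six edges satisfying: (a) each edge leaves a vertex carrying a local DOF, with no vertex chosen as a tail more often than its DOF count; (b) at least three distinct vertices of each meta-vertex are incident to edges of $E_M$; (c) no vertex is incident to more than three edges of $E_M$; and (d) all but at most one of the leaders originally present in $G_a\cup G_b$ acquire an outgoing edge of $E_M$, while at least one vertex with positive DOF in the merged graph remains a non-leader. Condition (a) together with Theorem \ref{thm:leavedof3} reduces the persistence of the merged graph to its rigidity, and conditions (b)--(c), combined with the merging criterion from \cite{YuFidanAnderson:2006_ACC} recalled at the start of Section \ref{sec:3DRig}, secure that rigidity; condition (d) then yields structural persistence together with the DOF-location requirement. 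The construction proceeds by cases on $(D_a,D_b)$, where $D_x$ denotes the total DOF count of $G_x$: the regime $D_a,D_b\geq 3$ is the easiest, three tails being placed in each meta-vertex on DOF-carrying vertices and the heads being chosen on the opposite side so as to yield three incident vertices there, while the degenerate cases $D_a\in\{0,1,2\}$ (with $D_b\geq 6-D_a$) are handled by mixing tails and heads between the two meta-vertices.

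The main obstacle is this case analysis. The counting conditions of Theorem \ref{thm:metalaman3D} alone are not sufficient in $\Re^3$, since rigidity further demands 3-connectivity (condition (iii) of Theorem \ref{thm:Laman3D}), which in our setting amounts to each meta-vertex being touched by $E_M$ at three or more distinct vertices. Verifying that in every degenerate sub-case a feasible placement of the six tails and heads exists, while simultaneously distributing the remaining DOFs so as to preserve structural persistence and leave at least one DOF on a non-leader vertex, is the delicate combinatorial core of the proof; the configurations ruled out by Propositions \ref{prop:3D_imposs_merge} and \ref{prop:3D_imposs_merge_second} are precisely those for which no such placement exists, so once they are excluded by hypothesis the construction always goes through.
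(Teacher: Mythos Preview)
Your overall structure matches the paper's: necessity via Theorem~\ref{thm:leavedof3}, Theorem~\ref{thm:metalaman3D}, and Propositions~\ref{prop:3D_imposs_merge}--\ref{prop:3D_imposs_merge_second}; sufficiency via a case analysis on the distribution of the six DOFs between $G_a$ and $G_b$. That part is fine.

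The gap is in how you certify rigidity of your constructed $E_M$. You assert that conditions (b) and (c), together with the merging criterion from \cite{YuFidanAnderson:2006_ACC} recalled at the start of Section~\ref{sec:3DRig}, ``secure that rigidity''. But the paper states explicitly that those incidence conditions (six edges, at least three incident vertices on each side) are \emph{only necessary}: the double-banana is obtained from two tetrahedra by six interconnecting edges meeting four vertices on each side and yet is not rigid. The sentence you are invoking says merely that a rigid merging with those properties \emph{exists}; it does not say that every $E_M$ satisfying (b)--(c) yields a rigid graph. Adding the degree cap (c) does not repair this, and nothing in the paper claims it does.

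What the paper actually does in its exhaustive verification is use a constructive tool (its Lemma~\ref{lem:2op}, summarizing results of \cite{YuFidanAnderson:2006_ACC}): two operations, (v) and (e), each of which provably preserves rigidity when applied to a pair of rigid bodies. For every admissible DOF allocation it exhibits a sequence of such operations whose resulting undirected structure is therefore rigid, and then orients the six edges to match the required out-degree pattern. Your conditions (b)--(c) describe the \emph{outcome} of those constructions but are not themselves a sufficient criterion; to close the argument you must either invoke Lemma~\ref{lem:2op} (or an equivalent rigidity-preserving construction) in each case, or supply an independent proof that (b)--(c) force generic rigidity of the two-body-plus-six-bars framework, which is not available in the paper.
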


\begin{proof}
Consider two meta-vertices each having more than 2 vertices. It
follows from Theorem \ref{thm:leavedof3} that they can be merged
into a persistent graph if and only if it is possible to add
directed edges leaving local DOFs in such a way that the obtained
graph is rigid.\\

{Suppose first that the total number of available DOFs is 6. If
all these DOFs are located on two leaders, the two graphs satisfy
the conditions of either Proposition \ref{prop:3D_imposs_merge} or
Proposition \ref{prop:3D_imposs_merge_second}, so that they cannot
be merged into a persistent graph. If the 6 DOFs are located on
more than 2 vertices, an exhaustive verification (see Appendix)
show that the two graphs can always be merged into a rigid graph
by adding 6 edges, each leaving a vertex with a local DOF, with at
least one DOF for each edge. Note that this exhaustive
verification is needed as no sufficient condition for rigidity of
a graph obtained by connecting two rigid graphs is known which is
sufficiently weak to be helpful for this proof.}\\

If the total number of DOFs is larger than 6, they are located on
at least 3 vertices, as a vertex has at most 3 DOFs. It is
therefore possible to select a subset of 6 DOFs located on at
least 3 vertices, and to apply the result obtained above for 6
DOFs.\\

{There remains to prove that the merging can always be done in
such a way that the obtained graph does not have all its DOFs
located on leaders, or in other words {the obtained graph has}
only vertices with 0 or 3 DOFs (This also implies that the graph
obtained is structurally persistent, as the only persistent graphs
that are not structurally persistent are those with two leaders
and therefore no other DOF). Such a situation, i.e. the obtained
graph has only vertices with 0 or 3 DOFs, could only happen if
this graph has exactly 3 or 6 DOFs, and thus if 9 or 12 DOFs are
initially available, as the merge is done by addition of 6 edges.
A simple way of avoiding having all remaining DOFs on leaders is
then to select the 6 DOFs that are going to be removed in the
merging process in such a way that a number of DOFs different from
3 and 6 is left in each of the initial graphs.
At least one vertex has then indeed one or two DOFs.}\\
\end{proof}
}

In case at least one of the two meta-vertices has less than 3
vertices, an exhaustive consideration of all possible cases {(see
Appendix)} shows that the result still holds, but with a different
required number of edges in $E_M$ and therefore of available DOFs:
these minimal numbers are both equal to $\min \abs{E_M}$ in Table
\ref{tab:needed_edges} (for the merging of a graph $G_a(V_a,E_a)$
with a graph $G_b(V_b,E_b)$). {Observe that as in the
2-dimensional case, the merge can be done in such a way that the
number of missing DOFs is preserved, the number of missing DOFs
being defined in the same way as in Section \ref{sec:2DPer}, with
maximal number of DOFs being 6, 5 and 3 for meta-vertices of
respectively $N$, $D$ and $S$.} It is worth noting that even if
one or both of the meta-vertices are not structurally persistent,
it is possible to obtain a structurally persistent merged graph,
as {represented} in Fig. \ref{fig:merge3D_non_struc_per}(b). This
has already been observed in
\cite{YuHendrickxFidanAndersonBlondel:2005} for the
case where one meta-vertex is a single vertex graph.\\

{Consider now a collection of meta-vertices such that the total
number of vertices is at least 3. Unless the collection consists
in two meta-vertices satisfying the hypotheses of Proposition
\ref{prop:3D_imposs_merge} or \ref{prop:3D_imposs_merge_second},
all the graphs that compose it can be merged into one large
persistent graph by addition of
edges.}\\
\begin{prop}\label{prop:nec_and_suf-tomerge3D}
A collection of {persistent meta-vertices $N\cup D\cup S$ (with
$N,D,S$ as defined in the beginning of this section)} containing
in total {at least three vertices and} that does not consist of
only two meta-vertices satisfying the condition of {Proposition
\ref{prop:3D_imposs_merge} or \ref{prop:3D_imposs_merge_second}}
can be merged into a persistent graph if and only if the total
number of missing DOFs is no greater than 6, or equivalently if
the total number of {local} DOFs in $N\cup D\cup S$ is at least
$6\abs{N}+5\abs{D} +3\abs{S}-6$. At least $6\abs{N}+5\abs{D}
+3\abs{S}-6$ edges are needed to perform this merging. {Merging}
can always be done with exactly this number of edges, and in such
a way that the merged graph is structurally
persistent.\NDofthmprop
\end{prop}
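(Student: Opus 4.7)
The plan is to split the claim into necessity and sufficiency, handling sufficiency by induction on $\abs{N}+\abs{D}+\abs{S}$. Necessity follows quickly by chaining Theorems~\ref{thm:leavedof3} and~\ref{thm:metalaman3D}: any persistent merging can be replaced by one in which every edge of $E_M$ leaves a local DOF and the merged graph is rigid, and then Theorem~\ref{thm:metalaman3D} forces $\abs{E_M}\geq 6\abs{N}+5\abs{D}+3\abs{S}-6$, so that at least this many local DOFs must be available; equivalently, the number of missing DOFs is at most $6$.

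The base case $\abs{N}+\abs{D}+\abs{S}=2$ of the induction is Proposition~\ref{prop:mergetwo3D} when both meta-vertices have at least three vertices, and the exhaustive two-meta-vertex analysis underlying Table~\ref{tab:needed_edges} in the other cases. Since the hypothesis of the present proposition explicitly excludes the pair configurations of Propositions~\ref{prop:3D_imposs_merge} and~\ref{prop:3D_imposs_merge_second}, every surviving base-case pair admits a merge with exactly $6\abs{N}+5\abs{D}+3\abs{S}-6$ edges that preserves the number of missing DOFs and can be chosen structurally persistent and not with all DOFs on leaders.

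For the inductive step with $k:=\abs{N}+\abs{D}+\abs{S}\geq 3$, the plan is to select two meta-vertices $G_a, G_b$, merge them via the base case into a new meta-vertex $G_{ab}$, and apply the induction hypothesis to the resulting collection of $k-1$ meta-vertices. Because the merge uses exactly the minimum number of edges from Table~\ref{tab:needed_edges}, the total number of missing DOFs is preserved and the DOF bound transfers to the smaller collection. The delicate point, which I expect to be the main obstacle, is to pick $(G_a,G_b)$ so that (i) they are not a forbidden pair of Proposition~\ref{prop:3D_imposs_merge} or~\ref{prop:3D_imposs_merge_second}, and (ii) the replacement $G_{ab}$ does not itself form a forbidden pair with the unique remaining meta-vertex when $k=3$. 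I plan to dispose of this by a short case analysis: whenever at least one meta-vertex carries a local DOF on a non-leader, that meta-vertex can be paired with essentially any other to avoid both obstructions; in the opposite regime, where every meta-vertex has all its DOFs concentrated on its leader(s), the per-vertex cap of $3$ DOFs combined with the global lower bound $6\abs{N}+5\abs{D}+3\abs{S}-6$ leaves only a handful of configurations, which can be checked directly.

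Finally, the structural persistence of the overall output follows essentially for free: the base-case guarantee from Proposition~\ref{prop:mergetwo3D} that $G_{ab}$ can always be produced structurally persistent and without all DOFs on leaders ensures that the incremental process never introduces the two-leader situation that would spoil structural persistence of the final graph or create a Proposition~\ref{prop:3D_imposs_merge}-style obstruction at a later inductive step.
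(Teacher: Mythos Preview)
Your proposal is correct and follows essentially the same route as the paper: necessity via Theorems~\ref{thm:leavedof3} and~\ref{thm:metalaman3D}, and sufficiency by recursive pairwise merging through Proposition~\ref{prop:mergetwo3D} (and its small-meta-vertex extension) while preserving the number of missing DOFs, with a case analysis to dodge the obstructions of Propositions~\ref{prop:3D_imposs_merge} and~\ref{prop:3D_imposs_merge_second}. The paper's only organizational difference is that rather than selecting a safe pair at each inductive step, it isolates the at most one ``problematic'' meta-vertex upfront (the one with no DOF, or one of the two whose DOFs sit on a single leader), merges all remaining full-DOF meta-vertices first into a structurally persistent graph whose DOFs are not all on leaders, and absorbs the isolated meta-vertex last.
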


{
\begin{proof}
The proof is similar to the one of Proposition
\ref{prop:nec_and_suf-tomerge2D}. If a pair of meta-vertices can
be merged into a persistent graph, this merging can be done in
such a way that the number of missing DOFs is preserved, and by
adding only edges leaving vertices with local DOFs (with at most
one edge for each DOF). Doing this recursively, we eventually
obtain a single persistent graph that has the same number of
missing DOFs as the initial collection of graphs. The number of
added edges is then equal to the number of DOFs that have
disappeared during the
merging process, that is $6\abs{N}+5\abs{D} +3\abs{S}-6$.\\

There remains to prove that these mergings can actually be done,
and that the obtained graph is structurally persistent. By
Proposition \ref{prop:mergetwo3D} (and its extension to graphs
with 1 or 2 vertices), when their number of missing DOFs is
smaller than 6, two persistent graphs can always be merged into a
structurally persistent graph, unless either one of them is not
structurally persistent while the other has no DOF (case of
Proposition \ref{prop:3D_imposs_merge}), or both of them have one
leader and no other DOF (case of Proposition
\ref{prop:3D_imposs_merge_second}). In these two cases, the two
\quotes{problematic} meta-vertices have at least three vertices
each.\\

Suppose first that one meta-vertex has no DOF (and that the rest
of the meta-vertices collection does not consist in one single non
structurally persistent meta-vertex). Then since the total number
of missing DOF is 6, no other meta-vertex has a missing DOF, and
by hypothesis there are at least two other meta-vertices (or
possibly exactly one structurally persistent meta-vertex). It
follows then from successive applications of Proposition
\ref{prop:mergetwo3D} that they all can be merged into a
structurally persistent graph that still does not have any missing
DOF. This latter graph can then be merged with the graph that has
no DOF, and the graph obtained is
also structurally persistent.\\

Suppose now that two meta-vertices have exactly one leader and no
other DOF. It follows then from the hypotheses that there is at
least one other meta-vertex in the collection. And again, no other
meta-vertex has any missing DOF. Temporarily isolating one of the
meta-vertices with one leader and no other DOF, it follows again
from successive applications of Proposition \ref{prop:mergetwo3D}
that all other graphs can be merged into a persistent graph that
does not have all its DOFs located on one single leader, and this
graph can then be merged with the temporarily isolated graph into
a structurally persistent graph.\\
\end{proof}
}

As in the two-dimensional case, a merged graph is an
{edge}-optimal persistent merging if and only if it is an
{edge}-optimal rigid merging and all edges in $E_M$ (such as
defined in the beginning of this subsection) leave local DOFs. The
proof of this is an immediate generalization of Theorem
\ref{thm:minper-minrig}. However, due to the absence of necessary
and sufficient conditions {allowing a combinatorial checking of}
the rigidity of a graph or of a merged graph in $\Re^3$, the
result cannot be expressed in a purely combinatorial way. Since
the number of edges in $E_M$ in an edge-optimal rigid merging is
fixed, the above criterion can be re-expressed as
\begin{thm}
{$G=\prt{\bigcup_{N,D,S} G_i}\cup E_M$ (with $N,D,S$ as defined at
the beginning of this section and with all $G_i$ persistent)}
containing in total at least three vertices is an {edge}-optimal
persistent merging in
$\Re^3$ if and only if {the following conditions all hold:}\\
(i) $G$ is rigid. \\
(ii) All edges of $E_M$ leave local DOFs.\\
(iii) $\abs{E_M}= 6\abs{N}+5\abs{D}+3\abs{S}-6$.\\
\NDofthmprop
\end{thm}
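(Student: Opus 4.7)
The plan is to deduce the statement by combining two preceding results, exactly mirroring the derivation used in the two-dimensional setting (Theorem 4 together with Theorem \ref{thm:edge-optimal_rig2d}). The first ingredient is the three-dimensional analog of Theorem \ref{thm:minper-minrig}, asserted in the paragraph immediately preceding this theorem, namely that $G$ is an edge-optimal persistent merging if and only if it is an edge-optimal rigid merging and every edge of $E_M$ leaves a vertex with a local DOF. The second ingredient is the three-dimensional analog of Theorem \ref{thm:edge-optimal_rig2d}, stated just after Theorem \ref{thm:metalaman3D}, namely that $G$ is an edge-optimal rigid merging if and only if $G$ is rigid and $\abs{E_M} = 6\abs{N} + 5\abs{D} + 3\abs{S} - 6$. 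Substituting the second characterization into the first yields precisely conditions (i), (iii), and (ii).

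The first ingredient itself is proved by repeating the three-case argument of Theorem \ref{thm:minper-minrig} almost verbatim. First, if $G$ is persistent but some edge of $E_M$ leaves a vertex with no local DOF, then the originating vertex has out-degree exceeding $3$ in $G$; Proposition \ref{prop:d+} (valid in $\Re^3$) then says that removing the edge preserves persistence, so $G$ was not edge-optimal. Second, if $G$ is persistent with all edges of $E_M$ leaving local DOFs but $G$ is not an edge-optimal rigid merging, then some edge of $E_M$ can be removed while retaining rigidity; the remaining edges still leave local DOFs, so Theorem \ref{thm:leavedof3} upgrades rigidity to persistence, again contradicting edge-optimality of persistence. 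Third, conversely, if $G$ is an edge-optimal rigid merging with all edges of $E_M$ leaving local DOFs, then $G$ is persistent by Theorem \ref{thm:leavedof3}, and removing any edge of $E_M$ destroys rigidity and hence, since rigidity is necessary for persistence, destroys persistence as well; so $G$ is an edge-optimal persistent merging.

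The second ingredient was already justified in the text: Theorem \ref{thm:metalaman3D}(i) forces $\abs{E_M} \geq 6\abs{N}+5\abs{D}+3\abs{S}-6$ for any rigid merging, and if equality is attained no proper subset of $E_M$ can produce a rigid merging, while conversely if the inequality is strict the extraction of a suitable $E_M' \subseteq E_M$ of the critical size (supplied by Theorem \ref{thm:metalaman3D} together with the final clause about rigidity of $\bigl(\bigcup_{N,D,S} G_i\bigr)\cup E_M'$) shows that some edge is removable without destroying rigidity.

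I do not anticipate any serious obstacle: the proof is essentially bookkeeping, repackaging the three-dimensional analogs of Theorem \ref{thm:minper-minrig} and Theorem \ref{thm:edge-optimal_rig2d} into one combined statement. The only mildly delicate point is verifying in step two of the first ingredient that after removing one edge of $E_M$ the hypothesis of Theorem \ref{thm:leavedof3} still holds, i.e. that every remaining edge of $E_M$ still leaves a local DOF; this is immediate because removing an outgoing edge can only \emph{increase} the number of local DOFs at the originating vertex.
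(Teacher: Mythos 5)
Your proposal is correct and follows essentially the same route as the paper, which derives this theorem by combining the three-dimensional analog of Theorem \ref{thm:minper-minrig} (edge-optimal persistent merging $\Leftrightarrow$ edge-optimal rigid merging with all $E_M$-edges leaving local DOFs) with the $\Re^3$ characterization of edge-optimal rigid mergings deduced from Theorem \ref{thm:metalaman3D}. You merely spell out details the paper leaves as ``immediate generalizations,'' and you correctly identify the one point where the 3D argument differs from 2D, namely that the rigidity of $\prt{\bigcup_{N,D,S} G_i}\cup E_M'$ must be invoked from the final clause of Theorem \ref{thm:metalaman3D} rather than from sufficiency of the counting conditions.
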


{Again, an efficient way to obtain an {edge}-optimal persistent
merging from a collection of meta-vertices satisfying the
hypotheses of Proposition \ref{prop:nec_and_suf-tomerge3D} is to
first merge two of them and then to iterate, as in the discussion
of Propositions \ref{prop:nec_and_suf-tomerge2D} and
\ref{prop:nec_and_suf-tomerge3D}.}

\section{Conclusions}\label{sec:concl}

We have analyzed the conditions under which a formation resulting
from the merging of several persistent formations is itself
persistent. Necessary and sufficient conditions were found to
determine which collections of persistent formations could be
merged into a larger persistent formation. We first treated these
issues in $\Re^2$. Our analysis was then generalized to $\Re^3$
and to structural persistence, leading to somewhat less powerful
results. {This is especially the case for those which rely on the
sufficient character of Laman's conditions for rigidity in $\Re^2$
{(Theorem \ref{thm:Laman})}, no equivalent condition being known
in $\Re^3$.} Following this work, we plan to {develop} systematic
ways to build all possible optimally merged persistent formations,
similarly to what has been done for minimally persistent
formations \cite{HendrickxFidanYuAndersonBlondel:2006} and for
minimally rigid merged formations
\cite{YuFidanHendrickxAnderson:2006_CDC}. These references canvas
generalizations of the Henneberg sequence concept
\cite{Henneberg:11,TayWhiteley:85} for building all minimally
rigid graphs in two dimensions.

\appendix

In this appendix, we complete the proof of Proposition
\ref{prop:mergetwo3D} on the merging of persistent meta-vertices
in a 3-dimensional space, {and extend this proposition to cases
where one of the meta-vertices has less than 3 vertices}. We have
to prove that two persistent graphs (in a three-dimensional space)
$G_a$ and $G_b$ having in total 6 DOFs located on at least three
vertices can always be merged into a rigid graph by addition of
six edges leaving vertices with local DOFs, with at least one
local DOF for each added edge.\\

For this purpose, we use the following lemma, which summarizes
results obtained in \cite{YuFidanAnderson:2006_ACC}.
\begin{lem}\label{lem:2op}
Let $G_a$ and $G_b$ be two (initially distinct) rigid graphs each
with three or more vertices. Performing a sequence of three or
more operations selected among the two following types of
operations results in merging $G_a$ and $G_b$ into a rigid graph
by addition of 6 edges.\\
\emph{Operation (v):} Taking a vertex $i$ of $G_a$ not connected
yet to any vertex of $G_b$, and connecting it to $3-t$ vertices of
$G_b$, where $t$ is the number of operations already performed.\\
\emph{Operation (e):} Taking a vertex $i$ of $G_a$ not connected
yet to any vertex of $G_b$, and an edge $(k,j)$ with $k\in V_A$
and $j\in V_B$. Replacing the edge $(k,j)$ by $(i,j)$ and
connecting $i$ to $2-t$ other vertices in $G_b$, where $t$ is the
number of operations already performed.\\
\end{lem}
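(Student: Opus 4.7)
My plan is to analyze the merged graph through its rigidity matrix, arguing inductively that each operation in the sequence adds rows whose joint rank increase accounts for the six relative degrees of freedom between $G_a$ and $G_b$. As a baseline, since $G_a$ and $G_b$ are each rigid in $\Re^3$, their disjoint union has rigidity matrix of rank $3\abs{V}-12$ where $V = V_a \cup V_b$; the $12$-dimensional nullspace decomposes into the $6$ global rigid motions of the combined system and $6$ ``relative'' degrees of freedom of $G_b$ with respect to $G_a$. A merged graph with six added cross-edges is rigid exactly when those edges contribute rows that remove all six relative DOFs, so I would prove rigidity by showing this rank removal happens cumulatively through the operation sequence.

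For Operation (v) at step $t$, the chosen vertex $i \in V_a$ is not yet incident to any cross-edge and is connected to $3-t$ distinct vertices of $V_b$. The case $t=0$ is exactly a 3D Henneberg $0$-extension attaching $i$ to $G_b$ via three edges: this pins the position of $i$ in $V_b$'s frame, and since $i$ is already a rigid point of $G_a$, this weld eliminates three of the relative DOFs. At subsequent steps, one or two vertices of $V_a$ are already rigidly attached to $V_b$, so the $3-t$ new edges only need to remove the remaining $3-t$ relative DOFs; a genericity argument, using the rigidity of $G_a$ and $G_b$ and the generic positions of previously attached vertices, shows that the newly added rows in the rigidity matrix are linearly independent of the rows already present.

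Operation (e) at step $t$ is treated as a 3D Henneberg $1$-extension applied to the partial merged graph: because both $k$ and $i$ lie inside the rigid $G_a$, the row for the removed edge $(k,j)$ is generically expressible as a linear combination of the new row $(i,j)$ together with rows internal to $G_a$, so the replacement preserves the rank of the rigidity matrix; the $2-t$ further edges from $i$ then contribute the same rank increase they would in Operation (v), producing the same net effect of $3-t$ relative DOFs eliminated. The $3$-connectivity demanded by Theorem \ref{thm:Laman3D}(iii) follows from the $3$-connectivity of both $G_a$ and $G_b$ (which is implied by their rigidity) together with the fact that, by construction, at least three distinct vertices of each meta-vertex are incident to cross-edges at the end of the sequence.

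The main obstacle I expect is the genericity and independence argument at each step, particularly for Operation (e), where one must verify that the removed edge's row is genuinely recovered by the new row together with edges internal to $G_a$, and that no hidden linear dependence among the accumulated new and pre-existing rows collapses the expected rank jump. The known preservation of generic rigidity under Henneberg $0$- and $1$-extensions in $\Re^3$, combined with the rigidity of the two meta-vertices, should dispatch this; but a careful bookkeeping across mixed sequences of operations (v) and (e), and tracking which vertices of $V_a$ are ``pinned'' to $V_b$ after each operation, is where the technical substance of the argument lies.
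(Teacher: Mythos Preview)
The paper does not prove Lemma~\ref{lem:2op}; it is imported verbatim from \cite{YuFidanAnderson:2006_ACC} with the remark that it ``summarizes results obtained'' there, and is then used as a black box in the Appendix constructions. So there is no in-paper proof to compare against; your proposal is an attempt to reconstruct what that reference presumably contains.

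Your rigidity-matrix strategy is the natural one, and the framing in terms of eliminating the six relative DOFs between the two rigid bodies is exactly how body--bar arguments proceed. Two remarks, however. First, your appeal to Theorem~\ref{thm:Laman3D}(iii) is misplaced: that theorem gives only \emph{necessary} conditions in $\Re^3$, so checking 3-connectivity contributes nothing toward establishing rigidity. If your rank argument succeeds, rigidity (and hence 3-connectivity) follows directly; there is no separate combinatorial verification step to perform. Second, your claim that ``at least three distinct vertices of each meta-vertex are incident to cross-edges'' is guaranteed for $G_a$ by the ``not connected yet'' clause, but the lemma as stated does not force the targets in $G_b$ to be distinct across operations. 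Either this is an implicit genericity assumption (which the figures in the Appendix respect), or it must be added as a hypothesis; without it the merged graph can fail to be 3-connected and hence fail to be rigid, so your rank argument would not go through in that degenerate case.

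The part you correctly flag as delicate --- that for Operation~(e) the row of the deleted cross-edge $(k,j)$ lies in the span of the new row $(i,j)$ together with rows internal to $G_a$ --- is the real content. This is not the standard Henneberg 1-extension (the vertex $i$ is not new), but rather a bar-substitution within a rigid body: since $G_a$ is generically rigid, any infinitesimal flex already forces $\delta p_i$ and $\delta p_k$ to differ by the restriction of a common infinitesimal rigid motion of $G_a$, and one must show this makes the two cross-edge constraints interchangeable modulo the span of $G_a$'s rows. That argument can be made precise (it is essentially the body--bar replacement lemma), but it needs to be stated rather than asserted.
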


Without loss of generality, we suppose that $G_a$ has at least as
many DOFs as $G_b$. The partition of DOFs can thus be 6-0, 5-1,
4-2 or 3-3. In the sequel, we prove the result for each of these
particular cases, starting with $G_a$ having 6 DOFs.\\

It follows from Lemma \ref{lem:2op} that the merged graph
represented in Fig. \ref{fig:gagb-60-321}(a) is rigid. It can
indeed be obtained by three applications of the operation (v).
Moreover, one can see in Fig. \ref{fig:gagb-60-321}(b) that
directions can be given to the connecting edges in such a way that
the out-degree distribution (with respect to the connecting edges)
is $(3,2,1)$, that is one vertex of $G_a$ is left by three
connecting edges, one by two, and one by one. Suppose now that
$G_a$ is a persistent graph with 6 DOFs with a DOF allocation
$(3,2,1)$, that is a persistent graph having one vertex having 3
DOFs, one 2 DOFs, and one 1 DOF. Then it can be merged with $G_b$
into a rigid graph by adding 6 edges leaving vertices with local
DOF (with one DOF for each edge). It suffices indeed to take the
edges represented in Fig. \ref{fig:gagb-60-321}(b), identifying
each vertex with $\delta$ DOFs with a vertex left by $\delta$
connecting
edges.\\

\begin{figure}
\centering
\begin{tabular}{cc}
\includegraphics[scale= 0.5]{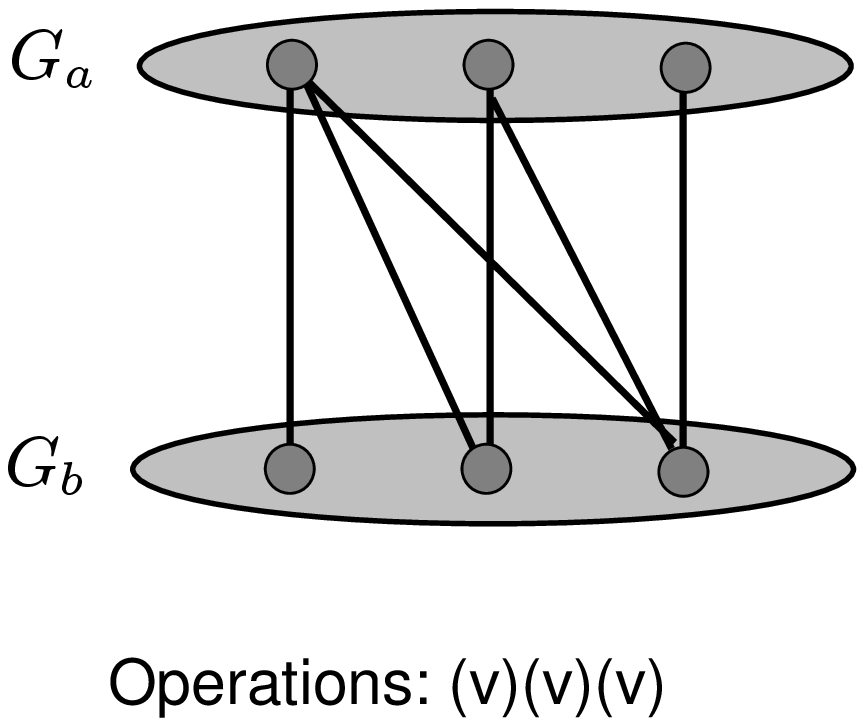}
&
\includegraphics[scale= 0.5]{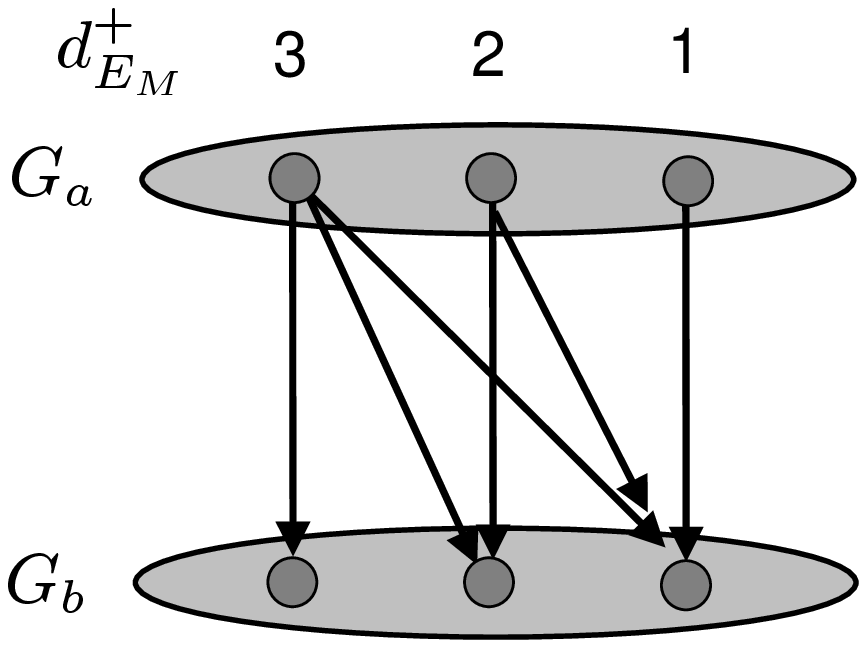}
\\
(a)&(b)
\end{tabular}
\caption{(a) represents a rigid merged graph obtained by
performing three operations (v). It is shown in (b) how directions
can be given to the edges in such a way that three vertices of
$G_a$ are left by respectively 3, 2 and 1 edges.}
\label{fig:gagb-60-321}
\end{figure}

We now treat a DOF allocation $(2,2,2)$. It follows again from
Lemma \ref{lem:2op} that the merged graph represented in Fig.
\ref{fig:gagb-60-222}(a) is rigid, as it can be obtained by two
applications of the operation (v) followed by one application of
operation (e). Moreover, Fig. \ref{fig:gagb-60-222}(b) shows that
directions can be assigned to the edges in such a way that the
out-degree distribution (with respect to the connecting edges) is
$(2,2,2)$. For the same reason as above, $G_a$ can thus be merged
with $G_b$ into a rigid graph by adding 6 edges leaving vertices
with local DOF (with one DOF for each edge) if its DOF
distribution is $(2,2,2)$.\\

\begin{figure}
\centering
\begin{tabular}{cc}
\includegraphics[scale= 0.5]{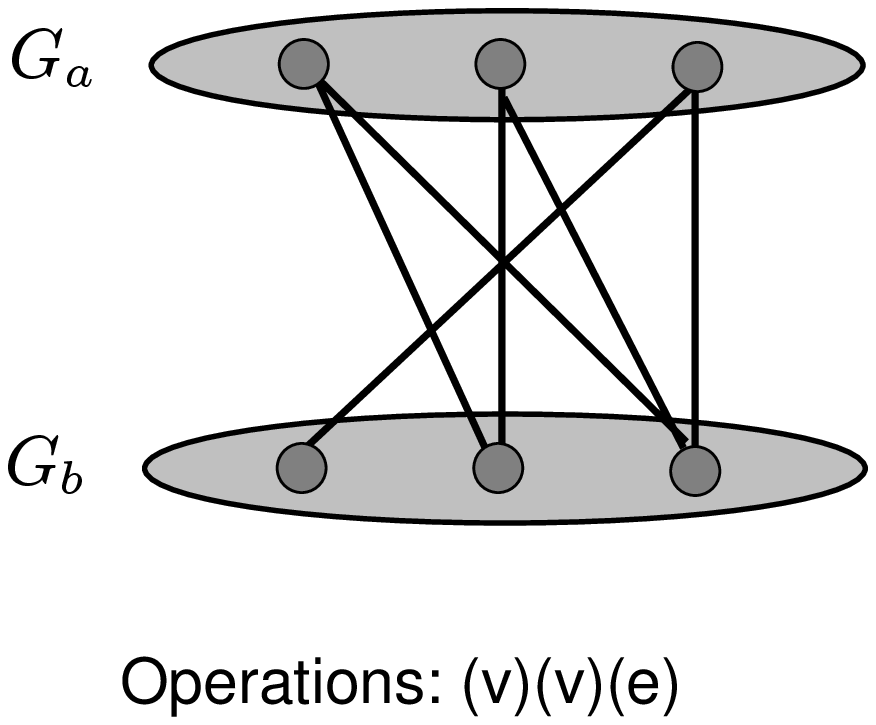}
&
\includegraphics[scale= 0.5]{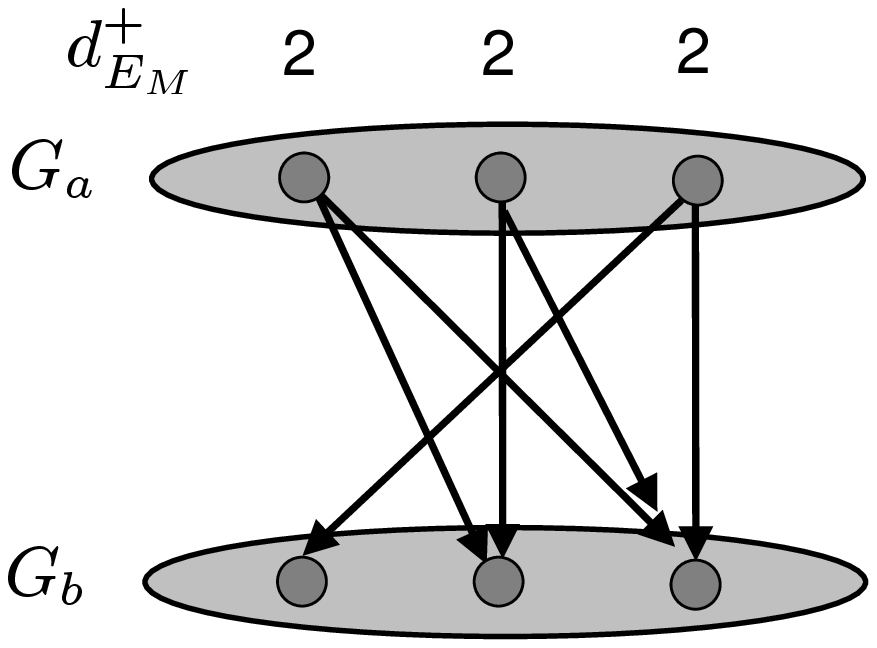}
\\
(a)&(b)
\end{tabular}
\caption{(a) represents a rigid merged graph obtained by
performing three operations (v). It is shown in (b) how directions
can be given to the edges in such a way that the three vertices of
$G_a$ are each left by 2 edges.} \label{fig:gagb-60-222}
\end{figure}

Next we show that such construction can be obtained in all other
cases, except those where the 6 DOFs are all located on two
vertices. When $G_a$ has 6 DOFs, the remaining possible DOF
distributions are $(3,1,1,1)$, $(2,2,1,1)$, $(2,1,1,1,1)$ and
$(1,1,1,1,1,1)$, the case $(3,3)$ {does not satisfy the}
hypotheses. The construction in these four cases are obtained by
performing the operation (e) of Lemma \ref{lem:2op} (up to three
times) on the constructions detailed above for $(3,2,1)$ and
$(2,2,2)$. They are represented in Fig. \ref{fig:gagb-60-others}\\

\begin{figure}
\centering
\begin{tabular}{cc}
\includegraphics[scale= 0.4]{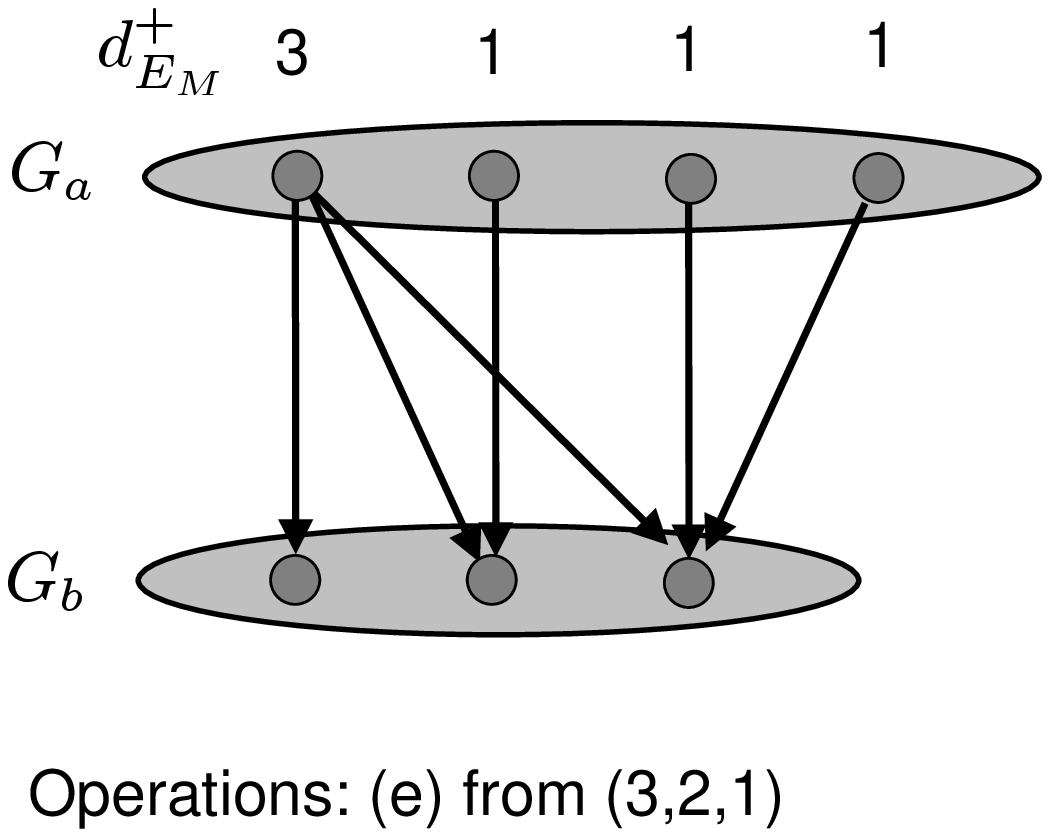}&
\includegraphics[scale= 0.4]{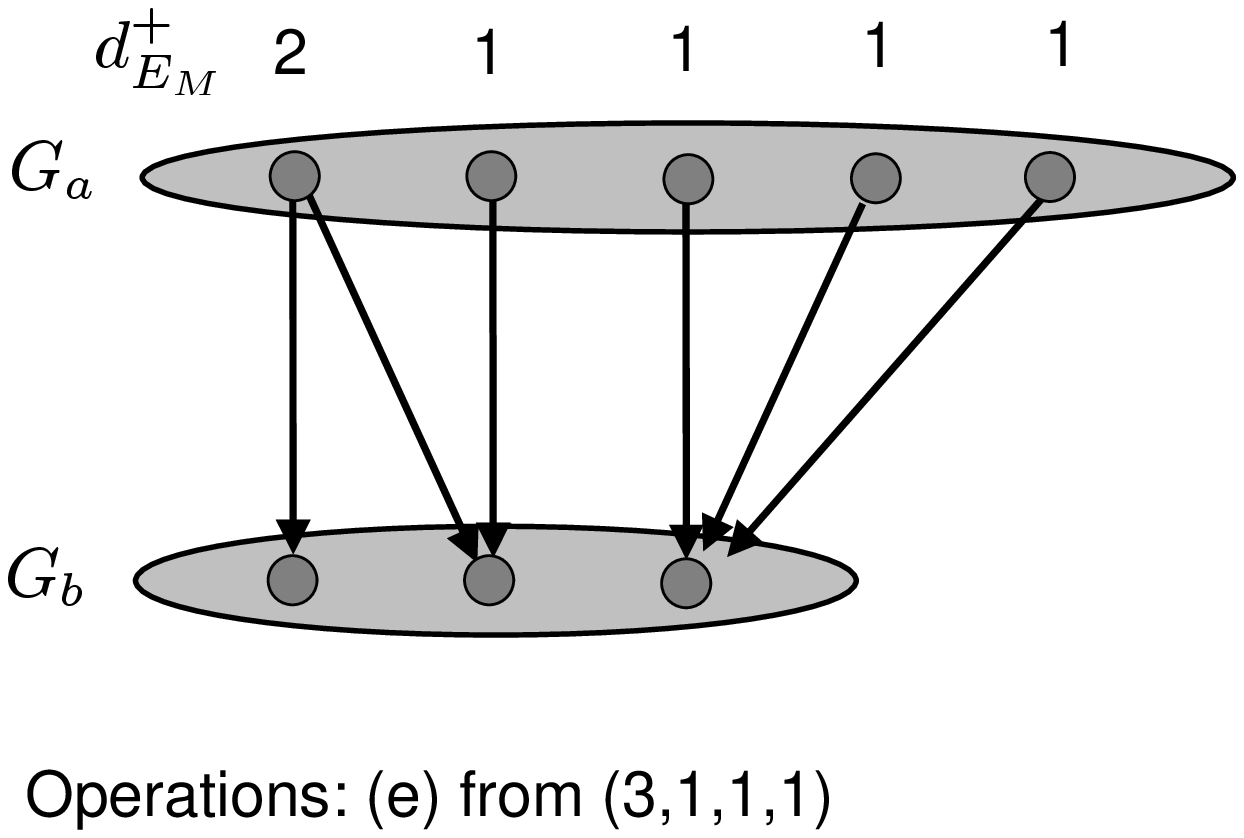}\\
\includegraphics[scale= 0.4]{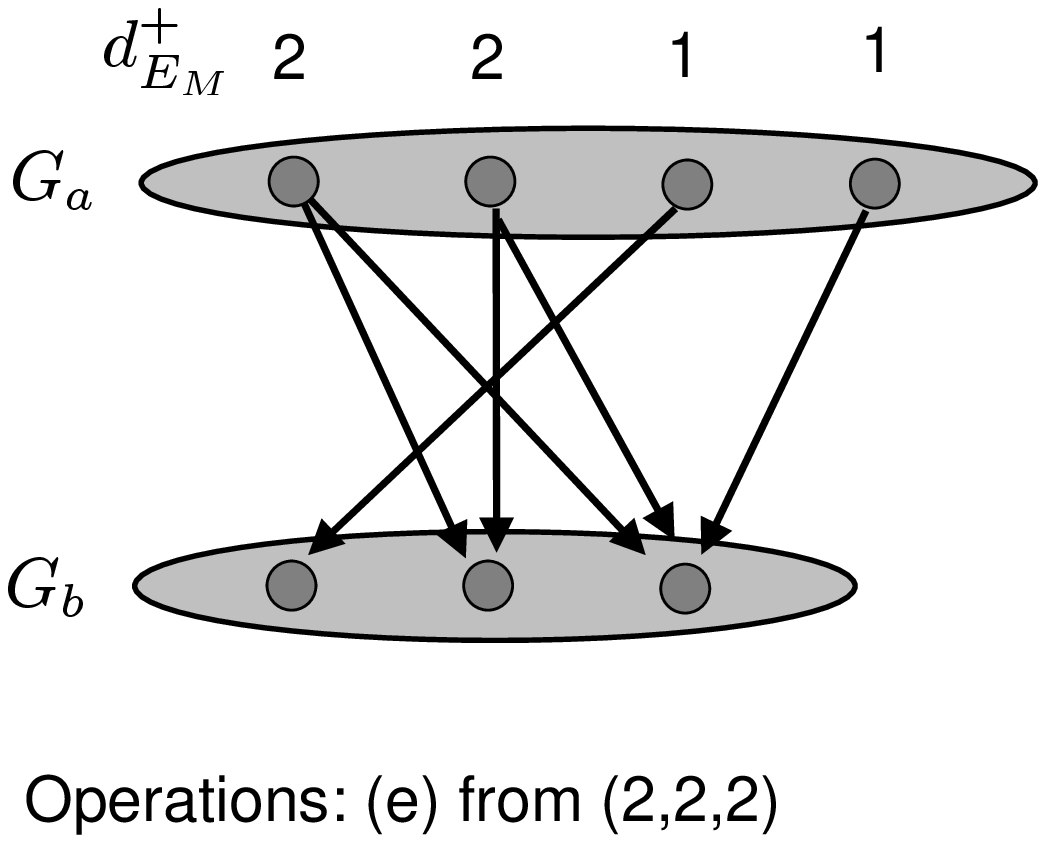}&
\includegraphics[scale= 0.4]{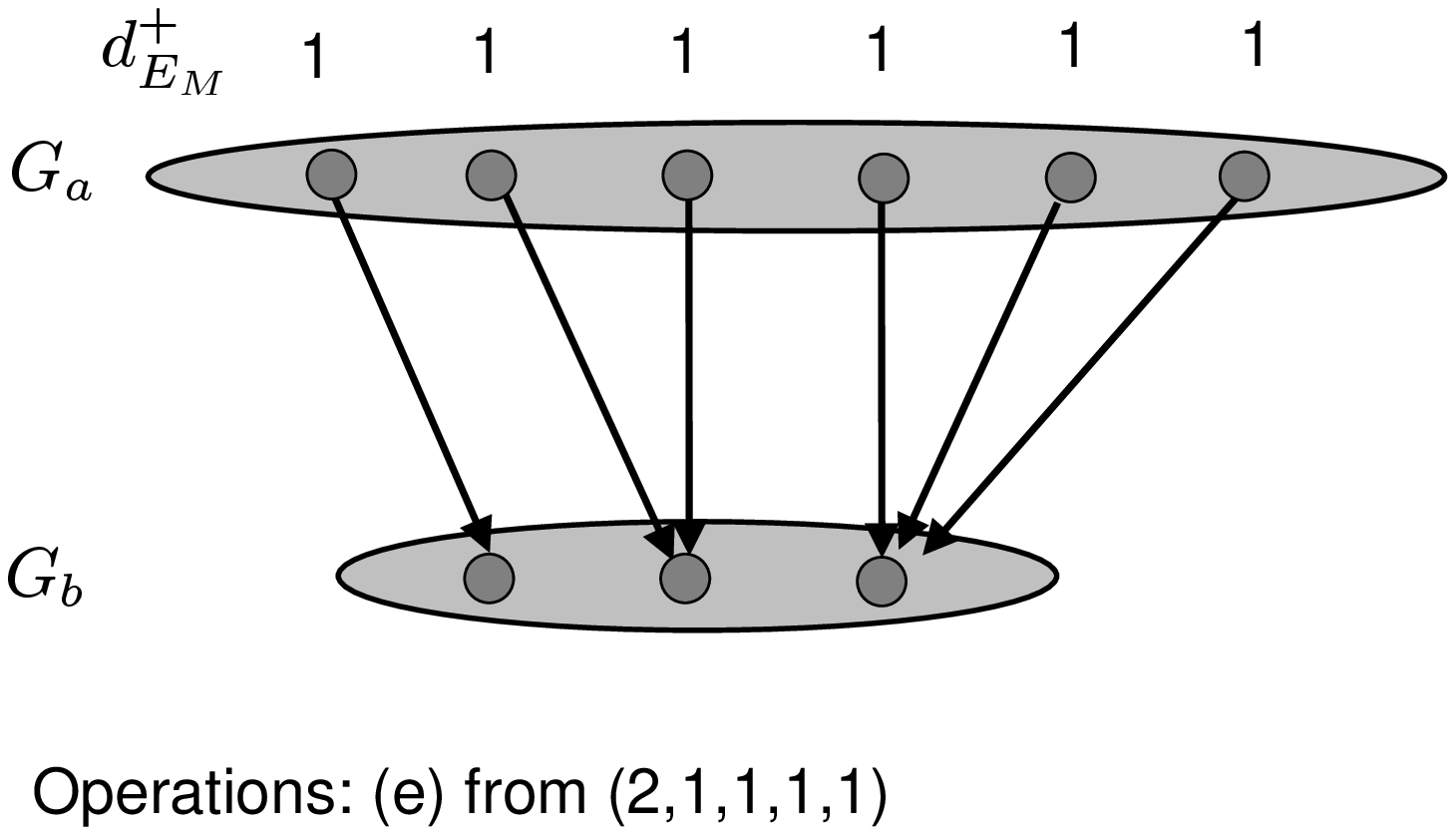}
\end{tabular}
\caption{Representations of how a rigid graph can be obtained by
merging two persistent graphs $G_a$ and $G_b$ where $G_b$ has no
DOF and where the DOF allocation of $G_a$ is $(3,1,1,1)$,
$(2,2,1,1)$, $(2,1,1,1,1)$ or $(1,1,1,1,1,1)$. The operations (v)
and (e) used to obtain the structure are also mentioned. }
\label{fig:gagb-60-others}
\end{figure}

If $G_a$ has 5 DOFs and $G_b$ one DOF, the required construction
can always be obtained from one of the construction for the case
where $G_a$ has 6 DOFs. {It suffices indeed to use one of the
constructions already provided by temporarily adding one vertex
with one DOF to the distribution of 5 DOFs in $G_a$, executing the
appropriate construction from the group above, and then reversing
the direction of the edge leaving a vertex with one DOF, as shown
in Fig. \ref{fig:gagb-51} for a DOF distribution
$(3,2)$.}\\

\begin{figure}
\centering
\includegraphics[scale= 0.5]{GAGB-60-321b.eps}
\includegraphics[scale= 0.5]{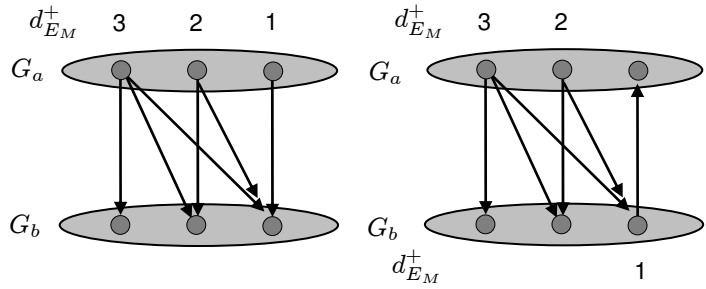}
\caption{Representation of how the construction for a DOF
partition 5-1 between $G_a$ and $G_b$ can be obtained from a
construction for a partition 6-0.} \label{fig:gagb-51}
\end{figure}

Suppose now that $G_a$ has 4 DOFs, and $G_b$ 2 DOFs. Then the
possible DOF distribution for $G_a$ are $(3,1)$, $(2,2)$,
$(2,1,1)$ and $(1,1,1,1)$. For $G_b$, they are $(2)$ and $(1,1)$.
The construction proving the result for these eight cases are
shown in Fig. \ref{fig:gagb-42}.\\

\begin{figure}
\centering
\includegraphics[scale= 0.4]{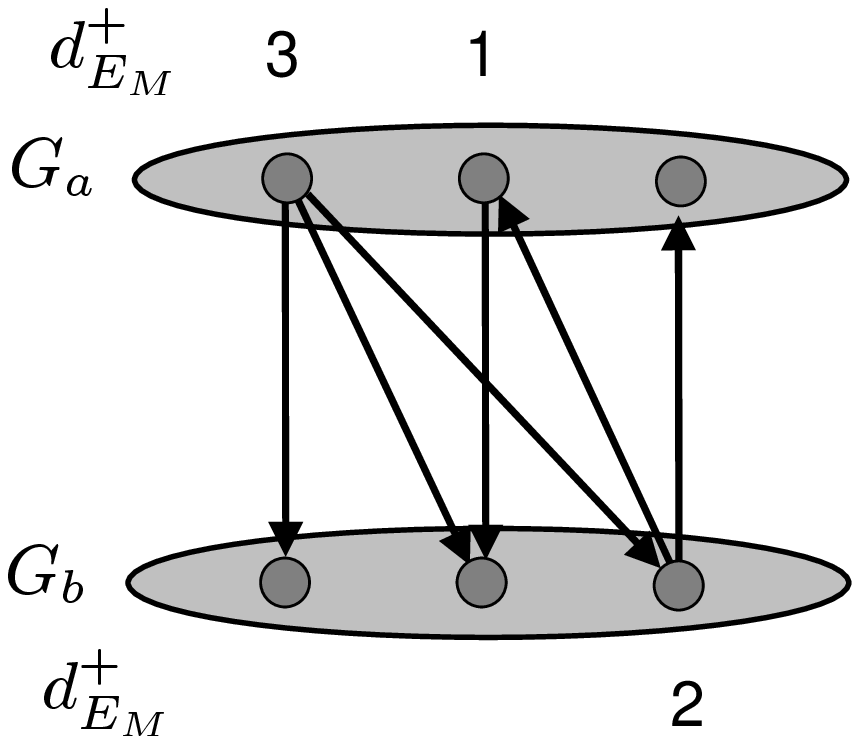}
\includegraphics[scale= 0.4]{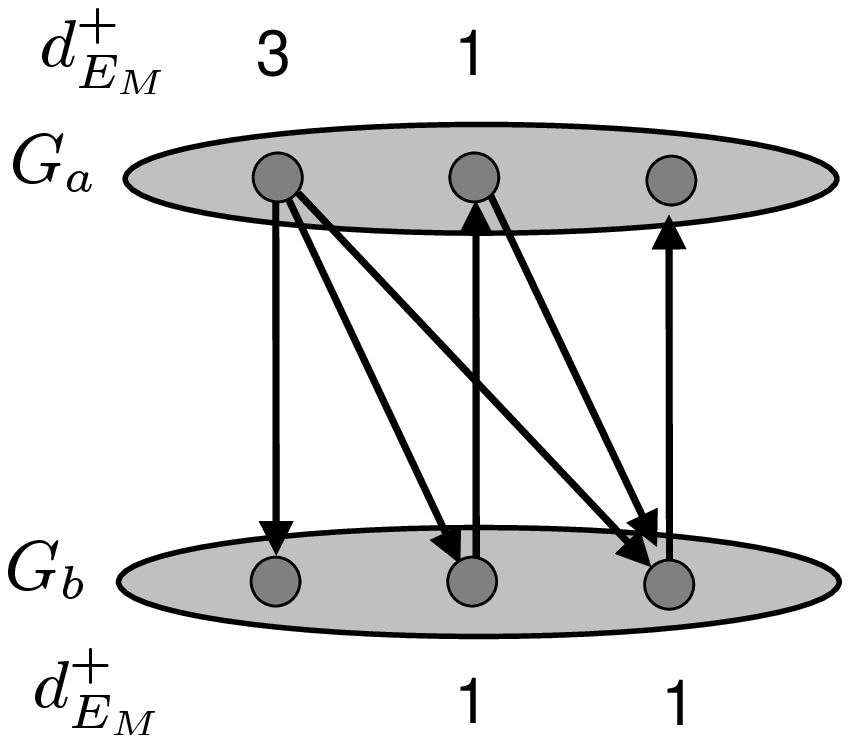}
\includegraphics[scale= 0.4]{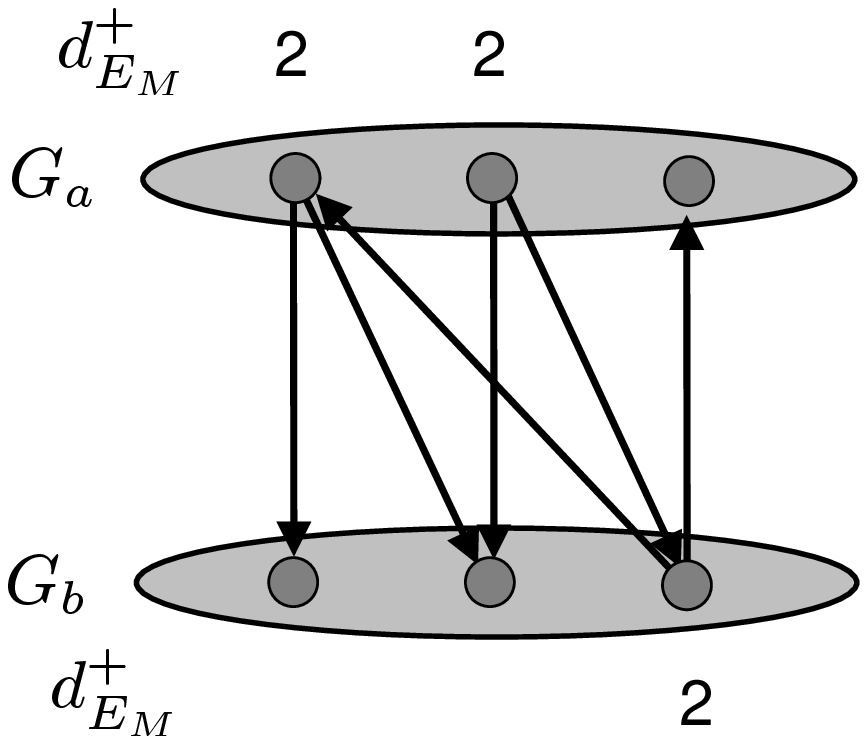}\\
\includegraphics[scale= 0.4]{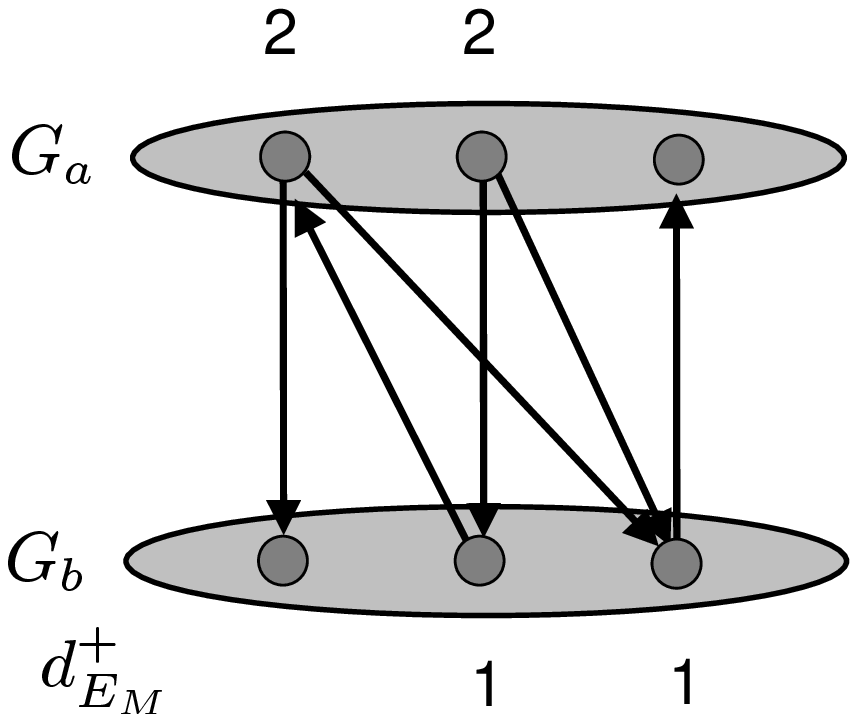}
\includegraphics[scale= 0.4]{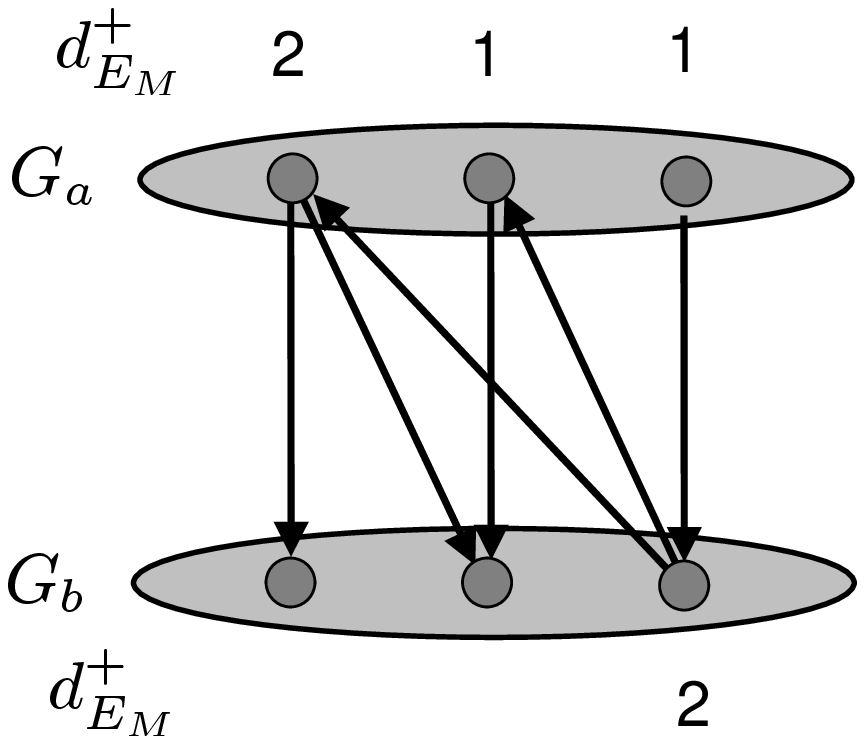}
\includegraphics[scale= 0.4]{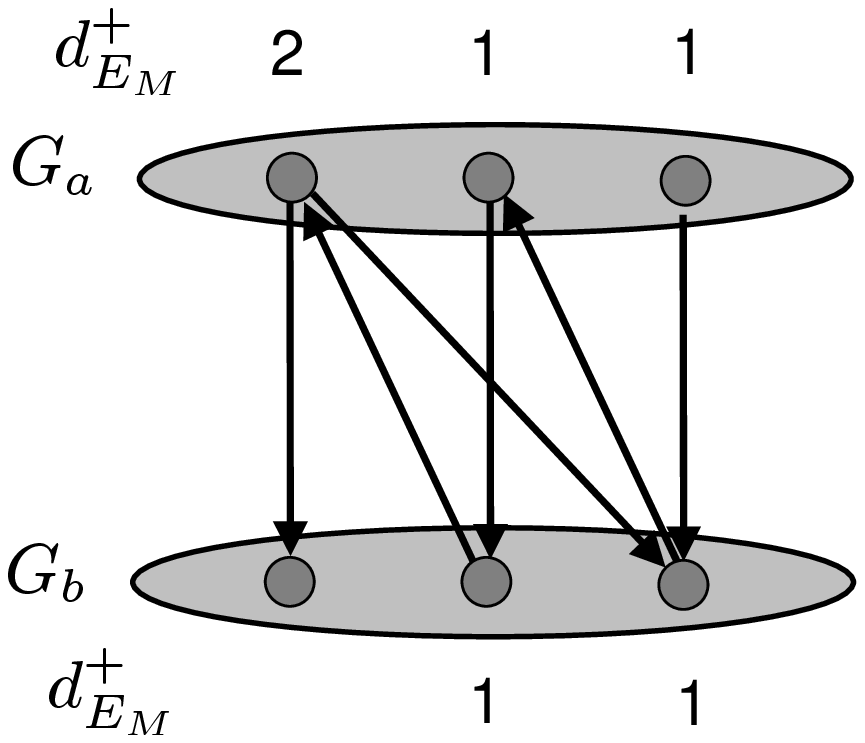}\\
\includegraphics[scale= 0.4]{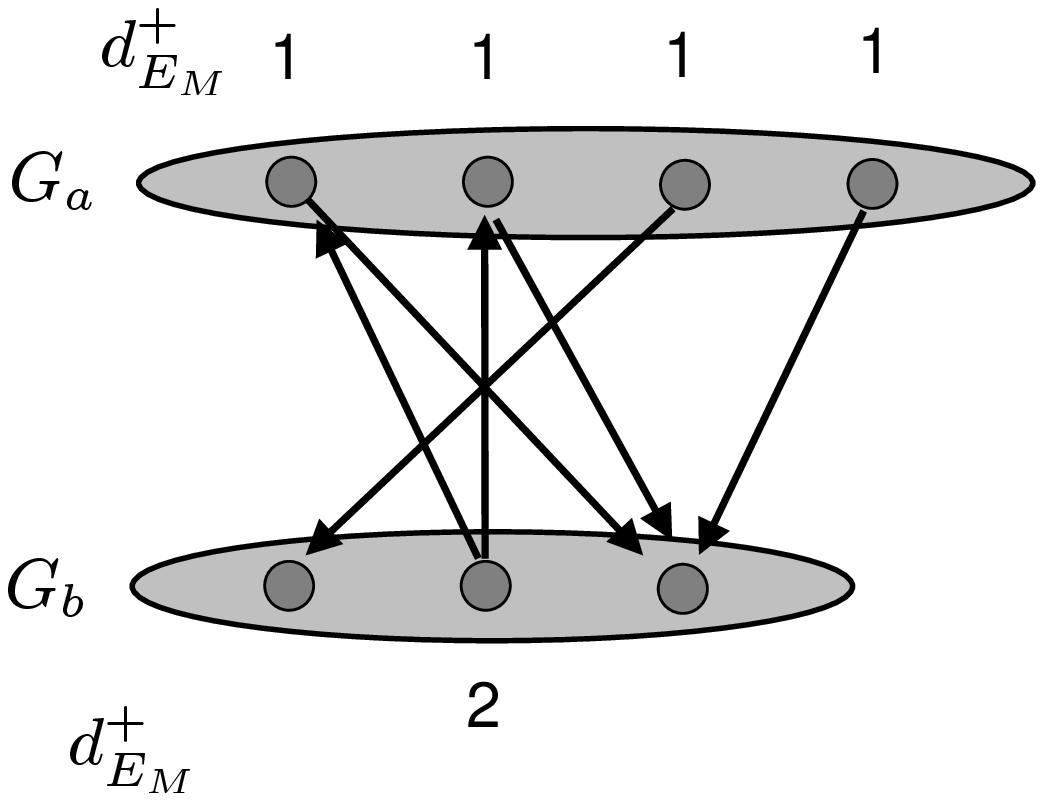}
\includegraphics[scale= 0.4]{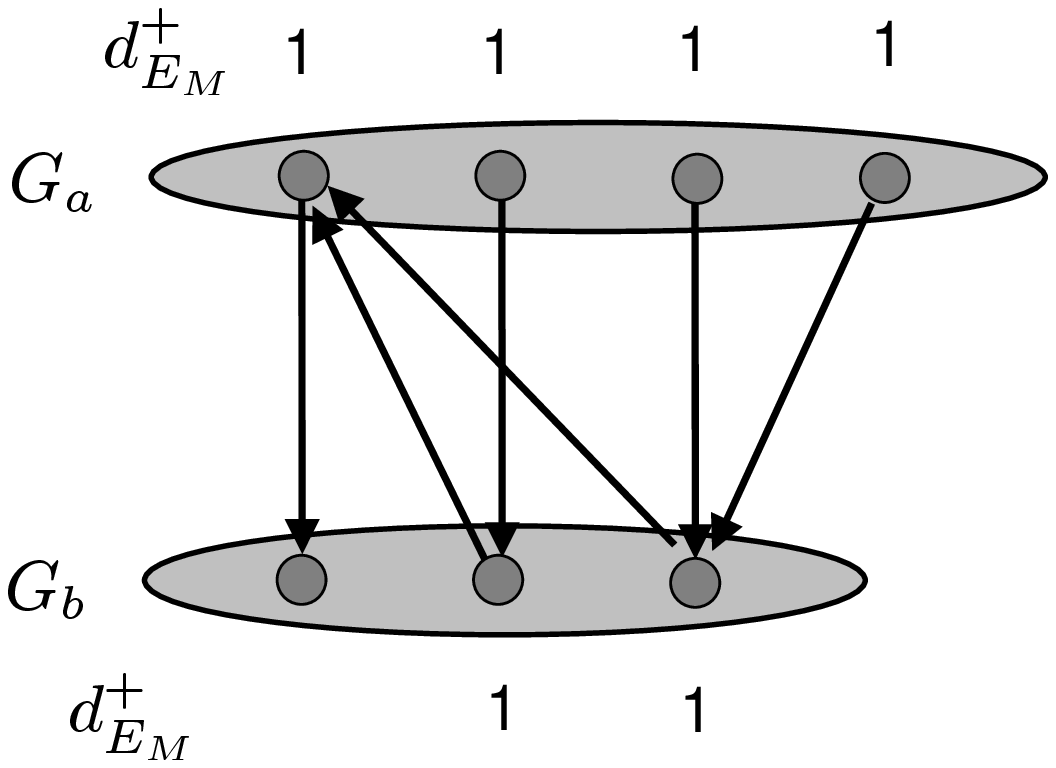}\\
\caption{Constructions for the eight possible DOF allocations when
$G_a$ has 4 DOFs and $G_b$ 2 DOFs. The graphs are all rigid are
they have the same undirected underlying graphs as construction in
Fig. \ref{fig:gagb-60-321}, \ref{fig:gagb-60-222} or
\ref{fig:gagb-60-others}.} \label{fig:gagb-42}
\end{figure}

Finally, if both graphs have 3 DOFs, the possible distribution for
each are $(3)$, $(2,1)$ and $(1,1,1)$. The case where they both
have a distribution $(3)$ does not satisfy the hypotheses of this
Proposition, and three other cases do not need to be treated for
symmetry reasons. The construction for the remaining 5 cases is
shown in Fig. \ref{fig:gagb-33}.\\

\begin{figure}
\centering
\includegraphics[scale= 0.4]{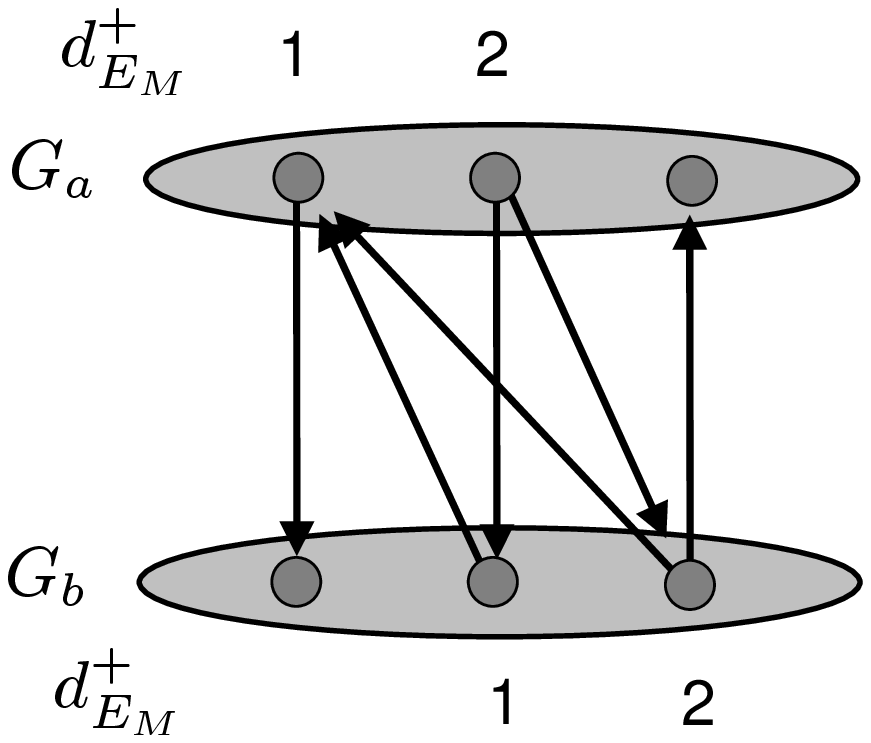}
\includegraphics[scale= 0.4]{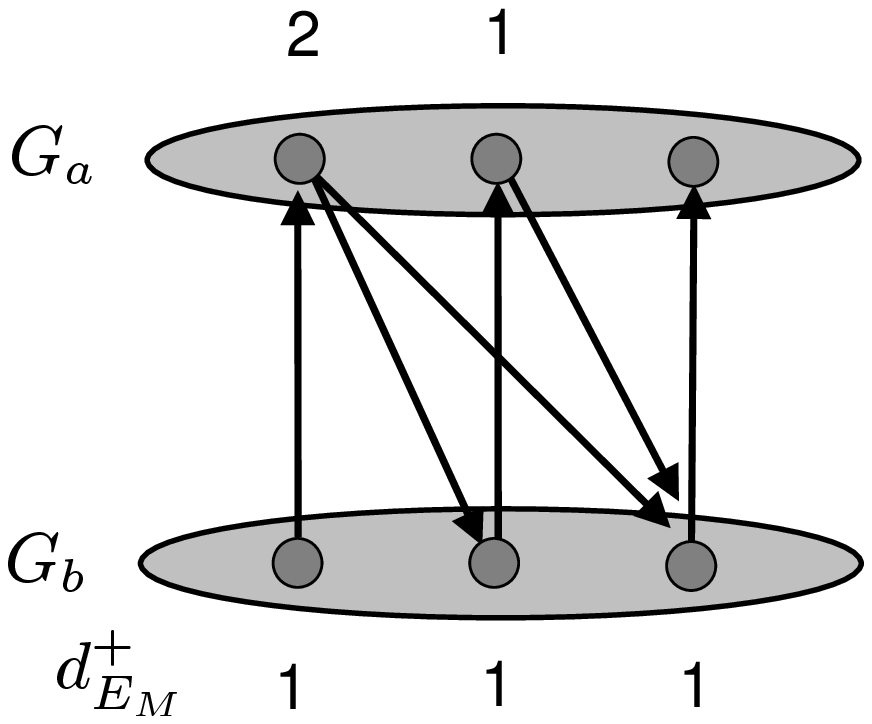}
\includegraphics[scale= 0.4]{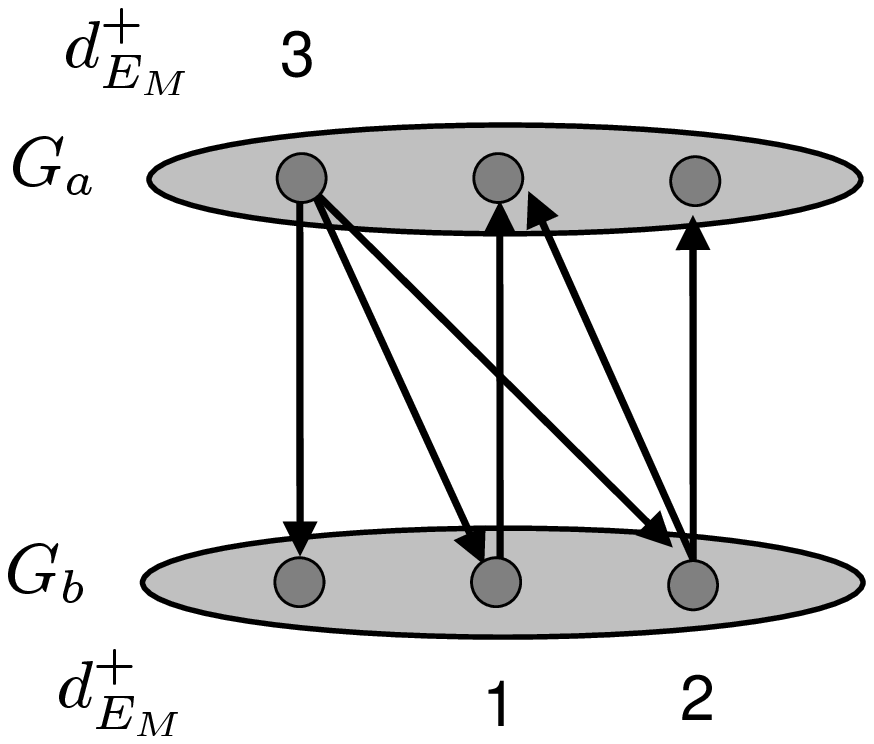}\\
\includegraphics[scale= 0.4]{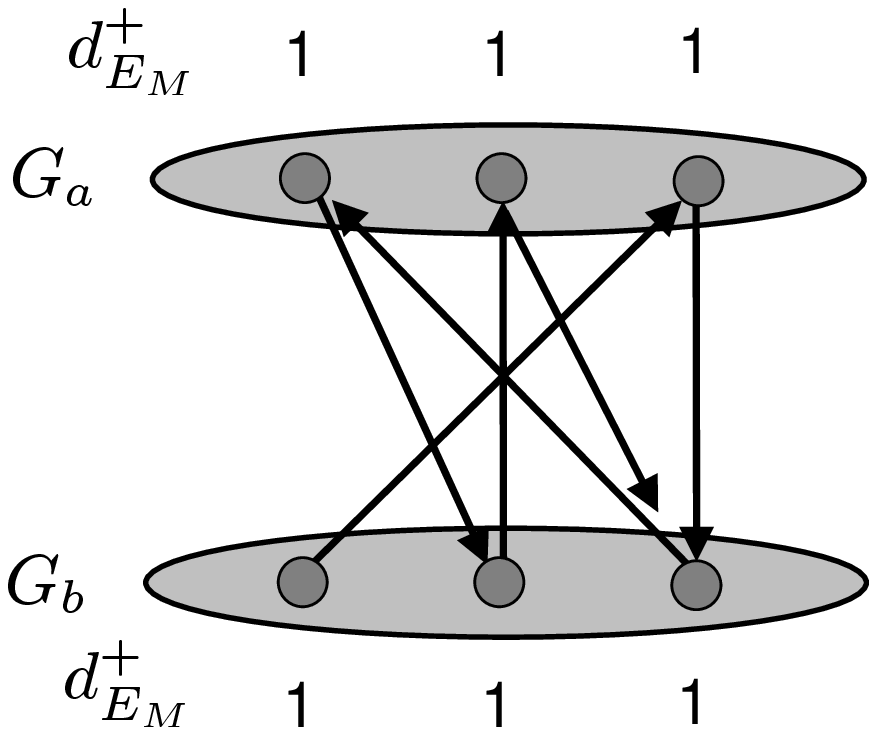}
\includegraphics[scale= 0.4]{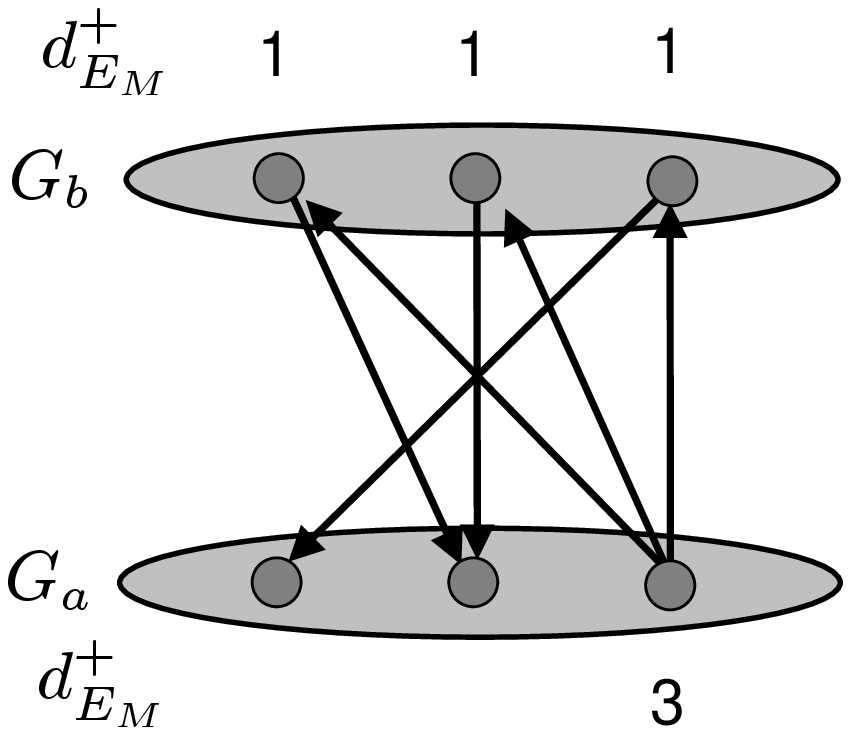}
 \caption{Constructions for the five different DOF allocations
satisfying the hypothesis of Proposition \ref{prop:mergetwo3D}
when each of $G_a$ and $G_b$ has 3 DOFs. The graphs are all rigid
are they have the same undirected underlying graphs as
construction in Fig. \ref{fig:gagb-60-321}, \ref{fig:gagb-60-222}
or \ref{fig:gagb-60-others} or rotated versions of them.}
 \label{fig:gagb-33}
\end{figure}

{We now suppose that at least one of the graphs has less than 3
vertices, and show that a rigid graph can be obtained by adding
directed edges leaving vertices with local DOFs, the number of
these edges being provided in Table \ref{tab:needed_edges}.
Observe that a graph consisting of one single vertex always has 3
DOFs, and thus that it is never needed to use any DOF of the other
graph. Similarly, each vertex of a graph containing two vertices
has at least 2 DOFs, so that at most one DOF of the other graph
needs to be used, and only when the other graph has three or more
vertices. Fig. \ref{fig:merge_3Dsmallgraphs} shows how these
mergings can be performed. Note that the rigidity of the three
first graphs is immediate as they are complete graphs. The
rigidity of the other two follows from the fact that they can be
obtained from $G_b$ by performing one of two operations (v), which
guarantees the rigidity of the graph obtained
\cite{TayWhiteley:85}.

\begin{figure}
\centering
\includegraphics[scale= 0.4]{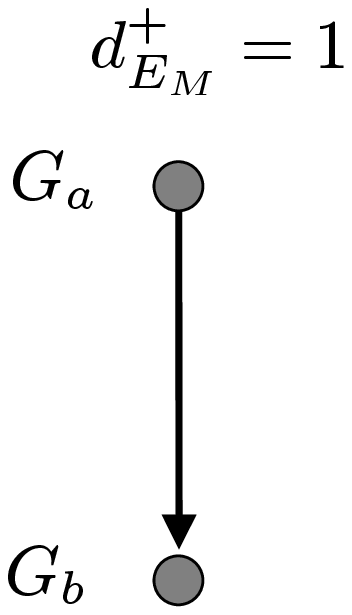}
\includegraphics[scale= 0.4]{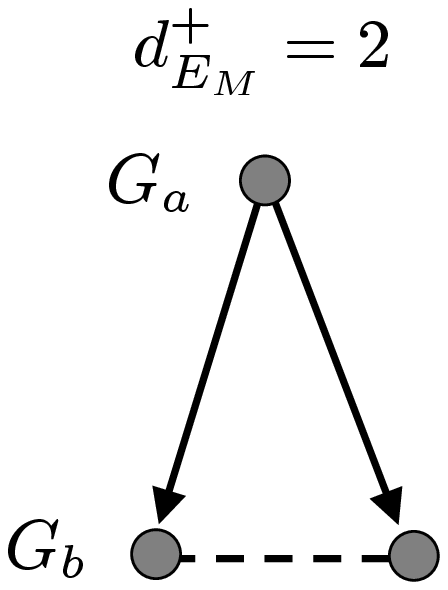}
\includegraphics[scale= 0.4]{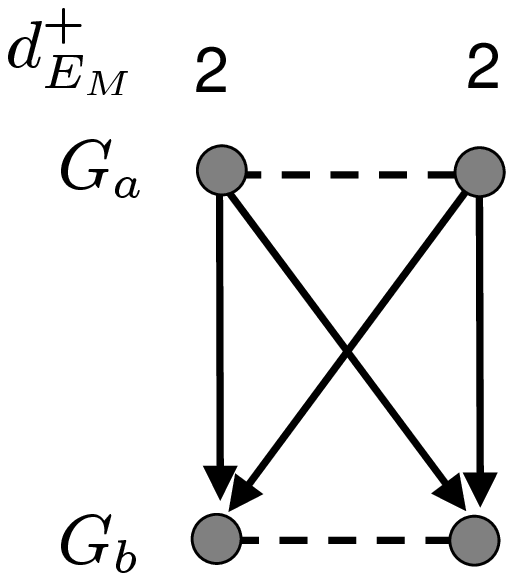}\\
\includegraphics[scale= 0.4]{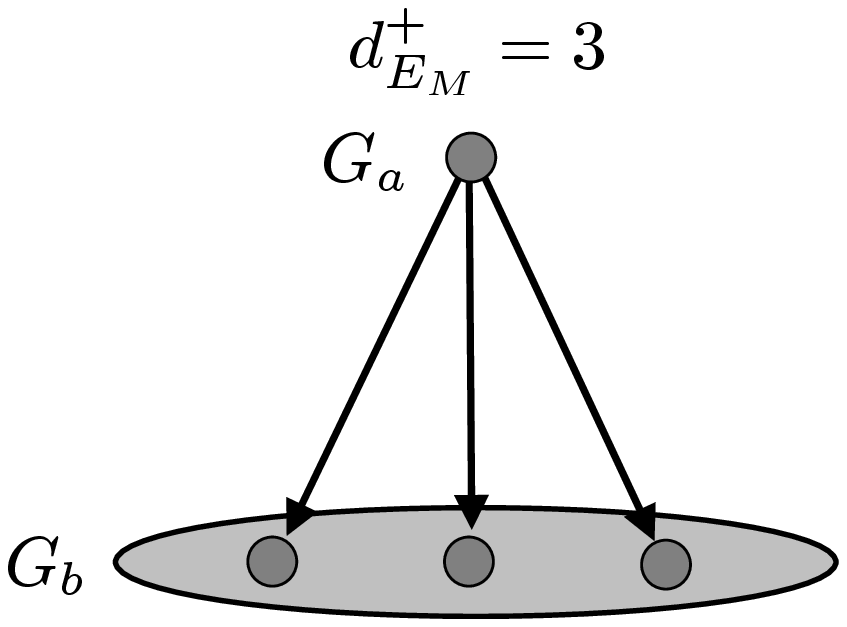}
\includegraphics[scale= 0.4]{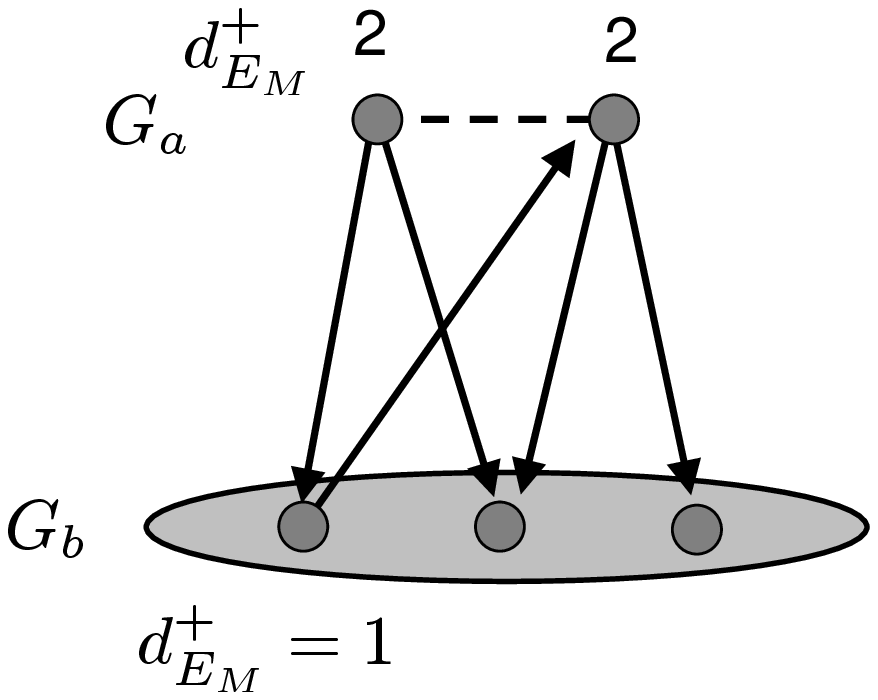}\\
 \caption{Illustration of the merging between
two graphs, one of which at least has less than 3 vertices. The
dashed line represent the internal edge(s) of graphs with two
vertices, the orientation of which is not relevant for our
purpose. The vertex count in $G_b$ is precisely 1,2 and 2 for the
first three and a minimum of 3 for the last
two.}\label{fig:merge_3Dsmallgraphs}
\end{figure}
}
\end{document}